\documentclass[prx,letterpaper,aps,superscriptaddress,floatfix,twocolumn,nofootinbib]{revtex4-2}
\pdfoutput=1

\usepackage{amsmath}
\usepackage{amssymb}
\usepackage{amsthm}
\usepackage{makecell}
\usepackage{amsfonts}
\usepackage{calc}
\usepackage[nomessages]{fp}
\usepackage{physics}
\usepackage{dsfont}
\usepackage{xcolor}
\usepackage{listings}
\usepackage{comment}
\usepackage{tabularx}
\usepackage{enumerate}
\usepackage{latexsym}
\usepackage{inputenc}
\usepackage{tikz}
\usetikzlibrary{patterns}
\usepackage{mathdots}
\usepackage{multirow}
\usepackage{thmtools}
\usepackage{thm-restate}
\usepackage{mathtools}

\usepackage{psfrag}

\usepackage{color}

\usepackage{bm}
\usepackage{graphicx}
\usepackage{subfigure}

\usepackage{graphicx}
\usepackage{amsmath}
\usepackage{amsfonts}
\usepackage{amssymb}
\usepackage{ulem}
\usepackage{xcolor}
\usepackage{slashed}
\usepackage{soul}
\usepackage{comment}
\usepackage{tikz-cd}
\usepackage[colorlinks=true, linkcolor=blue, citecolor=magenta]{hyperref}

\newcommand{\beq}{\begin{equation}}
\newcommand{\eeq}{\end{equation}}

\newcommand{\beqa}{\begin{eqnarray}}
\newcommand{\eeqa}{\end{eqnarray}}

\newcommand{\ea}{\end{array}}
 
\def\eea{\end{eqnarray}}

\def\<{\langle}
\def\>{\rangle}

\usepackage{amsmath}
\usepackage{comment}
\usepackage{amssymb}
\usepackage{amsthm}

\newtheorem{lemma}{Lemma}
\newtheorem{prop}{Proposition}

\newtheorem{cor}{Corollary}
\newtheorem{defin}{Definition}

\usepackage{color}
\usepackage{tikz-cd}
\usepackage{comment}

\usepackage{tikz}
\usepackage[export]{adjustbox}

\def\[#1\]{
  \begin{equation}\begin{gathered}#1\end{gathered}\end{equation}
}

\usepackage[safe]{tipa}

\newcommand{\eneq}{\end{equation}}

\def\ea{{\it et al.}}
\input{epsf}

\newcommand{\btp}{\begin{tikzpicture}}
\newcommand{\etp}{\end{tikzpicture}}

\newcolumntype{L}[1]{>{\raggedright\arraybackslash}p{#1}}
\newcolumntype{C}[1]{>{\centering\arraybackslash}p{#1}}
\newcolumntype{R}[1]{>{\raggedleft\arraybackslash}p{#1}}

\makeatletter
\newsavebox{\@brx}
\newcommand{\llangle}[1][]{\savebox{\@brx}{\(\m@th{#1\langle}\)}%
  \mathopen{\copy\@brx\kern-0.5\wd\@brx\usebox{\@brx}}}
\newcommand{\rrangle}[1][]{\savebox{\@brx}{\(\m@th{#1\rangle}\)}%
  \mathclose{\copy\@brx\kern-0.5\wd\@brx\usebox{\@brx}}}
\makeatother

\usepackage{wrapfig}

\usepackage{tcolorbox}
\tcbuselibrary{theorems}

\newtcbtheorem[number within=section]{mytheo}{Definition}%
{colback=green!5,colframe=green!35!black,fonttitle=\bfseries}{th}

\usepackage{float}
\usepackage{bbm}

\usepackage{booktabs}
\usepackage{makecell}

\makeatletter
\def\l@subsubsection#1#2{}
\makeatother

\begin{document}

\hfill MIT-CTP/6115

\title{
Disentangling the Toric Code
}

\author{Lei Gioia}
\affiliation{Department of Physics and Center for Theory of Quantum Matter, University of Colorado, Boulder, CO 80309, USA}
\affiliation{Walter Burke Institute for Theoretical Physics, Department of Physics, Caltech, Pasadena, CA, USA}
\author{Salvatore D. Pace}
\affiliation{School of Natural Sciences, Institute for Advanced Study, Princeton, NJ 08540, USA}
\affiliation{Department of Physics, Massachusetts Institute of Technology, Cambridge, MA 02139, USA}
\author{Ruben Verresen}
\affiliation{Pritzker School of Molecular Engineering, University of Chicago, Chicago, IL 60637, USA}
\author{Shu-Heng Shao}
\affiliation{Center for Theoretical Physics — a Leinweber Institute, Massachusetts Institute of Technology, Cambridge, MA 02139, USA}
\author{Ryan Thorngren}
\affiliation{Mani L. Bhaumik Institute for Theoretical Physics, Department of Physics and Astronomy,
University of California, Los Angeles, CA 90095, USA}

\begin{abstract}
Commuting-projector Hamiltonians with integer spectrum such as the 1+1d Ising ferromagnet and the 2+1d toric code may be regarded as the generators of non-on-site $U(1)$ symmetries. In this paper we study whether these symmetries can be made on-site in the presence of finite and infinite-dimensional ancillae. In the finite-dimensional case, we prove that the fractionalized excitations, such as domain walls and anyons, make it impossible to on-site these Hamiltonians. On the other hand, we show that when infinite dimensional ancillae (e.g., rotors) are allowed, both the 1+1d Ising ferromagnet and the 2+1d toric code can be disentangled, yielding on-site symmetries. This is not in contradiction with the non-trivial ground state order of these Hamiltonians, since the ancilla Hamiltonians are no longer bounded from below. Our results indicate agreement between lattice and quantum field theory anomalies in the presence of infinite-dimensional ancillae, while with finite-dimensional ancillae the obstruction to on-siteability may even be non-invertible.
\end{abstract}

\maketitle
\tableofcontents

\section{Introduction}
\label{sec:intro}

Whether a symmetry is anomalous or not has strong consequences for systems that host it, placing both constraints on their kinematic and dynamical behaviors. Although the theory of anomalous symmetries is well-understood in the context of quantum field theory, the concept of anomalous symmetries on the lattice is not yet understood at the same level. Developing the theory of anomalies on the lattice and its relationship to the continuum theory is thus a goal which promises to strengthen our understanding of the emergence of quantum field theory from the lattice~\cite{tHooft:1979rat,Harvey2005,McGreevy2023,Else:2014vma,Cheng:2022sgb}.

\begin{table*}[t]
    \centering
    \renewcommand{\arraystretch}{1.15}
    \setlength{\tabcolsep}{12pt}
    
    \begin{tabular*}{\textwidth}{
    @{\hspace{12pt}\extracolsep{\fill}}
    l c c
    @{\extracolsep{0pt}\hspace{12pt}}
}
    \toprule
    \textbf{Property}
        & $U(1)_{\mathrm{SSB}}$
        & $U(1)_{\mathrm{TC}}$ \\
    \midrule
    
    On-siteability with semi-bounded ancillae
        & \makecell[c]{
            \textbf{No} \\[-1pt]
            {\footnotesize Prop.~\ref{prop:ssbfiniteancilla}}
          }
        & \makecell[c]{
            \textbf{No} \\[-1pt]
            {\footnotesize Prop.~\ref{prop:tcrotors}}
          } \\
    
    \addlinespace[0.6em]
    
    On-siteability with unbounded ancillae
        & \makecell[c]{
            \textbf{Yes} \\[-1pt]
            {\footnotesize Prop.~\ref{prop:ssbonsiterotor}}
          }
        & \makecell[c]{
            \textbf{Yes} \\[-1pt]
            {\footnotesize
             Props.~\ref{prop:onsitingTC} and~\ref{prop:emonsite}}
          } \\
    
    \midrule
    
    SRE eigenstates
        & \makecell[c]{
            \textbf{Yes} \\[-1pt]
            {\footnotesize e.g., $\lvert 0\cdots 0\rangle$}
          }
        & \makecell[c]{
            \textbf{Yes} \\[-1pt]
            {\footnotesize
             Prop.~\ref{prop:TCSREfinitedim} and Cor.~\ref{cor:SRETC}}
          } \\
    
    \bottomrule
    \end{tabular*}    
    \caption{On-siteability (in finite depth) and short-range entangled (SRE) eigenstates for the $U(1)_{\rm SSB}$ symmetry, generated by the 
    1+1d Ising ferromagnetic Hamiltonian~\eqref{eq:HSSB}, and $U(1)_{\rm TC}$, generated by the 2+1d toric code Hamiltonian~\eqref{TCHam}. For both symmetries, on-siteability depends on the type of ancilla (semi-bounded versus unbounded, defined below in Def.~\ref{def:onsiteable}) allowed in the system. For $U(1)_{\rm SSB}$, the proven existence of SRE eigenstates is independent of the presence of ancillae. This differs for $U(1)_{\rm TC}$. In Proposition~\ref{prop:TCSREfinitedim} we construct a $U(1)_{\rm TC}$ symmetric SRE states in the presence of finite-dimensional fermionic ancillae. We also give a construction of an SRE eigenstate in Corollary~\ref{cor:SRETC} in the presence of rotors.}
    \label{tab:SSB}
\end{table*}

Anomalous symmetries are generally thought to obstruct the following three features: (1) there is a symmetric invertible state; (2) one can promote the global symmetry to a gauge symmetry with local commuting Gauss laws; and (3) there exists a finite-depth unitary transformation which makes the symmetry on-site. It is easy to show that ``on-siteability'' (3) implies both (2) and (1), and with usual methods of gauging (2) and (3) are essentially equivalent. In 1+1d, anomaly indices for finite groups have been defined \cite{Else:2014vma,ogata2021general,Kawagoe:2021gqi,Seifnashri:2023dpa,Kapustin:2024rrm}  and shown to vanish if and only if all three are satisfied, and thus all three sorts of lattice anomaly are equivalent \cite{seifnashri2025disentanglinganomalyfreesymmetriesquantum,bols2025classificationlocalitypreservingsymmetries}, representing a complete solution in this dimension which is in concert with quantum field theory.

On the other hand, recent papers \cite{Tu_2026,shirley2025anomalyfreesymmetriesobstructionsgauging,Kapustin:2025nju,Kawagoe:2025ldx,czajka2025anomalieslatticehomotopyquantum} have shown that in 2+1d this story is more complicated, and there exist symmetries that only satisfy a proper subset of the above qualities. For example, certain $\mathbb{Z}_2$ symmetries have been shown to possess symmetric invertible states (1) but cannot be made on-site using finite-dimensional ancillae (3)~\cite{shirley2025anomalyfreesymmetriesobstructionsgauging}. The obstruction to (3) in these examples is based on the fact that a lattice translation cannot be expressed as a finite-depth quantum circuit. However, in a continuum limit, this lattice translation disappears, which is why this lattice obstruction is invisible in quantum field theory.

In fact, there is also a way to trivialize these obstructions on the lattice, by using infinite-dimensional ancillae, for which lattice translations become finite-depth circuits, as shown in Ref.~\cite{Jones:2026gmd}. 
In the same work, they also show that with infinite-dimensional ancillae, one can derive direct correspondences between invertible states on the lattice and certain quantum field theory algebras. One may likewise expect that the theory of lattice anomalies may more closely reflect quantum field theory once we allow such ancillae.

In this paper, we will test this premise by studying properties (1), (2), and (3) for a class of $U(1)$ symmetries $e^{i\theta H}$ where $H$ is a commuting-projector Hamiltonian (these were called pivot Hamiltonians in \cite{SciPostPhys.14.2.012}). Our main examples are the $U(1)$ symmetries generated by the 1+1d Ising ferromagnet Hamiltonian $H_{\rm SSB}$ and by the 2+1d toric code Hamiltonian $H_{\rm TC}$. We show that since the ground state subspaces of these Hamiltonians are in a particular sense non-trivial (they host non-factorizable operators), these Hamiltonians are not on-siteable (3) with finite-dimensional ancillae. On the other hand, we will show that these Hamiltonians admit short-range entangled eigenstates, meaning that with respect to $e^{i \theta H}$ there exist symmetric trivial states (1). Unlike the 2+1d examples mentioned above, the obstruction to (3) is not related to lattice translations. Stranger still, with finite-dimensional ancillae there is no other $U(1)$ symmetry $e^{i \theta H'}$ such that $e^{i \theta H} \otimes e^{i \theta H'}$ is on-siteable---in this sense the anomaly is ``non-invertible''. Thus, there is no group theory based classification (e.g., cohomology theory) for these obstructions to on-siteability with only finite-dimensional ancillae.

We show that this strange behavior in the case of finite-dimensional ancillae is mollified in the presence of infinite-dimensional ancillae. In particular, both $H_{\rm SSB}$ and $H_{\rm TC}$ become on-siteable. This is only possible since the obstruction defined above is based on the ground state subspace, which is no longer well-defined if we tensor $H$ with an unbounded on-site symmetry generator like the rotation action on a rotor. On-siteability of the toric code is particularly non-trivial, and agrees with the quantum field theory prediction that there are no anomalous $U(1)$ symmetries in 2+1d (equivalently no non-trivial $U(1)$ SPT phases in 3+1d \cite{kapustin2014bosonictopologicalinsulatorsparamagnets}). The summary of results is given in Table \ref{tab:SSB}.

The paper is organized as follows: in Section~\ref{sec:onsitegenmet} we will define the relevant concepts, provide intuition into the problem, and prove general properties regarding the on-siteability of the $U(1)$ symmetries, generated by commuting-projector Hamiltonians, in the presence of both \textit{semi-bounded} (a generalization of finite-dimensional) or infinite-dimensional \textit{unbounded} ancillae. We will then apply our proofs to study both the obstruction to on-siteability for semi-bounded ancillae, as well as explicit construction of disentangling circuits with unbounded ancillae for both the SSB symmetry and a family of toric code symmetries in Sections~\ref{sec:SSB} and \ref{sec:toriccode}, respectively. Finally, in Section~\ref{sec:discussion}, we conclude with a discussion and open questions on this subject.

\section{On-siteability, general discussion}
\label{sec:onsitegenmet}

In this section we present some intuition and general results that underlie the physics of when a $U(1)$ symmetry generated by a commuting-projector Hamiltonian can or cannot be disentangled.

On-siteability may be defined for any symmetry group $G$ acting as unitaries on a tensor product Hilbert space:
\begin{defin} {\rm (On-siteable in depth $D$)}
\label{def:onsiteable}
    A $G$-symmetry which acts by unitary operators $\{U(g)\}_{g\in G}$ is `on-siteable in depth $D$'\footnote{For a finite size system, we require $D$ to be independent of the system size such that $D$ remains finite in the thermodynamic limit.} if there exists a quantum circuit $W=\prod_{j=1}^D W_j$ of depth $D$, with $W_j$ being a product of finite-range non-overlapping unitaries\footnote{Here the disentangling unitary is a quantum circuit, as opposed to the more general quantum cellular automaton (QCA)~\cite{Gross_2012}. However, these two objects are actually equivalent in the presence of ancillae (of the same on-site Hilbert space dimension as the original model), as the QCA $\alpha$ can be made into a quantum circuit $\alpha\otimes\alpha^{-1}$ with $\alpha^{-1}$ acting on ancillae.
    Then, if a nontrivial QCA $\alpha$ makes every $U(g)$ on-site, the circuit  $\alpha\otimes\alpha^{-1}$ also makes every $U(g)\otimes 1$ on-site.}, such that $\forall g\in G$,
    \begin{align}\label{eqnonsitable}
        W \left(U(g)\otimes V(g)\right) W^\dag=V(g)'\quad,
    \end{align}
where $V(g)$ and $V(g)'$ are on-site symmetries, acting as tensor products of unitary $G$ representations. The ancillae on which $V(g)$ acts may be constrained to be finite-dimensional or allowed to be infinite-dimensional, in which case we will specify the symmetry is on-siteable with (in)finite-dimensional ancillae. Below, for $G = U(1)$ we also discuss semi-bounded vs. unbounded ancillae.
\end{defin}
A symmetry that is on-siteable in finite depth is also gaugeable and has an SRE eigenstate\footnote{In this paper, a state $\ket{\psi}$ is an SRE state if there exists a FDQC $U$ and product state ${\bigotimes_j \ket{\phi_j}}$ such that ${\ket{\psi} = U \bigotimes_j \ket{\phi_j}}$. SRE states are a subset of invertible states.}, thereby implying the absence of any of the three types of lattice anomalies described in Section~\ref{sec:intro}. 
Indeed, the Gauss operators needed to gauge an on-site symmetry ${\{V(g)' = \prod_i V_i(g)'\}_{g\in G}}$ can be constructed directly from its local operators $V_i(g)'$~\cite{shirley2025anomalyfreesymmetriesobstructionsgauging}. Furthermore, starting from a symmetric product state ${\bigotimes_i \ket{\mathbf{1}}_i}$ of the on-site symmetry ${\{V(g)' = \prod_i V_i(g)'\}_{g\in G}}$, there always exists the SRE state ${W^\dag\bigotimes_i \ket{\mathbf{1}}_i}$ which is a symmetric SRE state of ${\{U(g)\otimes V(g)\}_{g\in G}}$.

In this paper, we will consider on-siteability for $U(1)$ symmetries, in the presence of two types of ancilla (1) \textit{semi-bounded ancillae}: ancillae with symmetry generators (or charges) that are all bounded from below and/or above, or (2) \textit{unbounded ancillae}: ancillae with symmetry generators that are unbounded from \textit{both} above and below. Semi-bounded ancilla automatically includes finite-dimensional ancilla, and unbounded ancilla are necessarily infinite-dimensional. Finally, in the case that a symmetry is not on-siteable, let us define this obstruction to be \textit{invertible} if a weaker notion of Definition~\ref{def:onsiteable} holds, where $V(g)$ is allowed to be a finite-depth quantum circuit but still a unitary $G$ representation.\footnote{Note that this notion is also ancillae type dependent. The definition is motivated by the idea that the invertibility of non-on-siteability is expected to be related to the invertibility of anomalies.}

We explore the various notions of lattice anomalies described above for $U(1)$ symmetries. Such symmetries have a Hermitian generator and can be expressed as $U(\theta)= e^{i\theta H}$. $\{U(\theta)\}_{\theta \in [0,2\pi)}$ is on-siteable in depth $D$ if and only if this generator (or charge) $H$ can be disentangled to an on-site generator in depth $D$. In particular, this means that
$$W ( H \otimes 1 + 1 \otimes H_0 ) W^\dagger = H'_0\,,$$
where $W$ is the same as in Eq.~\ref{eqnonsitable}, and $H_0$ and $H_0'$ are on-site generators (sum of support-one Hermitian operators with integer-quantized spectrum), acting on the ancillae and the entire system, respectively.
This connection between the symmetry and generator is formally proven in Appendix~\ref{app:cononsite}.

This allows us to study the disentanglability of a special class of $U(1)$ generators which correspond to Hamiltonians with integer-quantized spectra. We will specifically be interested in commuting-projector Hamiltonians, for which it is easy to compute the spectrum.

Some of these Hamiltonians can be straightforwardly disentangled in finite-depth, such as the cluster state parent Hamiltonian $$H=-\sum_j Z_{j-1} X_j Z_{j+1},$$ which has the disentangling finite-depth quantum circuit (FDQC) $\prod_j CZ_{j,j+1}:Z_j\mapsto Z_j,X_j\mapsto Z_{j-1} X_j Z_{j+1}$ \cite{Briegel01}. Another example is the Kitaev chain Hamiltonian \cite{Kitaev_2001} $$H=i\sum_j \gamma_{jB}\gamma_{j+1A},$$ which can be disentangled via a QCA Majorana-translation $T_{\frac{1}{2}}\colon \gamma_{jB}\mapsto \gamma_{j+1 B}, \gamma_{jA}\mapsto\gamma_{jA}$. It can also be disentangled by a circuit if we add a layer of ancillae on which $H$ acts by 0 and apply an inverse Majorana translation to the new layer. The combined counter translations are a finite depth circuit \cite{Gross_2012}.

The above examples are Hamiltonians with unique, invertible ground states. We will introduce new obstructions to on-siteability for commuting-projector Hamiltonians with degenerate ground state spaces and fractionalized excitations, which apply for semi-bounded ancillae, for which ground states\footnote{Or one can consider ``sky states'' of maximum energy.}  are well-defined.

The basic idea can be understood by examining the example of the spontaneous symmetry-broken (SSB) 1+1d quantum Ising Hamiltonian on a lattice $\Lambda = \{0,\ldots,N-1\}$ with periodic boundary conditions $N \sim 0$:
\begin{align}
    H_{\rm SSB}=\sum_{j\in\Lambda} \frac{1}{4}\left(1-Z_j Z_{j+1}\right)\quad.
    \label{eq:HSSB}
\end{align}
This Hamiltonian has been normalized so that in the $Z$ basis, its value is the number of domain walls divided by two. Note that there is always an even number of domain walls on a periodic chain, so $H_{\rm SSB}$ has integer spectrum. We choose this normalization to reflect that a single domain wall is a fractional excitation, which can only be created in pairs by local operators.

Related to this, we may consider the ground state subspace of $H_{\rm SSB}$: $\mathcal{S}_{\rm GS}={\rm Span}\{\ket{0\cdots 0},\ket{1 \cdots 1}\}$ with long-range entangled (LRE) ground states $\ket{\rm GHZ_\pm}=\frac{1}{\sqrt{2}}(\ket{0 \cdots 0}\pm\ket{1 \cdots 1})$\footnote{The LRE nature follows since one may transform this state via an FDQC to a state with a non-zero momentum, which are all LRE~\cite{PhysRevX.12.031007}.}. One may suspect that such a Hamiltonian cannot be on-sited due to the existence of LRE ground states. However, this logic is not straightforward since even trivial on-site Hamiltonians, such as $H=0$, have LRE ground states. One could hypothetically argue that the $H_{\rm SSB}$ LRE ground states could be disentangled via FDQC to LRE ground states that arise from on-site $h=0$ terms of the on-sited Hamiltonian. Of course, one may then ask, what about the excited states? This is precisely why we need to consider the fractionalized excitations of such models.

To consider these excitations, observe that the operator $\prod_{j\in\Lambda} X_j$ preserves $\mathcal{S}_{\rm GS}$, but cannot be factorized into $\mathcal{S}_{\rm GS}$-preserving operators along any bipartition of $\Lambda$ into contiguous regions $A$ and $B$, i.e., cannot be written as $\prod_{j\in\Lambda} X_j=\sum O_A O_B$ such that $O_A$ and $O_B$ are not proportional to the identity and each preserve $\mathcal{S}_{\rm GS}$. For example, the naive factorization $O_{A/B}=\prod_{j\in A/B} X_j$ creates domain-wall excitations at the ends of $A/B$. Under a disentangling FDQC of $H_\mathrm{SSB}$, $\prod_{j\in\Lambda} X_j$ would be related via the FDQC to another ground state subspace preserving operator of the on-site Hamiltonian. However, all such operators in an on-site Hamiltonian are factorizable, resulting in a direct contradiction if regions $|A|,|B|>\alpha D$ for some constant $\alpha$ and circuit depth $D$. The same logic applies to other systems with fractionalized excitations, and is formalized in Proposition \ref{propfactorization} below.

Crucially, this intuition relies on the presence of a well-defined ground state subspace (or, alternatively, highest-excited sky state subspace), and its associated algebra of ground state operators (defined precisely below).
Therefore, infinite-dimensional unbounded ancilla (such as rotors) can circumvent this obstruction, as we will show in Prop.~\ref{prop:ssbonsiterotor}. The main results for the $U(1)_{\rm SSB}$ symmetry implemented by $U_{\rm SSB}(\theta)=e^{i \theta H_{\rm SSB}}$ are shown in Table~\ref{tab:SSB}. Our results are consistent with what might be expected for continuous symmetries in 1+1d, as they indicate a generalization of the formulation for finite symmetry $G$ which required ancillae living in $L^2(G)$ to be on-siteable~\cite{seifnashri2025disentanglinganomalyfreesymmetriesquantum}. Thus, we expect that when $G$ is continuous, on-siteability generally requires infinite-dimensional ancillae containing a copy of each irrep of $G$, which, in the examples studied here, we show must also be unbounded.

Can we similarly disentangle Hamiltonians in topological phases in higher dimensions, such as toric code in 2+1d~\cite{KITAEV20032}? Due to the presence of non-trivial braiding statistics, the case for topological orders is more complicated compared to the SSB case even when unbounded ancillae are taken into account. Perhaps surprisingly, despite their  differences, we will show that the toric code Hamiltonian and its extended family are on-siteable in the presence of unbounded ancilla. That is, the $U(1)$ global symmetry $e^{i \theta H_{\rm TC}}$ in 2+1d generated by the toric code Hamiltonian $H_{\rm TC}$ can be made on-site with these ancillae. This is consistent with the expectation from continuum quantum field theory, where there is no anomaly for a $U(1)$ global symmetry in 2+1d. We stress that these disentangled Hamiltonians cannot be regarded as the Hamiltonian of a physical system, since their energy spectra are unbounded; nonetheless, they serve as generators of compact, locality-preserving $U(1)$ global symmetries.

\subsection{Semi-bounded ancilla}
\label{sec:semibounded}

To formalize the obstruction to on-siteability intuition presented in Sec.~\ref{sec:intro} in the presence of semi-bounded ancillae\footnote{In the following, we may assume that the ancilla generators are bounded from below. If the generator $H$ was instead bounded from above, we can replace it with $-H$ which is bounded from below.}, we will first introduce the notion of a \textit{ground state algebra} (adapted from Def. 2.9 of ~\cite{Jones_Naaijkens_Penneys_Wallick_Izumi_2025}).

\begin{defin}{\rm(Ground state algebras)}
    Let $H$ be a gapped\footnote{We allow degeneracy. ``Gapped'' means that the lowest eigenvalue is isolated in the spectrum of $H$, which ensures the existence of the ground state projector $P$.} Hamiltonian and $P$ its ground state projector. We define the ``ground state algebra'' $\mathcal{A}_{\rm gs}$ of $H$ as the algebra of operators $xP$ where $x$ is an operator satisfying $xP = Px$. We can think of $x$ as an operator preserving the ground state space, since $xP = Px$, while $xP = yP$ if $x$ and $y$ have the same action on the ground states. Note that the identity of $\mathcal{A}_{\rm gs}$ is $P$. We also define, for each $X$ the subalgebra of $\mathcal{A}_{\rm gs}$ of ground state operators localizable to $X$
    \[\mathcal{A}_{\rm gs}(X) = \{xP \ |\ x\text{ supported in }X,xP=Px\}.\]
    These satisfy
    \begin{enumerate}
    \item if $X \subset Y$, $\mathcal{A}_{\rm gs}(X) \subset \mathcal{A}_{\rm gs}(Y)$
    \item if $X \cap Y = \varnothing$, $a \in \mathcal{A}_{\rm gs}(X)$, $b \in \mathcal{A}_{\rm gs}(Y)$ commute:
    \[ab=ba.\]
\end{enumerate}
Thus we may regard $X \mapsto \mathcal{A}_{\rm gs}(X)$ as a local net of algebras~\cite{Jones_Naaijkens_Penneys_Wallick_Izumi_2025,Harlow:2025cqc,Holfester:2026peg}.
\end{defin}
For example, consider the ferromagnetic Hamiltonian~\eqref{eq:HSSB}. The projector
\[P = \prod_{j \in \Lambda} \frac12 (1+Z_j Z_{j+1})\]
yields a ground state algebra generated by
\[ \bigg( \prod_{j \in \Lambda} X_j \bigg) P \qquad Z_1 P\]
which is globally isomorphic to a $2\times 2$ matrix algebra but has an interesting locality structure. For example, $Z_1 P$ is localizable to any spin because $Z_1 P = Z_j P$ for all $j \in \Lambda$. Meanwhile, $\prod_{j \in \Lambda} X_j$ is only localizable to the whole chain $X=\Lambda$.

A bounded-spread isomorphism $\varphi$ between two local nets $\varphi:\mathcal{A} \to \mathcal{B}$ is an isomorphism of algebras $\mathcal{A}(\Lambda) \to \mathcal{B}(\Lambda)$ with a spread $s$ such that
\[\varphi(\mathcal{A}(X)) \subset \mathcal{B}(X^{+s}) \\
\varphi^{-1}(\mathcal{B}(X)) \subset \mathcal{A}(X^{+s})\]
where $X^{+s}$ denotes all the points within a distance $s$ of $X$.

\begin{lemma}\label{propbspreadiso}
    Suppose $H$ is a gapped Hamiltonian and $\alpha$ is a spread-$s$ QCA. Then $\alpha$ induces a spread-$s$ isomorphism between the ground state algebra $\mathcal{A}_1$ of $H$ and the corresponding $\mathcal{A}_2$ of $\alpha(H)$.
\end{lemma}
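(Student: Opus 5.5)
The plan is to take the candidate isomorphism to be $\varphi(xP_1) = \alpha(x)P_2$, where $P_1$ is the ground state projector of $H$ and $P_2$ that of $\alpha(H)$. Everything reduces to two facts about $\alpha$: it is a $*$-automorphism of the full operator algebra, and it maps operators supported in $X$ to operators supported in $X^{+s}$, with $\alpha^{-1}$ doing the same.

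The first step identifies $P_2$. Because $\alpha$ is an automorphism, it preserves the spectrum and functional calculus, so $\alpha(H)$ has the same spectrum as $H$. In particular $\alpha(H)$ is again gapped, and its ground state projector is $P_2=\alpha(P_1)$, since $P_1=f(H)$ for an indicator function $f$ of the lowest eigenvalue and $\alpha(f(H))=f(\alpha(H))$.

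The second step checks that $\varphi$ is well defined and is an algebra isomorphism. If $xP_1=P_1x$, applying $\alpha$ gives $\alpha(x)P_2=P_2\alpha(x)$, so $\alpha(x)P_2\in\mathcal{A}_2$. If $xP_1=yP_1$, applying $\alpha$ gives $\alpha(x)P_2=\alpha(y)P_2$, so $\varphi$ does not depend on the choice of representative. Multiplicativity follows from
\[
\varphi(xP_1\,yP_1)=\varphi(xyP_1)=\alpha(x)\alpha(y)P_2=\alpha(x)P_2\,\alpha(y)P_2 ,
\]
using $P_1y=yP_1$ and $P_2\alpha(y)=\alpha(y)P_2$. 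Linearity, $*$-preservation and $\varphi(P_1)=P_2$ are immediate. The same construction with $\alpha^{-1}$ yields a map $\mathcal{A}_2\to\mathcal{A}_1$ that inverts $\varphi$, so $\varphi$ is an isomorphism of $\mathcal{A}_1(\Lambda)$ onto $\mathcal{A}_2(\Lambda)$.

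The third step is the spread bound. Take $a\in\mathcal{A}_1(X)$, so $a=xP_1$ with $x$ supported in $X$ and commuting with $P_1$. Then $\alpha(x)$ is supported in $X^{+s}$ and commutes with $P_2$, hence $\varphi(a)=\alpha(x)P_2\in\mathcal{A}_2(X^{+s})$. The argument for $\varphi^{-1}$ is the same with $\alpha^{-1}$, which has spread $s$ by the definition of a spread-$s$ QCA; if one only knows a bound for $\alpha$ itself, the standard fact that the inverse of a spread-$s$ QCA also has spread $s$ supplies it.

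There is no serious obstacle; the only step needing care is the identification $P_2=\alpha(P_1)$ via functional calculus. In infinite or thermodynamic settings one would state this at finite size $N$, where $\alpha$ is an honest automorphism of a finite matrix algebra and the identity is immediate.
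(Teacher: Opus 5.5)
Your proposal is correct and follows the paper's argument. The isomorphism is the restriction of $\alpha$ to $\mathcal{A}_1$: $\alpha(xP_1)=\alpha(x)\alpha(P_1)$ lies in $\mathcal{A}_2(X^{+s})$, and the inverse comes from $\alpha^{-1}$, which also has spread $s$. You also spell out two things the paper leaves implicit: the identification $\alpha(P_1)=P_2$ via functional calculus, and the well-definedness and multiplicativity checks.
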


\begin{proof}
Since $\alpha$ is a map of algebras, if
\[xP \in \mathcal{A}_1(X)\]
then
\[\alpha(xP) = \alpha(x) \alpha(P) \in \mathcal{A}_2(X^{+s}).\]
The inverse map is given by $\alpha^{-1}$, which also has spread $s$ \cite{Gross_2012,Haah_2023}.
\end{proof}

\begin{lemma}
    Suppose $H_1$ and $H_2$ are gapped Hamiltonians with ground state algebras $\mathcal{A}_{1}$ and $\mathcal{A}_{2}$. The ground state algebra of $H_{12}=H_1 \otimes 1 + 1 \otimes H_2$ satisfies
    \[\mathcal{A}_{12}(X)=\mathcal{A}_{1}(X) \otimes \mathcal{A}_{2}(X)\]
\end{lemma}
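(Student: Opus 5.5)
The plan is to first identify the ground state projector of $H_{12}$, and then reduce the localization statement to a partial-trace argument on each tensor factor. Since $H_1$ and $H_2$ act on different tensor factors, they commute. Their spectral projections therefore multiply, and the lowest eigenvalue of $H_{12}$ is $E_1+E_2$. This value is isolated in the spectrum because each $E_i$ is isolated and the spectrum of $H_i-E_i$ lies in $\{0\}\cup[\Delta_i,\infty)$. The ground state projector is then $P_{12}=P_1\otimes P_2$. I will also assume, as is implicit in the notation, that the two systems share the same lattice $\Lambda$: at each site $j$ the Hilbert space is $\mathcal{H}^{(1)}_j\otimes\mathcal{H}^{(2)}_j$, and ``supported in $X$'' means acting on $\bigotimes_{j\in X}\mathcal{H}^{(1)}_j\otimes\mathcal{H}^{(2)}_j$. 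The tensor product $\mathcal{A}_1(X)\otimes\mathcal{A}_2(X)$ is read as the algebra spanned by products $(x_1P_1)\otimes(x_2P_2)$.

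The inclusion $\supseteq$ is immediate. If $x_1P_1=P_1x_1$ with $x_1$ supported in $X$, and likewise for $x_2$, then $x=x_1\otimes x_2$ is supported in $X$ and commutes with $P_1\otimes P_2$. Moreover $xP_{12}=(x_1P_1)\otimes(x_2P_2)$. Sums of such elements stay in $\mathcal{A}_{12}(X)$ because $\mathcal{A}_{12}(X)$ is a linear space.

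For $\subseteq$, take $x$ supported in $X$ with $[x,P_1\otimes P_2]=0$. Expand $x=\sum_k a_k\otimes b_k$, with the $a_k$ supported in $X$ on system 1 and the $b_k$ linearly independent and supported in $X$ on system 2. The key step is to show that $xP_{12}$ lies in the span of products of genuine ground-state operators. For this I will use the compression identity
\[xP_{12}=P_{12}xP_{12}=\sum_k (P_1a_kP_1)\otimes(P_2b_kP_2).\]
The catch is that $P_1a_kP_1$ is not of the form $yP_1$ with $y$ localized in $X$, because $P_1$ is nonlocal. I will fix this by slicing with linear functionals. Choose a basis $\{e_\mu\}$ of operators on the system-2 sites in $X$, so that $x=\sum_\mu a_\mu\otimes e_\mu$. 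Applying to $[x,P_1\otimes P_2]=0$ the functional dual to $e_\mu$ on system 2 is not legitimate, since $P_2$ is not supported in $X$. Instead I will work with the commutant structure on the factor level. From $[x,P_1\otimes P_2]=0$ and $(1-P_1)\otimes P_2+P_1\otimes(1-P_2)\perp P_1\otimes P_2$, one gets $(P_1\otimes P_2)x(Q)=0$ for the complement $Q$. Take matrix elements $\langle\psi_2|\cdot|\phi_2\rangle$ with $\psi_2,\phi_2$ ground states of $H_2$, which are states of the whole system 2 rather than local ones. This yields system-1 operators $x_{\psi\phi}=\sum_\mu\langle\psi_2|e_\mu|\phi_2\rangle a_\mu$. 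Each $x_{\psi\phi}$ is supported in $X$ (a linear combination of the $a_\mu$) and commutes with $P_1$. Symmetrically, the slices in system 1 give system-2 operators supported in $X$ that commute with $P_2$. Running this over bases of the two ground spaces and recombining expresses $xP_{12}$ as $\sum (y_iP_1)\otimes(z_iP_2)$ with $y_i$ and $z_i$ localized in $X$ and commuting with the respective projectors. The recombination works because the spans $\{\sum_\mu c_\mu a_\mu\}$ and $\{\sum_\mu d_\mu e_\mu\}$ can be chosen by a rank/Schmidt decomposition of $x$ adapted to the ground-state-restricted bilinear form.

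I expect this last recombination to be the main obstacle: ensuring the Schmidt-type decomposition can be taken with each factor simultaneously localized in $X$ and commuting with its projector. My first attempt is to take the operator Schmidt decomposition of the restriction $P_{12}xP_{12}$, viewed as an element of $\mathrm{span}\{a_\mu\}P_1\otimes\mathrm{span}\{e_\mu\}P_2$. I will then argue by finite-dimensional linear algebra that the minimal subspaces $\mathcal{V}_1\subset\mathrm{span}\{a_\mu\}$ and $\mathcal{V}_2$ realizing it are spanned by slices of the form above, each of which commutes with its projector.
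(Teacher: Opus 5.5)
Your proposal is correct, and it is more complete than the paper's proof. The paper's entire proof is your first step: the ground state projector of $H_{12}$ is $P_1\otimes P_2$, and the equality of local algebras is treated as evident from that. You additionally supply an argument for $\mathcal{A}_{12}(X)\subseteq\mathcal{A}_1(X)\otimes\mathcal{A}_2(X)$. That inclusion is not immediate, because an $x$ supported in $X$ with $[x,P_1\otimes P_2]=0$ need not be a sum of products of local operators commuting with $P_1$ and $P_2$ separately; take, e.g., $x=a\otimes k$ with $kP_2=P_2k=0$ and $a$ arbitrary. So only the image $xP_{12}$ can be expected to factorize.

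Your slicing by ground-state matrix elements is the right tool. The recombination you flag as the main obstacle closes by standard linear algebra:
\begin{itemize}
\item View $T=xP_{12}$ as an element of $\mathrm{End}(P_1\mathcal{H}^{(1)})\otimes\mathrm{End}(P_2\mathcal{H}^{(2)})$ and write $T=\sum_{k=1}^r u_k\otimes v_k$ with $r$ minimal.
\item By minimality the $v_k$ are linearly independent, so $u_j=(\mathrm{id}\otimes\omega_j)(T)$ for dual functionals $\omega_j$ with $\omega_j(v_k)=\delta_{jk}$.
\item Every functional on $\mathrm{End}(P_2\mathcal{H}^{(2)})$ is a combination of $\langle\psi_\alpha|\cdot|\psi_\beta\rangle$. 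The corresponding slices of $T$ are $P_1x_{\alpha\beta}P_1=x_{\alpha\beta}P_1\in\mathcal{A}_1(X)$, so each $u_j\in\mathcal{A}_1(X)$.
\item Symmetrically, each $v_j\in\mathcal{A}_2(X)$.
\end{itemize}
No simultaneous choice of factors is needed. Equivalently, the slices show $T\in\bigl(\mathcal{A}_1(X)\otimes\mathrm{End}(P_2\mathcal{H}^{(2)})\bigr)\cap\bigl(\mathrm{End}(P_1\mathcal{H}^{(1)})\otimes\mathcal{A}_2(X)\bigr)$. This intersection equals $\mathcal{A}_1(X)\otimes\mathcal{A}_2(X)$, as you can see by choosing bases adapted to the subspaces.

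There are two harmless slips. First, the complement of $P_1\otimes P_2$ also contains $(1-P_1)\otimes(1-P_2)$. Second, $\mathrm{span}\{a_\mu\}P_1$ is not a space of operators on the ground space; the relevant space is the span of the slices $x_{\psi\phi}P_1$.

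The paper's version buys brevity. Yours buys an actual justification of the inclusion that carries the content, valid as written when the ground spaces are finite-dimensional.
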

\begin{proof}
    If $P_1, P_2$ are the ground state projectors of $H_1,H_2$, then that of $H_{12}$ is $P_1 \otimes P_2$.
\end{proof}

With these two lemmas, we will find that the ground state algebra of an on-siteable Hamiltonian thus behaves quite like a tensor product of local algebras. To exploit this, we define a factorization of an operator.
\begin{defin}{\rm(Factorizable operator)}
    We say that an operator $O \in \mathcal{A}_{\rm gs}(X)$ factorizes over a bipartition $X = Y \cup Z$ in spread-$r$ if there are operators $a_i \in \mathcal{A}_{\rm gs}(Y^{+r})$ and $b_i \in \mathcal{A}_{\rm gs}(Z^{+r})$ such that
\[O = \sum_i a_i b_i\,,\]
(recall here equality means equivalent action on the ground state space).
\end{defin}
Note that in this definition, the support of $a_i$ and $b_i$ are allowed to overlap, but only in a region $Y^{+r} \cap Z^{+r}$.

For a tensor product algebra, every operator admits a factorization over any bipartition $X = Y \cup Z$ in spread 0. In particular, this will be true for the ground state algebra of any on-site semibounded Hamiltonian. For Hamiltonians which are on-siteable by spread $s$ QCA, this will be true in spread-$2s$, leading to the following:

\begin{prop}\label{propfactorization}{\rm(Factorization in the ground state algebra of on-siteable Hamiltonians)}
    Suppose $H$ is gapped and on-siteable via a spread-$s$ QCA. Then all operators $a \in \mathcal{A}_{\rm gs}(X) \otimes \mathcal{A}_{\rm gs}^0(X)$ admit factorizations over any bipartition $X = Y \cup Z$ in spread-$2s$, where $\mathcal{A}_{\rm gs}^{0}$ is the ground state algebra of the ancilla Hamiltonian $H_0$. In particular, if there are operators in $\mathcal{A}_{\rm gs}$ which are not arbitrarily factorizable in spread-$2s$, then $H$ is not on-siteable via a spread-$s$ QCA.
\end{prop}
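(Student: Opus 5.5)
The plan is to transport the question through the disentangling QCA to the on-site side, factorize there, and pull the factorization back. Let $\alpha$ be the spread-$s$ QCA with $\alpha(H\otimes 1 + 1\otimes H_0) = H_0'$, where $H_0$ and $H_0'$ are on-site and semi-bounded. Then $H\otimes 1+1\otimes H_0$ is gapped: the spectra are integer-quantized and bounded below, so the minimum is isolated. By the lemma on sums of decoupled Hamiltonians, its ground state algebra is $\mathcal{A}_{12}(X)=\mathcal{A}_{\rm gs}(X)\otimes\mathcal{A}^0_{\rm gs}(X)$. By Lemma~\ref{propbspreadiso}, $\alpha$ induces a spread-$s$ isomorphism $\varphi$ from $\mathcal{A}_{12}$ onto the ground state algebra $\mathcal{B}$ of $H_0'$.

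The key structural step is that $\mathcal{B}$ is a tensor product of local algebras. Write $H_0' = \sum_j h_j$ with $h_j$ acting on site $j$, each bounded below with lowest eigenvalue attained. Then the ground state projector is $Q=\bigotimes_j Q_j$, where $Q_j$ projects onto the lowest eigenspace of $h_j$. Hence $\mathcal{B}(X)=\bigl(\bigotimes_{j\in X}Q_j\mathcal{B}_jQ_j\bigr)\,Q$. To justify this, take $x$ supported in $X$ with $xQ=Qx$; then $xQ = Q_X x Q_X \otimes Q_{X^c}$, and $Q_X x Q_X$ lies in the algebraic tensor product of the local compressed algebras. This is automatic for finite dimensions. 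For infinite-dimensional sites I would either restrict to operators in the algebraic tensor product, which is the implicit setting of the paper, or note that only the finite-rank structure is used. Consequently every $b\in\mathcal{B}(W)$ factorizes over any bipartition $W=W_1\cup W_2$ in spread $0$, as $b=\sum_i c_i d_i$ with $c_i\in\mathcal{B}(W_1)$ and $d_i\in\mathcal{B}(W_2)$.

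Now take $a\in\mathcal{A}_{12}(X)$ and a bipartition $X=Y\cup Z$. Then $\varphi(a)\in\mathcal{B}(X^{+s})$. Bipartition $X^{+s}$ as $W_1\cup W_2$ with $W_1\subset Y^{+s}$ and $W_2\subset Z^{+s}$; for example, assign each point to $Y^{+s}$ if it is within $s$ of $Y$, and otherwise to $Z^{+s}$. Factorize $\varphi(a)=\sum_i c_id_i$ as above and apply $\varphi^{-1}=\alpha^{-1}$, which is multiplicative and has spread $s$. This gives
\[a=\sum_i \varphi^{-1}(c_i)\varphi^{-1}(d_i),\]
with $\varphi^{-1}(c_i)\in\mathcal{A}_{12}(Y^{+2s})$ and $\varphi^{-1}(d_i)\in\mathcal{A}_{12}(Z^{+2s})$. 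This is exactly a spread-$2s$ factorization, with equality holding on the ground space.

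For the final clause, suppose $O\in\mathcal{A}_{\rm gs}$ were not factorizable in spread $2s$. Consider $O\otimes P_0$, where $P_0$ is the ancilla ground state projector (the unit of $\mathcal{A}^0_{\rm gs}$, localizable anywhere). By the above it factorizes in $\mathcal{A}_{\rm gs}\otimes\mathcal{A}^0_{\rm gs}$. We then need to extract a factorization of $O$ itself. My approach is to pick a normalized product ground state $\ket{\omega}$ of the on-site ancilla Hamiltonian $H_0$ and apply the partial expectation $E_\omega$ in that state. This maps $\mathcal{A}^0$-factors to scalars, sends $a_i\otimes a_i^0\mapsto a_i\,\omega(a_i^0)$, and is multiplicative on products whose ancilla parts factor across the cut. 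Since $\ket{\omega}$ is a product state, $\omega(a^0_ib^0_i)$ is not automatically $\omega(a^0_i)\omega(b^0_i)$ when the supports overlap within $Y^{+2s}\cap Z^{+2s}$. I would handle this by first expanding each ancilla factor in the local tensor basis of $\mathcal{A}^0_{\rm gs}$ sitewise, noting that the ancilla algebra $\mathcal{A}^0_{\rm gs}$ is itself an on-site tensor product, so the overlap can be absorbed by reassigning each site to one side. I expect this reduction, together with the infinite-dimensional tensor-product structure of $\mathcal{B}$, to be the main technical obstacle.
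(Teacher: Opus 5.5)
Your proposal is correct and follows the paper's proof: you combine Lemmas 1 and 2 to get a spread-$s$ isomorphism onto the on-site ground state algebra of $H_0'$, factorize there in spread $0$ (a tensor-product structure the paper simply asserts and you justify more carefully), and pull back with $\alpha^{-1}$ to land in $Y^{+2s}$ and $Z^{+2s}$. The reduction from $O\otimes P_0$ to $O$, which you flag as the main obstacle and the paper leaves implicit, is in fact immediate: write $a_i=\sum_k a_{ik}\otimes a^0_{ik}$ and $b_i=\sum_l b_{il}\otimes b^0_{il}$ and apply the slice map $\mathrm{id}\otimes\omega$ for any state with $\omega(P_0)=1$; this gives $O=\sum_{i,k,l}\omega(a^0_{ik}b^0_{il})\,a_{ik}b_{il}$, whose coefficients are scalars that can be absorbed into $a_{ik}\in\mathcal{A}_{\rm gs}(Y^{+2s})$, so you need neither multiplicativity of $E_\omega$, nor a product state, nor any reassignment of overlap sites.
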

\begin{proof}
Let $\mathcal{A}_{\rm gs}$, $\mathcal{A}_{\rm gs}^0$, $\mathcal{A}_{\rm gs}^{0'}$ be the ground state algebras of $H$, $H_0$, $H_0'$, with
\[H_0' = \alpha(H \otimes 1 + 1 \otimes H_0)\]
on-site, with $\alpha$ a spread-$s$ QCA, and all three gapped (and bounded from below). By Lemmas 1 and 2, we obtain a spread-$s$ isomorphism
\[\alpha:\mathcal{A}_{\rm gs}(X) \otimes \mathcal{A}_{\rm gs}^0(X) \to \mathcal{A}_{\rm gs}^{0'}(X^{+s}).\]
Since $H_0'$ is composed of single-site Hamiltonians, $\mathcal{A}_{\rm gs}^{0'}(X^{+s}) = \bigotimes_{x \in X^{+s}} \mathcal{A}_{\rm gs}^{0'}(\{x\})$. In particular, for any $a\in \mathcal{A}_{\rm gs}(X) \otimes \mathcal{A}_{\rm gs}^0(X)$ and any bipartition $X = Y \cup Z$ we can express
\[\alpha(a) = \sum_i b_i c_i\]
with $b_i \in \mathcal{A}_{\rm gs}^{0'}(Y^{+s})$ and $c_i \in \mathcal{A}_{\rm gs}^{0'}(Z^{+s})$. Inverting, we get
\[a = \sum_i \alpha^{-1}(b_i) \alpha^{-1}(c_i)\]
with $\alpha^{-1}(b_i) \in \mathcal{A}_{\rm gs}(Y^{+2s}) \otimes \mathcal{A}_{\rm gs}^0(Y^{+2s})$ and $\alpha^{-1}(c_i) \in \mathcal{A}_{\rm gs}(Z^{+2s}) \otimes \mathcal{A}_{\rm gs}^0(Z^{+2s})$. Thus, all $a\in \mathcal{A}_{\rm gs}(X)$ admit arbitrary spread-$2s$ factorizations. The statement follows.
\end{proof}

Note that the proposition holds even when we include semi-bounded ancillae that are not necessarily integer-charged under the Hamiltonian generated symmetry.

Proposition~\ref{propfactorization} formalizes the intuition for the spontaneous symmetry-broken 1+1d quantum Ising model example in Sec.~\ref{sec:intro} since symmetry operator $\prod_{j\in\Lambda} X_j$ is non-factorizable, as we will show formally in Sec.~\ref{sec:SSB}. In subsequent sections, we will also see that other Hamiltonians with fractionalized excitations, such as topological orders, also possess ground state algebra operators that do not factorize, thereby leading to an obstruction in disentangling the Hamiltonian in the presence of ancilla with semi-bounded generators $H_0$. Again, notice that the proof of Proposition~\ref{propfactorization} relies crucially on the ancilla possessing generators (Hamiltonians) that have a ground state space, because it uses the structure of the ground state algebra. This observation naturally leads to questions regarding infinite-dimensional ancilla with associated unbounded generators.

\subsection{Unbounded ancilla}

In fact, we can provide a sufficient and necessary condition for disentangling a Hamiltonian in the presence of ancilla living in $L^2(G)$, i.e., rotors, with unbounded generators when $G$ is infinite. To do this, let us first introduce the concept of a locally factorizable symmetry.
\begin{defin}{\rm(Local factorizability)}
    The $G$ symmetry operators $\{U(g)\}_{g\in G}$ are
    locally factorizable if every $U(g)$ can be written as a product of mutually commuting unitary $G$-representations $U_i(g)$ of bounded size, such that $U(g) = \prod_i U_i(g)$ and $[U_i(g),U_j(h)]=0$ $\forall g,h\in G,\forall i\neq j$.
\end{defin} 

In particular, a $U(1)$ symmetry is locally factorizable if and only if its generator $H$ can be written as a sum of local terms
\begin{align}
    H=\sum_{S\subset\Lambda} (h_S+h_{\partial S})\quad,
    \label{eq:hboundaryfactorizability}
\end{align}
such that the term in the parentheses is supported on a local subset $S\subset\Lambda$ and is a local $U(1)$ representation $e^{i2\pi( h_S+h_{\partial S})}=1$, and the boundary term $h_{\partial S}$ is supported on $\partial S$ (the boundary of $S$) with $\sum_{S\in\Lambda} h_{\partial S}=0$. Commutativity also implies that the terms in the parentheses must also mutually commute with each other. If the generator fulfills the above condition, we will also refer to it as locally factorizable.

In fact, local factorizability of $G$ symmetry operators in the presence of $L^2(G)$ ancillae is equivalent to their on-siteability in the presence of $L^2(G)$ ancillae.
\begin{prop} {\rm(On-siting via unbounded ancilla)}
    $U(g)$ is on-siteable with the addition of ancilla living in $L^2(G)$ carrying the on-site left-regular $G$ action if and only if $\{U(g)\}_{g\in G}$ is locally factorizable in the presence of ancilla living in $L^2(G)$ carrying the on-site left-regular $G$ action. 
    \label{prop:disentanglerlemma}
\end{prop}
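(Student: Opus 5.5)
We prove the two implications separately. The forward direction (on-siteable $\Rightarrow$ locally factorizable) only transports the on-site structure back through the circuit. The reverse direction is an explicit construction in the spirit of the finite-group disentangler of~\cite{seifnashri2025disentanglinganomalyfreesymmetriesquantum}, where the $L^2(G)$ ancilla serves as a ``reference frame'' for each factor.

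\emph{On-siteable $\Rightarrow$ locally factorizable.} Suppose $W(U(g)\otimes V(g))W^\dagger=V(g)'=\prod_x V'_x(g)$ with $V$ the on-site left-regular action on $L^2(G)$ ancillae and $W$ of depth $D$. We define
\[U_x(g)\equiv W^\dagger V'_x(g)W.\]
Each $U_x$ is a unitary $G$-representation, since conjugation preserves the group law. Each is supported within distance $\sim D$ of $x$, because $W$ is a finite-depth circuit. They mutually commute, since the $V'_x$ act on distinct sites. Their product is $U(g)\otimes V(g)$. Hence $U(g)\otimes V(g)$ is locally factorizable, which is precisely the statement ``locally factorizable in the presence of $L^2(G)$ ancillae.''

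\emph{Locally factorizable $\Rightarrow$ on-siteable.} Assume $U(g)\otimes V(g)=\prod_i U_i(g)$ with commuting, bounded-support representations $U_i$. We adjoin one fresh $L^2(G)$ rotor $a_i$ per factor, placed at a site inside the support of $U_i$, carrying the regular action $L_{a_i}(g)\ket{h}=\ket{gh}$. We then define the controlled unitary
\[C_i=\sum_{h\in G} U_i(h)\otimes \ket{h}\bra{h}_{a_i}\]
(an integral for continuous $G$). Each $C_i$ is supported near $\mathrm{supp}\,U_i$. Different $C_i$ commute, because the $U_i$ commute and the controls sit on distinct rotors. By bounded overlap, $C=\prod_i C_i$ is therefore a finite-depth circuit.

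The key identity, checked on basis states, is
\[C_i^\dagger\,\big(U_i(g)\otimes L_{a_i}(g)\big)\,C_i=1\otimes L_{a_i}(g),\]
which uses $U_i(gh)^\dagger U_i(g)U_i(h)=1$. Taking the product over $i$ gives
\[C^\dagger\big(U(g)\otimes V(g)\otimes \textstyle\prod_i L_{a_i}(g)\big)C=\prod_i L_{a_i}(g),\]
which is on-site. Setting $W=C^\dagger$ and absorbing the old and new ancillae into the enlarged $V$ yields Definition~\ref{def:onsiteable}. Commuting factors are essential here: conjugating by $C_j$ leaves $U_i\otimes L_{a_i}$ for $i\neq j$ unchanged, because $U_i$ commutes with every $U_j(h)$ and $L_{a_i}$ acts on a different rotor.

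\emph{Expected obstacle.} The main technical point is making $C_i$ rigorous for $G=U(1)$. Here $C_i=e^{i h_i\otimes\hat\phi_{a_i}}$, where $h_i$ is the integer-spectrum generator of $U_i$ and $\hat\phi$ is the rotor angle. We must check that this operator is well-defined and bounded-support even though the rotor Hilbert space is infinite-dimensional and the on-site generator $\hat n_{a_i}$ is unbounded. This is exactly where unbounded ancillae are needed: conjugation shifts $\hat n_{a_i}\mapsto \hat n_{a_i}-h_i$, absorbing the charge of $h_i$. The remaining checks are the finite-depth claim, via a coloring of the bounded-overlap factors, and the generator-level equivalence, which follows from Appendix~\ref{app:cononsite}; both are routine.
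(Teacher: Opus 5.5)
Your proposal is correct and follows essentially the same route as the paper: for on-siteable $\Rightarrow$ locally factorizable you pull the on-site factors $V'_x(g)$ back through $W$ exactly as the paper does, and your controlled unitaries $C_i=\int dh\, U_i(h)\otimes|h\rangle\langle h|_{a_i}$ are precisely the paper's $V_i$, with $\prod_i C_i^\dagger$ serving as the disentangler. Your version adds two refinements that do not change the argument: you factorize $U(g)\otimes V(g)$ including the pre-existing ancillae, and you note the integer-spectrum condition that makes $e^{i h_i\otimes\hat\phi}$ well defined for $U(1)$.
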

\begin{proof}
    First, assume that $\{U(g)\}_{g\in G}$ is locally factorizable in the presence of ancilla living in $L^2(G)$ carrying the on-site left-regular $G$ action. Then, every $U(g) = \prod_i U_i(g)$. For each $i$, introduce an ancilla $\ket{g_i} \in L^2(G)$. Define the operator
    \[V_i = \int dg_i \, U_i(g_i)
    \otimes
    |g_i \rangle \langle g_i|\]
    It is easy to check $V_i$ is unitary and $V_i$ and $V_j$ commute when $i \neq j$. Furthermore, letting $L(h)\ket{g_i} = \ket{hg_i}$ be the on-site left regular action, we have
    \[V_i L(h) V_i^\dagger \ket{g_i} = U_i(h g_i) U_i(g_i^{-1}) \ket{h g_i} = U_i(h) \ket{hg_i}\,.\]
    Thus,
    \[(\prod_i V_i) \left[1\otimes L(h)\right] (\prod_i V_i)^\dagger = U(h) \otimes L(h)\,,\]
    so $\prod_i V_i^\dag$ (which can be expressed as a finite-depth circuit assuming $U_i$ has bounded size) makes $U(h) \otimes L(h)$ on-site.

    Conversely, let us assume that $U(g)$ is on-siteable with the addition of ancilla living in $L^2(G)$ carrying the on-site left-regular $G$ action. Following the notation in Eq.~\ref{eqnonsitable} and expanding the on-sited symmetry $V(g)'=\prod_i V_{i}(g)'$,
    $$U(g)\otimes V
    (g)=\prod_i W^\dag V_{i}(g)'W=\prod_i U_i(g)\,,$$
    with $U_i(g)\equiv W^\dag V_i(g)'W$. So we have found a local factorization in the presence of ancilla living in $L^2(G)$ carrying the on-site left-regular $G$ action.
\end{proof}

In the following Secs.~\ref{sec:SSB} and \ref{sec:toriccode}, we will show how this explicit construction can be used to disentangle the Ising ferromagnetic Hamiltonian and toric code Hamiltonian, and thereby on-site their associated symmetries in the presence of rotors. Notice that it is crucial for $U(g)$ to be a product of local mutually-commuting $G$-representations, which may not always be immediately clear from the outset. In Appendix~\ref{sec:blendability} we study how this requirement is intimately linked to blendability of the symmetry~\cite{czajka2025anomalieslatticehomotopyquantum,Tu_2026}.

\section{Ising ferromagnetic generator}
\label{sec:SSB}

Recall from Eq.~\ref{eq:HSSB}, $U(1)_{\rm SSB}$ is enacted by the unitary operator
\[U_{\rm SSB}(\theta) = e^{i\theta H_{\rm SSB}}\quad,\]
which has been previously studied in Refs. \cite{Vernier:2018han,PhysRevLett.134.021601,SciPostPhys.14.2.012,Pace:2024oys,Jones_2025}. 
Notice that $H_{\rm SSB}$ is normalized in such a fashion that the charge generator is quantized in $\mathbb{Z}$, and $U_{\rm SSB}(2\pi)=1$. This symmetry is a finite-depth quantum circuit which,  via conjugation, maps local operators to other local operators. Explicitly, the operators $X_j$ and $Z_j$ transform under $U(1)_{\rm SSB}$ as
\begin{align}
    Z_j\mapsto Z_j\,, && X_j\mapsto X_j e^{i\frac{\theta}{2}(Z_{j-1}Z_j+Z_j Z_{j+1})}\,,
\end{align}
where the spread of the operator support is only to its nearest neighbors.

First, note that even in the absence of ancilla, $U(1)_{\rm SSB}$ possesses short-range entangled eigenstates such as $\ket{0...0}$. This feature means that it is not anomalous from an `obstruction to an SRE state' perspective. 
However, we will encounter an obstruction to making this symmetry on-site, which is another expected feature of an anomaly-free symmetry, using a large class of ancillae.

To understand this, recall the kinematic properties of the SSB Hamiltonian mentioned in the beginning of Section~\ref{sec:onsitegenmet}. The SSB Hamiltonian has ground state subspace $\mathrm{Span}\{\ket{0...0}, \ket{1...1}\}$ with fractionalized excitations, known as domain walls, which necessarily arise in pairs in a system with periodic boundary conditions.

Applying the general methods in Sec.~\ref{sec:onsitegenmet}, one can show that $U(1)_{\rm SSB}$ is not on-siteable via an FDQC due to the $\prod_{j\in\Lambda} X_j$ operator being non-factorizable.
\begin{prop}
    $U(1)_{\rm SSB}$ can be on-sited only via a 
    QCA of spread $\geq L/8$, where $L$ is the system size, even in the presence of semi-bounded ancilla\footnote{Spread $s \ge L/8$ requires a disentangling circuit to have depth at least $aL+b$, with $a,b$ universal constants depending on the size of the allowed circuit elements.}. 
    \label{prop:ssbfiniteancilla}
\end{prop}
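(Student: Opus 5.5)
We apply Proposition~\ref{propfactorization} by contradiction. Suppose $H_{\rm SSB}$, together with semi-bounded ancillae with generator $H_0$, is on-sited by a QCA $\alpha$ of spread $s<L/8$. Take $X=\Lambda$ and consider the element $a=(\prod_{j\in\Lambda}X_j)P\otimes P_0\in\mathcal{A}_{\rm gs}(\Lambda)\otimes\mathcal{A}^0_{\rm gs}(\Lambda)$, where $P_0$ is the ancilla ground-state projector. This projector exists because the ancilla generators are bounded below with integer spectrum, so the combined generator is gapped and the argument of Proposition~\ref{propfactorization} applies. Pick a bipartition of the ring into two contiguous arcs $Y$ and $Z$, each of length $L/2$. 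By Proposition~\ref{propfactorization}, $a=\sum_i a_i b_i$ with $a_i$ supported on $Y^{+2s}$ and $b_i$ on $Z^{+2s}$. Since $2s<L/4$, each enlarged arc is a proper arc of the ring, so $Y^{+2s}$ misses a stretch $I_Z\subset Z$ of more than $L/2-4s>0$ sites, and $Z^{+2s}$ misses a stretch $I_Y\subset Y$. The rest of the argument shows that no such factorization exists. This gives the bound $s\ge L/8$, and the circuit-depth statement in the footnote follows because a depth-$D$ circuit of range-$r$ gates has spread at most $rD$.

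The core step is to classify the ground-state operators localizable to a proper arc $R\subsetneq\Lambda$. If $x$ is supported on $R$ and commutes with $P$, then $xP$ acts on ${\rm Span}\{\ket{0\cdots0},\ket{1\cdots1}\}$. Because $x$ acts trivially off $R$, the matrix element $\bra{1\cdots1}x\ket{0\cdots0}$ equals $\bra{1_R}x\ket{0_R}\langle 1_{\bar R}|0_{\bar R}\rangle=0$, and similarly with $0$ and $1$ exchanged. Hence $xP$ is diagonal on the ground space, and so $\mathcal{A}_{\rm gs}(R)\subseteq{\rm span}\{P,Z_1P\}$, the commutative algebra generated by $Z_1P$. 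The same computation applies with the ancillae included. The combined ground space is $\mathcal{S}_{\rm GS}\otimes\mathcal{S}^0_{\rm GS}$, and any operator supported on $R$ (system plus ancilla sites in $R$) maps $\ket{0\cdots0}\otimes\psi$ into $\ket{0\cdots0}\otimes\mathcal{S}^0_{\rm GS}$, because the system spins outside $R$ remain $0$. Therefore every $a_i$ and every $b_i$ lies in ${\rm span}\{P,Z_1P\}\otimes\mathcal{A}^0_{\rm gs}$, a subalgebra that is block-diagonal with respect to the splitting $\ket{0\cdots0}\otimes\mathcal{S}^0_{\rm GS}\oplus\ket{1\cdots1}\otimes\mathcal{S}^0_{\rm GS}$. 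Block-diagonal algebras are closed under products and sums, so $\sum_i a_ib_i$ is block-diagonal. However, $\prod_j X_j\otimes P_0$ exchanges the two blocks and is nonzero, which is the desired contradiction.

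The main obstacle is making the ancilla step rigorous. The ancillae may be infinite-dimensional, and the ancilla ground space $\mathcal{S}^0_{\rm GS}$ may be large. The argument needs only that the system part of the ground-state projector of $H_{\rm SSB}\otimes1+1\otimes H_0$ is $P\otimes P_0$, as in Lemma~2, and that locality on $R$ forces the spin configuration outside $R$ to be preserved on the ground space. I would state this as a small lemma: for an arc $R\neq\Lambda$, every $x$ supported on $R$ satisfies $(\bra{1\cdots1}\otimes 1)\,x\,(\ket{0\cdots0}\otimes 1)=0$ as an operator on the ancilla space. The lemma follows by inserting a product basis on $\bar R$. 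A minor further check is that the enlarged arcs in the factorization are genuinely proper on the periodic chain, and this is exactly where the constant $L/8$ comes from: we need $2s<L/4$, so that each enlarged half-arc, of length at most $L/2+4s$, stays below $L$.
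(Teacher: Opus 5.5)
Your proof is correct and is essentially the paper's argument. Your block-diagonality lemma for operators localizable to a proper arc says exactly that each $a_i$ and $b_i$ commutes with $Z_jP\otimes 1$; the paper gets this from the fact that $Z_jP=Z_{j'}P$ is localizable to any single site, hence to one outside $Y^{+2s}$ or $Z^{+2s}$, and the contradiction with the anticommutation of $\prod_j X_j$ is the same. Your direct matrix-element computation is a slightly more explicit route: it covers local operators that mix system and ancilla, and uses only the projector $P\otimes P_0$ rather than the full tensor-product identification of the nets in Lemma~2.
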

\begin{proof}
    Consider the ground state algebra of $H_{\rm SSB}\otimes 1+1\otimes H_0$. It is a tensor product of the ground state algebras for $H_0$ as well as that for $H_{\rm SSB}$. Recall from Section~\ref{sec:semibounded} that the ground state algebra of $H_{\rm SSB}$ is a $2 \times 2$ matrix algebra, generated by $\prod_{j\in\Lambda} X_j P$ and $Z_j P$ where $j$ is any particular site and $P$ is the $H_{\rm SSB}$ ground state projector.

    Suppose that $\prod_{j \in \Lambda} X_j \otimes 1$ admitted a factorization
    \[\prod_{j \in \Lambda} X_j = \sum_i a_i b_i\]
    for $a_i$ localizable to $[1,L/2]^{+2s}$ and $b_i$ localizable to $[L/2+1,L]^{+2s}$. If $s < L/8$, then these sets individually do not contain all points of $\Lambda = [1,L]$. Therefore, each $a_i$ and $b_i$ commutes with $Z_j P \otimes 1$ since $Z_jP \otimes 1$ is localizable to a disjoint site for each $a_i$ or $b_i$. This is impossible because $Z_j P \otimes 1$ must anti-commute with $\prod_{j \in \Lambda} X_j$. Thus, $s \ge L/8$. By Proposition~\ref{propfactorization}, this implies that  $U(1)_{\rm SSB}$ is not on-siteable by a 
    QCA of spread ${<L/8}$.
\end{proof}
This immediately implies that $U(1)_{\rm SSB}$ is not on-siteable with FDQC nor any locality-preserving unitary. Note that the proof of this theorem does not rely on the ancilla Hamiltonian $H_0$ being on-site or integer-charged, only gapped. This means that the non-on-siteability of $U(1)_{\rm SSB}$ is also \textit{non-invertible}.

In fact, for certain system sizes and \textit{finite-dimensional} ancilla, we can show that $U(1)_{\rm SSB}$ cannot be disentangled via \textit{any} unitary, regardless of circuit depth.
\begin{prop}
    On system sizes $L=2^n$ for $n\in\mathbb{N}$ and $n\gg1$, $U(1)_{\rm SSB}$ cannot be on-sited via any unitary, even in the presence of sites\footnote{A site can include both the the original model site and that of an ancilla.} of dimension $<2^{L-1}$.
    \label{prop:galoisssb}
\end{prop}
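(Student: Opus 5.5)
Since any unitary conjugation preserves the spectrum with multiplicities, the plan is to compare the spectrum of $H_{\rm SSB}\otimes 1+1\otimes H_0$ with that of an on-site generator. If $W$ on-sites the symmetry, then $H_{\rm SSB}\otimes 1 + 1\otimes H_0$ is unitarily equivalent to $H_0'=\sum_x h_x$. Here each $h_x$ acts on a single site of dimension $d_x<2^{L-1}$ and has integer spectrum. The spectrum of $H_0'$, with multiplicities, is therefore the Minkowski sum of the single-site spectra. Equivalently, its characteristic data are encoded in the Laurent polynomial
\[\chi_{H_0'}(q)=\mathrm{Tr}\,q^{H_0'}=\prod_x \chi_{h_x}(q),\qquad \chi_{h_x}(q)=\sum_{k\in{\rm spec}\,h_x} q^{k}.\]
The left-hand side factors as $\chi_{H_{\rm SSB}}(q)\,\chi_{H_0}(q)$, and $\chi_{H_0}$ is itself a product of single-site polynomials with fewer than $2^{L-1}$ terms each (all coefficients non-negative integers). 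The goal is to derive a contradiction from unique factorization in $\mathbb{Z}[q,q^{-1}]$.

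First I compute $\chi_{H_{\rm SSB}}(q)$ explicitly. In the $Z$ basis the eigenvalue is half the number of domain walls. There are $2\binom{L}{2m}$ configurations with $2m$ domain walls, since a choice of wall positions fixes the configuration up to a global flip. Hence
\[\chi_{H_{\rm SSB}}(q)=\sum_m 2\binom{L}{2m}q^{m}=(1+\sqrt q)^L+(1-\sqrt q)^L.\]
Substituting $q=t^2$ gives $f(t)=(1+t)^L+(1-t)^L$. The key algebraic input, and the reason for the restriction $L=2^n$, is irreducibility. I would show that $g(q)=\tfrac12\chi_{H_{\rm SSB}}(q)$, of degree $L/2$, is irreducible over $\mathbb{Q}$. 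Setting $u=(1+t)/(1-t)$, the roots of $f$ satisfy $u^{L}=-1$, so $u$ is a primitive $2L=2^{n+1}$-th root of unity. The Galois orbit of $u$ under $\mathrm{Gal}(\mathbb{Q}(\zeta_{2^{n+1}})/\mathbb{Q})$ has all $2^n=L$ such roots. The roots of $g$ are $q=t^2$ with $t=(u-1)/(u+1)$. The identification $t\mapsto -t$ corresponds to $u\mapsto u^{-1}$, so the $L$ values of $t$ give $L/2$ values of $q$, permuted transitively by Galois. This makes $g$ irreducible of degree $L/2$, and its content is coprime to small primes beyond the factor $2$. Verifying the transitivity step carefully, including that the $q$'s really are distinct, is the main obstacle.

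Given irreducibility, unique factorization in $\mathbb{Q}[q,q^{-1}]$ forces $g$ to divide a single factor $\chi_{h_x}$ of $\prod_x\chi_{h_x}$, since $g$ is an irreducible factor of the whole product. Thus there is a site $x$ with $\chi_{h_x}=g\cdot r$ for some nonzero $r$ (up to a monomial). I then bound the size of that site from below, using $\chi_{h_x}(1)=\dim$ of site $x$. Since $\chi_{h_x}$ has non-negative integer coefficients and is divisible by $g$, Gauss's lemma makes $r$ integral. The argument must also show $r(1)\ge 2$, or at least that $g(1)\,r(1)\ge 2^{L-1}$. We have $g(1)=2^{L-1}$ and $r(1)$ is a positive integer: it is nonzero because $g(1)r(1)=d_x>0$, and it is positive because $g(1)>0$. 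It follows that $d_x\ge 2^{L-1}$, contradicting the assumption $d_x<2^{L-1}$. The only care needed is that $r$ has integer coefficients, which holds because $g$ is primitive: its constant coefficient is $\binom{L}{0}=1$. The hypothesis $n\gg1$ is presumably only needed to exclude small degenerate cases.
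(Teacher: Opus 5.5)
Your proposal is correct and follows essentially the same route as the paper: the spectral (trace) polynomial is invariant under arbitrary unitaries and multiplicative over tensor factors, and $Q_L(q)=\sum_m\binom{L}{2m}q^m$ is irreducible for $L=2^n$ because Galois acts transitively on the primitive $2L$-th roots of unity. Hence some site's polynomial is divisible by $Q_L$, so that site has dimension at least $Q_L(1)=2^{L-1}$; along the way you make explicit a step the paper only asserts (using Laurent polynomials and Gauss's lemma, with $Q_L$ primitive, so the cofactor is a positive integer at $q=1$), and you are right that $n\gg1$ is not actually needed.
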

The proof of this proposition is in Appendix~\ref{app:galois}, and shows that the entire energy spectrum, containing fractionalized excitations, cannot be reproduced by any on-site Hamiltonian with finite-dimensional local Hilbert spaces. The argument encodes the degeneracy of the spectrum in terms of polynomials and reduces the on-siteability feature to a problem of determining irreducible polynomials.

The proofs of Propositions \ref{prop:ssbfiniteancilla} and \ref{prop:galoisssb} rely explicitly on semi-bounded ancilla. In the presence of unbounded ancilla, $U(1)_{\rm SSB}$ is actually on-siteable.
\begin{prop}
    $U(1)_{\rm SSB}$ is on-siteable in finite depth in the presence of infinite-dimensional ancilla with unbounded generators.
    \label{prop:ssbonsiterotor}
\end{prop}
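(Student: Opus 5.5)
The plan is to apply Proposition~\ref{prop:disentanglerlemma} with $G=U(1)$ and one rotor ancilla per bond, so that $L^2(U(1))$ carries the left-regular action generated by the angular momentum $L_j$, which is unbounded in both directions. Everything then reduces to exhibiting a local factorization of $H_{\rm SSB}$ of the form~\eqref{eq:hboundaryfactorizability}. The naive terms $\frac14(1-Z_jZ_{j+1})$ have spectrum $\{0,\tfrac12\}$, so they do not individually generate $U(1)$ representations. The half-integer parts must therefore be absorbed into boundary terms that cancel in the total sum.

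Concretely, set $n_j=\frac12(1-Z_j)\in\{0,1\}$. Then
\[\tfrac14(1-Z_jZ_{j+1})=\tfrac12(n_j+n_{j+1})-n_jn_{j+1}.\]
I would choose the boundary term $h_{\partial S_j}=\frac12(n_j-n_{j+1})$ for the bond $S_j=\{j,j+1\}$. This sums to zero by telescoping on the periodic chain. Subtracting it gives the local term
\[h_{S_j}+h_{\partial S_j}=n_{j+1}(1-n_j),\]
which is a projector, so its spectrum is $\{0,1\}$ and it generates a genuine $U(1)$ representation $U_j(\theta)=e^{i\theta\, n_{j+1}(1-n_j)}$. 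Hence
\[H_{\rm SSB}=\sum_j n_{j+1}(1-n_j),\]
which counts the $0\to1$ domain walls and so equals half the total number of domain walls. All terms are diagonal in the $Z$ basis, so they mutually commute, and each has support two. This establishes local factorizability with $U_{\rm SSB}(\theta)=\prod_j U_j(\theta)$. The point of the construction is that the fractional charge $\frac12$ of a single domain wall is assigned asymmetrically, entirely to one type of wall.

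Next I would invoke the construction in the proof of Proposition~\ref{prop:disentanglerlemma}. For each bond, define
\[V_j=\int \frac{d\phi_j}{2\pi}\, e^{i\phi_j n_{j+1}(1-n_j)}\otimes\ket{\phi_j}\bra{\phi_j}.\]
Each $V_j$ is a controlled-phase gate of bounded support, and the $V_j$ mutually commute. The product $\prod_j V_j$ can therefore be arranged in depth two: first the even bonds, then the odd bonds. The proposition then gives
\[\Big(\prod_j V_j\Big)^\dagger\Big(e^{i\theta H_{\rm SSB}}\otimes e^{i\theta\sum_j L_j}\Big)\Big(\prod_j V_j\Big)=1\otimes e^{i\theta\sum_j L_j},\]
so on the combined system the symmetry becomes on-site. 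At the level of generators this says $H_{\rm SSB}+\sum_j L_j$ is mapped to $\sum_j L_j$.

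The step I expect to require the most care is the existence of the integer-valued factorization itself. Here it works because the domain-wall number operator splits telescopically into integer pieces. I would explicitly check that $\sum_j h_{\partial S_j}=0$ holds with periodic boundary conditions, since this is exactly where the fact that domain walls come in pairs is used.

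Finally, I would remark on consistency with Proposition~\ref{prop:ssbfiniteancilla}. The ancilla generator $\sum_j L_j$ has no ground state, so no ground-state algebra exists and the obstruction of Proposition~\ref{propfactorization} does not apply.
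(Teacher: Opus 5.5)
Your proposal is correct and is essentially the paper's proof: the paper uses the same factorization $U_i(\theta)=e^{i\theta\frac12(1+Z_i)\frac12(1-Z_{i+1})}$ (your $n_{i+1}(1-n_i)$) with telescoping $Z_i-Z_{i+1}$ terms, invokes Proposition~\ref{prop:disentanglerlemma}, and writes the same rotor-controlled phase circuit $W=\prod_i e^{2\pi i\phi_i P_i}$ with one rotor per site. Two harmless nits: in the paper's convention $H=\sum_S(h_S+h_{\partial S})$ your boundary term should carry the opposite sign, and on a periodic chain of odd length the even/odd layering of bonds needs a third layer.
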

\begin{proof}
With periodic boundary conditions, we first note that $U(1)_{\rm SSB}$ is locally factorizable as it can be expressed as a product of the locally commuting $U(1)$ representations
\[
U_i(\theta) = e^{i\theta \frac12 (1+Z_i) \frac12 (1-Z_{i+1})}\quad.
\]
(Note that in the product $\prod_i U_i(\theta)$, the extra terms ${Z_i - Z_{i+1}}$ telescope.) 
Importantly, each $U_i(\theta)$ generates a $U(1)$ symmetry as $U_i(2\pi)=1$. 
Using this new expression, we can then apply Proposition~\ref{prop:disentanglerlemma}, which implies that $U(1)_{\rm SSB}$ is on-siteable.

More explicitly, the disentangler can be constructed as follows. 
We add a rotor $\phi_i\in \mathbb{R}/\mathbb{Z}$ at every site $i$. Now we add to the SSB Hamiltonian the ancillary rotors
\[
H = H_{\rm SSB} \otimes 1 + \sum_i 1\otimes N_i
\]
where $N_i= - {i\over 2\pi} {\partial \over \partial \phi_i}$ is the integer-valued operator conjugate to $\phi_i$. 
Define the FDQC 
\[
W = \prod_i e^{2\pi i \phi_i P_i},~~~
P_i  = \frac{1+Z_i}{2}  \frac{1-Z_{i+1}}{2},
\]
which is well-defined since $P_i$ takes integer values. 
Notice that
\[
W HW^\dag = \sum_i 1\otimes N_i
\]
which means that $U(1)_{\rm SSB}$ is made on-site by $W$.
\end{proof}

In the following section we will see that satisfying these two clauses is non-trivial for topological orders such as the toric code.

\section{Toric code generator}
\label{sec:toriccode}

We define the non-on-site toric code  $U(1)_{\rm TC}$ symmetry in 2+1d, generated by the toric code Hamiltonian \cite{KITAEV20032}
\begin{equation}\label{TCHam}
    H_{\rm TC} = \sum_v \frac{1}{4}\left(1-A_v\right) + \sum_p \frac{1}{4}\left(1-B_p\right)\quad,    
\end{equation}
with $A_v$ being the vertex terms with $X$ operators, and $B_p$ being the plaquette term with $Z$ operators.
This symmetry is enacted by the unitary operator
\[U_{\rm TC}(\theta) = e^{ i\theta H_{\rm TC}}\quad.\]
Again, we have chosen a normalization and energy shift such that the Hamiltonian terms give half-quantized charges to the individual $e$ anyon excitations ($A_v=-1$) and $m$ anyon excitations ($B_p=-1$), resulting in a $\mathbb{Z}$ spectrum in periodic boundary conditions. Therefore, we have $U_{\rm TC}(2\pi) = 1$, obtained by a `telescoping' product of toric code terms and exploiting the identities
\[\prod_v A_v = \prod_p B_p = 1\quad.
\label{eq:conditions}\]
Similar to $U(1)_{\rm SSB}$, the symmetry $U(1)_{\rm TC}$ is locality-preserving, which can be seen via its action on the algebra generators
\begin{align}\label{eq:action}
    Z_j\mapsto Z_j e^{i\frac{\theta}{2}(A_{v_j}+A_{v_j'})}\,,\, X_j\mapsto X_j e^{i\frac{\theta}{2}(B_{p_j}+B_{p_j'})}\,,
\end{align}
where $A_{v_j}$ and $A_{v_j'}$ are the neighboring vertices to bond site $j$, $B_{p_j}$ and $B_{p_j'}$ are neighboring plaquettes to bond site $j$.

We will also study the related enhanced symmetry group $U(1)_e\times U(1)_m$, defined via the vertex and plaquette terms, respectively,
\[U_e(\theta) = \exp{ i\theta \sum_v\frac{1}{4}\left(1-A_v\right)}\quad, \\
U_m(\theta) = \exp{ i\theta \sum_p \frac{1}{4}\left(1-B_p\right)}\quad,\]
normalized so that  $U_{e,m}(2\pi)=1$. Notice that $U(1)_{\rm TC}\subset U(1)_e\times U(1)_m$ is the diagonal subgroup of the larger symmetry group.

Let us first review the fractionalization property of the toric code, specifically anyon excitations and lattice 1-form symmetry operators.\footnote{We use the term \textit{lattice} 1-form symmetry to emphasize that the symmetry operators are not topological, unlike those of a 1-form symmetry in relativistic quantum field theory~\cite{Gaiotto:2014kfa}. Lattice 1-form symmetry is sometimes instead called non-topological, faithful, or non-relativistic 1-form symmetry. See Refs. \cite{Seiberg:2019vrp, Qi:2020jrf, Oh:2023bnk, Choi:2024rjm, Liu:2026tcl, Brito:2026eqi} for further discussion.} Excitations of the $A_v$ and $B_p$ operators with eigenvalue $-1$ are referred to as $e$ and $m$ anyons, respectively. From the toric code ground state, $e$ anyons are created in pairs via the $\prod_{j\in \gamma_{vv'}} Z_j$ line operators along a path $\gamma_{vv'}$ that traverses links such that the end points $v$ and $v'$ represent $e$ anyons. On the other hand, $m$ anyons are created in pairs via the $\prod_{j\in \bar{\gamma}_{pp'}} X_j$ line operator along a path $\bar{\gamma}_{pp'}$ in the dual space lattice, i.e., perpendicular to the links, such that the end points $p$ and $p'$ represent the $m$ anyons. Fusion of the $e$ and $m$ anyons represents the fermion $f$.

Closed loops $C$ and $\bar{C}$ in the lattice and dual lattice of the anyon operators with $\prod_{j\in C} Z_j$ and $\prod_{j\in \bar{C}} X_j$ are symmetries of the toric code Hamiltonian as they commute with all $A_v$ and $B_p$ operators. These form the $\mathbb{Z}^{(1),e}_2\times \mathbb{Z}^{(1),m}_2$ lattice 1-form symmetry groups of the toric code, which gives rise to a mixed `t Hooft anomaly. We may also construct a fermionic loop operator $\prod_{j\in C,j'\in \bar{C}} Z_j X_{j'}$, associated with the creation and annihilation of a pair of $f$ fermionic anyons. This operator is an anomalous lattice 1-form symmetry of the toric code~\cite{hsin2025higherformanomaliesimplyintrinsic,Feng:2025qgg,Feng:2025yge,pw12-kdjx,Kapustin:2025rhp}.

\subsection{No-go for on-siteability with semi-bounded ancilla}

Analogous to the SSB case, the lattice 1-form symmetries of the toric code are non-factorizable in the ground state algebra of the toric code since factorizing them create anyon excitations. This fractionalization effect gives us an obstruction to on-siteability with bounded ancilla.
\begin{prop}
    $U(1)_{\rm TC}$ on an $L \times L'$ torus is not on-siteable by any QCA of spread $s < L/8$, even in the presence of semi-bounded ancilla.
    \label{prop:tcrotors}
\end{prop}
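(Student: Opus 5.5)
Mirroring the proof of Prop.~\ref{prop:ssbfiniteancilla}, we apply Proposition~\ref{propfactorization} to a logical operator of the toric code. Let $P$ be the ground state projector of $H_{\rm TC}$, and let $P_0$ be the ground state projector of the semi-bounded ancilla generator $H_0$ (bounded from below, hence with a well-defined ground state space). On the torus, $\mathcal{A}_{\rm gs}$ is the $4\times 4$ matrix algebra generated by the four logical loop operators $W_Z(C)=\prod_{j\in C}Z_j$ and $W_X(\bar C)=\prod_{j\in\bar C}X_j$ times $P$, where $C,\bar C$ run over non-contractible cycles in the two directions. Two facts will be used.
\begin{enumerate}
\item \emph{Deformability.} $W_Z(C)P$ depends only on the homology class of $C$, because multiplying by $B_p$ deforms the loop and $B_pP=P$. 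Hence it is localizable to a width-one strip around any representative loop, and likewise for $W_X(\bar C)P$ via $A_v$.
\item \emph{Anticommutation.} $W_Z(C)$ anticommutes with $W_X(\bar C')$ whenever $C$ and $\bar C'$ intersect an odd number of times, e.g.\ loops winding in transverse directions.
\end{enumerate}

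Take $O=W_X(\bar C)P\otimes P_0$, with $\bar C$ a dual loop winding in the direction of length $L$. Suppose $U(1)_{\rm TC}$ is on-siteable by a QCA of spread $s<L/8$. Cut the torus along the $L$ direction into two halves, $Y=[1,L/2]\times[1,L']$ and $Z=[L/2+1,L]\times[1,L']$. By Proposition~\ref{propfactorization}, we can write $O=\sum_i a_ib_i$ with
\[
a_i\in\mathcal{A}_{\rm gs}(Y^{+2s})\otimes\mathcal{A}^0_{\rm gs}(Y^{+2s}),\qquad b_i\in\mathcal{A}_{\rm gs}(Z^{+2s})\otimes\mathcal{A}^0_{\rm gs}(Z^{+2s}).
\]
Since $2s<L/4$, the region $Y^{+2s}$ is a band of width less than $3L/4$ in the $L$ direction. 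There is therefore a column strip $S_Z$, at horizontal position near $3L/4$, disjoint from $Y^{+2s}$. Similarly, there is a column strip $S_Y$, near $L/4$, disjoint from $Z^{+2s}$.

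Now pick a loop $C$ winding in the $L'$ direction inside each strip. By deformability, $W_Z(C)P\otimes P_0$ is localizable both to $S_Y$ and to $S_Z$. By locality of the net, it commutes with every $a_i$, using its representative in $S_Z$, and with every $b_i$, using its representative in $S_Y$. It therefore commutes with $O=\sum_i a_ib_i$. But $C$ crosses $\bar C$ exactly once, so $W_Z(C)$ anticommutes with $W_X(\bar C)$. Since $W_X(\bar C)P\neq 0$, this is a contradiction, and hence $s\ge L/8$.

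The main obstacle is making the commutation step rigorous. An element of the ground state algebra is an equivalence class $xP$, and we must check that $[W_Z(C)P,a_i]=0$ in $\mathcal{A}_{\rm gs}\otimes\mathcal{A}^0_{\rm gs}$ follows from choosing disjointly supported representatives. This is exactly property 2 of the local net, applied to the tensor product net of Lemma~2, so the step reduces to verifying the disjoint-support claim: the strips $S_Y,S_Z$ together with their one-link thickening must avoid $Z^{+2s}$ and $Y^{+2s}$ respectively. We also need that $W_X(\bar C)P$ is nonzero on the ground space, which is the standard fact that the logical operators act nontrivially.
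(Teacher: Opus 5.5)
Your proposal is correct and is essentially the paper's proof, written out in more detail: you apply Proposition~\ref{propfactorization} to a long logical string over the half-torus bipartition, and you get a contradiction from its anticommutation with a transverse string that can be localized away from both $Y^{+2s}$ and $Z^{+2s}$. One small arithmetic slip: $Y^{+2s}$ has horizontal extent $L/2+4s<L$, not less than $3L/4$; but its right edge lies at $L/2+2s<3L/4$ and its wrapped left edge at $L+1-2s>3L/4$, so it still misses the column near $3L/4$, which is all your argument uses.
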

\begin{proof}
    The proof is nearly the same as Proposition \ref{prop:ssbfiniteancilla}. Indeed, the ground state algebra of $H_{\rm TC}\otimes 1+1\otimes H_0$ has long string operators $\prod_{e \in \gamma} Z_e P$ and $\prod_{e \cap \gamma' \neq \varnothing} X_e P$ (with $P$ being the toric code ground state projector) which are not factorizable over a bipartition of the torus (such as $Y = [1,L/2] \times [1,L']$ and $Z = [L/2+1,L] \times [1,L']$) unless $2s \ge L/4$, because of their mutual braiding. Therefore, via Proposition~\ref{propfactorization},
    any spread-$s$ QCA which disentangles $H_{\rm TC}\otimes 1+1\otimes H_0$ must have $s \ge L/8$. Note that exchanging $L$ and $L'$ the lower bound on the spread of the disentangler is controlled by the length of the longest cycle of the torus.
\end{proof}

As in Proposition \ref{prop:ssbfiniteancilla}, the theorem holds without any assumption on $H_0$ (other than a gap), meaning that the obstruction to on-siteability of $U(1)_{\rm TC}$ is non-invertible in the context of semi-bounded ancilla. Analogous to Proposition~\ref{prop:galoisssb}, on certain system sizes we can show that the toric code symmetries cannot be on-sited via \textit{any} unitary in the presence of finite-dimensional ancilla, whose dimensions do not scale with the system size. We formulate this obstruction in Appendix~\ref{app:galois} and Proposition~\ref{prop:galoistc}.

In fact, an analogous argument follows for any string-net topologically-ordered model with non-factorizable lattice higher-form symmetries.
\begin{cor}
    All string-net models are non-on-siteable in finite depth, even in the presence of semi-bounded ancilla.
\end{cor}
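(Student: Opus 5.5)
The plan is to reduce the corollary to Proposition~\ref{propfactorization}, following the proof of Proposition~\ref{prop:tcrotors}. For a string-net model built from a fusion category $\mathcal{C}$ on a torus of linear size $L$, I take the generator to be the commuting-projector Hamiltonian $H = \sum_v (1-Q_v) + \sum_p (1-B_p)$, suitably normalized so that $e^{2\pi i H}=1$. Its ground space is finite-dimensional and spanned by the anyon labels of the Drinfeld center $Z(\mathcal{C})$ threading one cycle. The ground state algebra $\mathcal{A}_{\rm gs}$ therefore contains the closed string operators $W_a(\gamma)P$ along non-contractible cycles, for $a \in Z(\mathcal{C})$. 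Each is localizable to a thin neighborhood of its cycle $\gamma$. The semi-bounded ancilla contributes only a tensor factor $\mathcal{A}^0_{\rm gs}$, which plays no role, exactly as in Propositions~\ref{prop:ssbfiniteancilla} and \ref{prop:tcrotors}.

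The key steps are as follows. First, choose a string-net with nontrivial topological order, that is, with $Z(\mathcal{C})$ not equivalent to $\mathrm{Vec}$. This always holds unless $\mathcal{C}$ is trivial: $\mathrm{FPdim}\, Z(\mathcal{C}) = \mathrm{FPdim}(\mathcal{C})^2 > 1$, and $Z(\mathcal{C})$ is modular. By modularity of $Z(\mathcal{C})$ there is a pair $a,b$ with nontrivial $S$-matrix element. Hence the string operator $W_a(\gamma)$ along the $L'$-cycle and the string operator $W_b(\gamma')$ along the transverse $L$-cycle fail to commute on the ground space; in fact their ground-space matrices generate all of $\mathrm{End}$(ground space). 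Second, bipartition the torus as $Y=[1,L/2]\times[1,L']$ and $Z=[L/2+1,L]\times[1,L']$, and suppose that $W_b(\gamma')P=\sum_i a_i b_i$ with $a_i\in\mathcal{A}_{\rm gs}(Y^{+2s})$ and $b_i\in\mathcal{A}_{\rm gs}(Z^{+2s})$. Third, if $2s<L/4$, place $\gamma$ inside a strip disjoint from $Y^{+2s}$ and, by deforming $W_a$, also inside a strip disjoint from $Z^{+2s}$. Here I use that $W_a(\gamma)P=W_a(\tilde\gamma)P$ for homologous cycles, since string operators are topological on the ground space. By locality of the net, $W_a P$ then commutes with every $a_i$ and every $b_i$, and hence with $W_b(\gamma')P$. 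This contradicts the braiding. Proposition~\ref{propfactorization} then gives $s\ge L/8$, so no finite-depth circuit, whose spread is independent of $L$, can on-site the model.

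I expect the main obstacle to be the first step, carried out for a general $\mathcal{C}$. In the toric code the two loops anticommute outright. For general string-nets, $W_a$ and $W_b$ need not be unitary. The correct statement is therefore that the algebra generated by all $W_a(\gamma)P$ is not central in $\mathcal{A}_{\rm gs}$. It then suffices to find some $x \in \mathcal{A}_{\rm gs}$ localizable to a vertical strip in either half with $[x, W_b(\gamma')P]\neq 0$. I would argue this from the fact that $\{W_a(\gamma)P\}$ and $\{W_b(\gamma')P\}$ are related by the invertible $S$-matrix, so $\{W_b(\gamma')P\}$ alone acts as a maximal commutative (Verlinde) subalgebra, not the full matrix algebra, whenever $Z(\mathcal{C})$ is nontrivial. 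A secondary point is to check that deforming the strings stays within the ground state algebra and that their supports are thin, of width $O(1)$. Both facts are standard for the Levin--Wen construction.
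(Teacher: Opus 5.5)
Your proposal is correct and takes essentially the same approach as the paper, which simply asserts that the argument of Proposition~\ref{prop:tcrotors} carries over, via Proposition~\ref{propfactorization}, to any string-net whose closed string operators (lattice higher-form symmetries) are non-factorizable. Your use of modularity of $Z(\mathcal{C})$ to guarantee a non-commuting pair of transverse loops, together with excluding the trivial case $\mathcal{C}=\mathrm{Vec}$, supplies details the paper leaves implicit; your worry about non-unitary $W_a$ does not matter, since the locality-based commutation argument never uses unitarity.
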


\subsubsection{Obstruction via Hall conductivity}

There is another argument based on Hall conductance which proves that $U(1)_{\rm TC}$ cannot be on-siteable using finite-dimensional ancilla. 
This discussion follows Ref. \cite{Hsin:2025jot} closely. 
We first note that a $U(1)_{\rm TC}$ rotation by $2\pi$ in half of space creates a fermion line operator $\prod_{j\in C,j'\in \bar C}Z_jX_{j'}$. 
This is the $2\pi$-flux, also known as the vison, of $U(1)_{\rm TC}$. 
Next, it is known that the topological spin $\theta_v$ of the vison is related to the quantum Hall conductance $\sigma_\text{H}$ of the $U(1)$ symmetry by $\theta_v = e^{i \pi \sigma_\text{H}}$ \cite{PhysRevLett.110.046801,Kapustin:2020bwt,Cheng:2022nds}.
Since the vison of $U(1)_{\rm TC}$ is a fermion with spin 1/2, i.e., $\theta_v= -1$, we conclude that its Hall conductance is
\begin{equation}\label{eq:hall}
U(1)_{\rm TC}:~~~\sigma_\text{H} = 1 ~~\text{mod}~~2.
\end{equation}
Finally, the authors of Ref.~\cite{2019CMaPh.373..763K} show that an on-site $U(1)$ symmetry in a commuting-projector Hamiltonian with a finite-dimensional local Hilbert space cannot have nonzero Hall conductance. 
This proves that $U(1)_{\rm TC}$ is not on-siteable using finite-dimensional ancilla.

It is useful to provide a field-theoretic perspective on the $U(1)_{\rm TC}$ symmetry. We view $U(1)_\mathrm{TC}$ as a global symmetry of the toric code Hamiltonian itself. The low-energy field theory describing the ground space is the $\mathbb{Z}_2$ gauge theory coupled to a background gauge field $A$ for $U(1)_{\rm TC}$ \cite{Hsin:2025jot}:
\begin{equation}\label{eq:Z2CS}
{2\over 2\pi} adb+ {1\over 2\pi }adA +{1\over 2\pi }bdA
\end{equation}
where $a,b$ are dynamical $U(1)$ gauge fields.  

Even though the microscopic $U(1)_{\rm TC}$ acts nontrivially on the local lattice operators (see Eq. \eqref{eq:action}), it does not act on any local operator in the continuum, as there is no nontrivial gauge-invariant local operator in the $\mathbb{Z}_2$ gauge theory. Instead, the Wilson lines $e^{i \oint a}$ and $e^{i\oint b}$, corresponding to the $e$ and $m$ anyons, carry fractional $U(1)_{\rm TC}$ charges. 

To see the Hall conductance, we can locally integrate out $a$, which sets $b=-A/2$. Substituting this back into the action yields the effective response $-{1\over 4\pi }AdA$, which matches Eq. \eqref{eq:hall}. The mod 2 ambiguity reflects the freedom to add the bosonic 2+1d local counterterm ${2\over 4\pi} AdA$.

The effective odd Hall conductance in Eq. \eqref{eq:hall} is related to the 't Hooft anomaly between $U(1)_{\rm TC}$ and time-reversal (or reflection) symmetry in Ref. \cite{Gioia:2025bhl}. This is the bosonic version of the well-known ``parity" anomaly \cite{Niemi:1983rq,Redlich:1983kn,Redlich:1983dv,Alvarez-Gaume:1984zst,Witten:2016cio}. 
Before we activate the $U(1)_{\rm TC}$ background gauge field $A$, the $\mathbb{Z}_2$ gauge theory Lagrangian ${2\over 2\pi}adb$ is invariant under the time-reversal transformation $a_0\to -a_0,  a_i\to a_i,b_0\to b_0, b_i\to -b_i$. However, after coupling to the background $U(1)_{\rm TC}$ gauge field $A$, the coupled Lagrangian \eqref{eq:Z2CS} is no longer invariant under any time-reversal transformation. This shows the anomaly between $U(1)_{\rm TC}$ and time-reversal.

\subsection{On-siteability constructions with unbounded ancilla}
\label{sec:onsiteTC}

The above arguments no longer apply when we deal with infinite-dimensional unbounded ancilla. From the conjectured classification of anomalies via the bulk-boundary correspondence, since there are believed to be no $U(1)$ or $U(1) \times U(1)$ SPT phases in 3+1d, one may expect these symmetries to be on-siteable. Previous work \cite{shirley2025anomalyfreesymmetriesobstructionsgauging,Tu_2026} demonstrated that in the case of finite-dimensional ancillae, obstructions to on-siteability exist which have no corresponding SPT phase. However, since these obstructions are based on the non-triviality of the translation QCA, which becomes trivial with infinite-dimensional ancillae \cite{Jones:2026gmd}, we do not currently know of any non-on-siteable symmetry which does not have a corresponding SPT. We will make more comments on this in the discussion.

To apply Proposition~\ref{prop:disentanglerlemma}, we must be able to locally factorize the toric code Hamiltonian into mutually-commuting $O(1)$ pieces that individually maintain the same charge quantization. This turns out to be simple for the $4\pi$-periodic extension of $U(1)_e \times U(1)_m$, which allows half-charge quantization as opposed to integer-charge quantization.
\begin{prop}
    $U(1)_e \times U(1)_m$ is on-siteable in finite depth with the addition of half-charge unbounded ancillae.
    \label{prop:halfchargeTC}
\end{prop}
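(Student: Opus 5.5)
The natural route is to apply Proposition~\ref{prop:disentanglerlemma} directly, as for $U(1)_{\rm SSB}$, once the $4\pi$-periodic symmetry is written as a product of commuting local pieces. With half-charge quantization the generators $\sum_v \frac14(1-A_v)$ and $\sum_p \frac14(1-B_p)$ split termwise. Each $\frac14(1-A_v)$ has spectrum $\{0,\tfrac12\}$, so $U_{e,v}(\theta)=e^{i\theta\frac14(1-A_v)}$ is a unitary representation of the $4\pi$-periodic circle $\mathbb{R}/4\pi\mathbb{Z}$, i.e.\ $U_{e,v}(4\pi)=1$. Likewise $U_{m,p}(\theta)=e^{i\theta\frac14(1-B_p)}$ satisfies $U_{m,p}(4\pi)=1$. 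All $A_v$ and $B_p$ commute with each other, so these are mutually commuting $U(1)\times U(1)$ representations of bounded support (at most four qubits each). Their product over all $v,p$ reproduces $U_e(\theta_e)U_m(\theta_m)$. This gives a local factorization in the sense of the definition above, with no boundary terms $h_{\partial S}$ needed. The point is that, unlike the integer-quantized case, there is no need to telescope terms against each other.

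Next I make the disentangler explicit, mirroring the proof of Proposition~\ref{prop:ssbonsiterotor}. At every vertex $v$ and plaquette $p$ I add a rotor with angle $\phi_v,\phi_p\in\mathbb{R}/\mathbb{Z}$ and conjugate number operators $N_v,N_p$. These are taken with half-integer spectrum, $N=-\frac{i}{4\pi}\partial_\phi$ with eigenvalues in $\tfrac12\mathbb{Z}$; this is what ``half-charge'' ancilla means here. The total generators are
\[
H_e = \sum_v \tfrac14(1-A_v) + \sum_v N_v,\qquad H_m = \sum_p \tfrac14(1-B_p) + \sum_p N_p .
\]
I then define $W=\prod_v e^{4\pi i\phi_v P_v}\prod_p e^{4\pi i \phi_p Q_p}$, where $P_v=\frac12(1-A_v)$ and $Q_p=\frac12(1-B_p)$ are projectors. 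Each factor is well defined on the rotor circle because $P_v,Q_p$ are integer valued. The factors commute, since the $A_v,B_p$ commute and each acts on its own rotor, so $W$ is depth one up to a bounded coloring of the lattice. Using $e^{4\pi i\phi P}N e^{-4\pi i\phi P}=N-\tfrac12 P$, I get
\[
WH_eW^\dagger=\sum_v N_v,\qquad WH_mW^\dagger=\sum_p N_p .
\]
Both are on-site generators, which makes $U_e\times U_m$ on-site. Equivalently, this is the operator $\prod_i V_i^\dagger$ built in the proof of Proposition~\ref{prop:disentanglerlemma} with $G$ the $4\pi$-periodic torus.

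The computations themselves are routine; the main care point is bookkeeping of the charge normalization. I must check that the rotor regular representation of the $4\pi$-periodic $U(1)$ is exactly the half-integer-spectrum number operator. I must also check that $W$ is single-valued in $\phi$, which requires the coefficient of $\phi$ to be $4\pi$ times an integer-valued operator. Finally, I need to note why this does not already prove the integer-charge statement. The pieces $U_{e,v}$ are not $2\pi$-periodic, and the ancillae carry half charges, so this construction realizes the extension rather than $U(1)_e\times U(1)_m$ with integer charges; the integer case is the harder problem deferred to the later propositions.
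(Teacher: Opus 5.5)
Your strategy matches the paper's: split the generators into commuting pieces $\frac14(1-A_v)$ and $\frac14(1-B_p)$, each generating a $4\pi$-periodic $U(1)$, and disentangle them with rotor-controlled phases. The paper's footnote notes that this is Proposition~\ref{prop:disentanglerlemma} applied to $U_e(2\theta)$, $U_m(2\theta)$. Your first paragraph, which reduces the statement to Proposition~\ref{prop:disentanglerlemma}, is sound.

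The explicit circuit in your second paragraph, however, is off by a factor of two and does not disentangle. With $N=-\frac{i}{4\pi}\partial_\phi$ you have $[N,\phi]=-\frac{i}{4\pi}$, so $e^{ia\phi P}Ne^{-ia\phi P}=N-\frac{a}{4\pi}P$. For $a=4\pi$ this is $N-P$, not $N-\frac12 P$. Since $\frac14(1-A_v)=\frac12 P_v$, your $W$ gives $WH_eW^\dagger=\sum_v\big(N_v-\frac14(1-A_v)\big)$. This only flips the sign of the vertex terms and is not on-site; the same happens for $H_m$.

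The slip comes from your single-valuedness criterion. For $\phi\in\mathbb{R}/\mathbb{Z}$, $e^{ia\phi P}$ is invariant under $\phi\mapsto\phi+1$ as soon as $aP$ has spectrum in $2\pi\mathbb{Z}$. With integer-valued $P_v$, the coefficient $2\pi$ already suffices; $4\pi$ is not required. The correct disentangler is $W=\prod_v e^{2\pi i\phi_vP_v}\prod_p e^{2\pi i\phi_pQ_p}$. It shifts $N_v\mapsto N_v-\frac12P_v=N_v-\frac14(1-A_v)$ and yields $WH_eW^\dagger=\sum_vN_v$ and $WH_mW^\dagger=\sum_pN_p$. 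This is exactly the paper's circuit. The paper keeps the standard $N_v=-\frac{i}{2\pi}\partial_{\phi_v}$ and writes the half-charge ancilla generator as $\frac12 N_v$, which is your $N_v$.

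The same correction applies to your identification with the construction of Proposition~\ref{prop:disentanglerlemma}. Parametrizing the $4\pi$-periodic circle by $g=4\pi\phi$, the factor entering $V_i$ is $U_{e,v}(4\pi\phi)=e^{i4\pi\phi\cdot\frac14(1-A_v)}=e^{2\pi i\phi P_v}$, not $e^{4\pi i\phi P_v}$. The fix is a one-symbol change, and it also repairs your closing bookkeeping remark about the coefficient of $\phi$.
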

\begin{proof}
    Place a rotor at each vertex and plaquette with rotor angles $\phi_v,\xi_p\in\mathbb{R}/\mathbb{Z}$. We will construct an explicit circuit that simultaneously disentangles the Hamiltonians
    \begin{align}
        H_e=\sum_v\left[\frac{1}{4}(1-A_v)\otimes 1+1\otimes \frac{1}{2}N_v\right]\quad,\\
        H_m=\sum_p\left[\frac{1}{4}(1-B_p)\otimes 1+1\otimes \frac{1}{2}N_p\right]\quad,
    \end{align}
    where $N_{v}=-\frac{i}{2\pi}\frac{\partial}{\partial\phi_v}$ and $N_{p}=-\frac{i}{2\pi}\frac{\partial}{\partial\xi_p}$ is the number operator of the rotor at $v,p$. Notice that the rotor term is \textit{half-charged} due to the $\frac{1}{2}$ prefactor, rather than integer-charged. This means that the $U(1)_e\times U(1)_m$ symmetry, generated by the above Hamiltonians, is $4\pi$-periodic.

    One can check that the unitary
    \begin{align}
        W=\prod_v e^{2\pi i\phi_v \frac{1}{2}(1-A_v)}\prod_p e^{2\pi i\xi_p \frac{1}{2}(1-B_p)}\,,
    \end{align}
    is both well-defined in the rotor space and fully disentangles the above Hamiltonians, and therefore on-sites $U(1)_e \times U(1)_m$.\footnote{This construction is equivalent to applying Proposition \ref{prop:disentanglerlemma} to the symmetries
    \[U_e'(\theta) = U_e(2\theta)\,, \\
    U_m'(\theta) = U_m(2\theta)\,.\]}
\end{proof}
It follows that $U(1)_{\rm TC}\subset U(1)_e \times U(1)_m$ is also on-siteable with half-charged unbounded generators.

It is less obvious how to on-site the symmetry with integer-charged ancilla. First, we give a physically-intuitive construction for the on-siting $U(1)_{\rm TC}$ with the additional presence of integer-charged unbounded ancilla \textit{and} fermionic ancillae, inspired by the bosonization transformation of Ref.~\cite{CHEN2018234}.
\begin{prop}
    \label{prop:onsitingTC}
    $U(1)_{\rm TC}$ is on-siteable in finite depth in the presence of integer-charged unbounded ancillae and fermionic ancillae.
\end{prop}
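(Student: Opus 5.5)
The strategy is the same as in Proposition~\ref{prop:ssbonsiterotor}: exhibit a \emph{local factorization} of the generator, now allowing fermionic ancillae, and then invoke Proposition~\ref{prop:disentanglerlemma} with one integer-charged rotor per local piece. The obstruction to factorizing $H_{\rm TC}$ by itself is that each $e$ or $m$ excitation carries charge $1/2$. Any local piece built only from $A_v$ and $B_p$ therefore has half-integer eigenvalues; in the SSB case, by contrast, the pairs of domain walls telescoped. The natural remedy is to pair each vertex $v$ with a neighboring plaquette $p(v)$, say the one to its northeast. The pair $e_v m_{p(v)}$ is a fermion $f$ of integer charge, and we bind $f$ to a physical fermion so that the composite becomes a local, integer-charged object.

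\textbf{Step 1 (identity for a vertex--plaquette pair).} For the pair $(v,p(v))$ I would use
\[\tfrac14(1-A_v)+\tfrac14(1-B_{p(v)})+\tfrac14(1-A_vB_{p(v)}) \in \{0,1\}.\]
This combination is integer-valued and a projector. Hence $\sum_v$ of these terms generates an honest local $U(1)$, up to the excess $\sum_v \frac14(1-A_vB_{p(v)})$. The task is then to cancel this excess using fermionic ancillae.

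\textbf{Step 2 (bosonization circuit).} Place a complex fermion $c_v$ on each vertex (or plaquette) with on-site generator $\sum_v n_v$, so the total generator is $H_{\rm TC}+\sum_v n_v$. Following the bosonization of Ref.~\cite{CHEN2018234}, I would construct a finite-depth fermionic circuit $U$. It should dress the toric-code stabilizers so that $A_vB_{p(v)}$ is bound to the fermion parity $(-1)^{n_v}$, with the $f$-string operators mapped to local hoppings of $c$ times short strings. The required conjugated generator is
\[U\Big(H_{\rm TC}+\sum_v n_v\Big)U^\dagger=\sum_v h_v + \sum_v h_{\partial v},\]
where each $h_v$ is supported near $v$ and has spectrum in $\mathbb{Z}$, the $h_v$ mutually commute, and the boundary terms $h_{\partial v}$ telescope, for instance because $\prod_vA_v=\prod_pB_p=1$ combined with total-parity conservation. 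This is exactly the form \eqref{eq:hboundaryfactorizability}. Concretely, I would aim for $h_v$ to be the Step 1 projector plus $n_v$, with the cross term $\frac14(1-A_vB_{p(v)})$ cancelled against $\frac12 n_v$-type pieces that combine with neighbors into integers.

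\textbf{Step 3 (rotor disentangler).} With the factorization in hand, add a rotor $\phi_v$ per piece and set
\[W=\prod_v e^{2\pi i\phi_v h_v}.\]
As in Proposition~\ref{prop:ssbonsiterotor}, $W$ is well defined because each $h_v$ has integer spectrum, and the factors commute. Then $WU$ maps the total generator to $\sum_v N_v$ plus on-site fermion terms, completing the proof.

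The main obstacle is Step 2: arranging simultaneously integer quantization, mutual commutativity of the local pieces, and exact telescoping. This is delicate because $A_v$ and $B_{p(v)}$ each anticommute with string operators, and naive half-charge cancellations reintroduce noncommuting terms. If the direct ansatz fails, I would first try conjugating by the Chen--Kapustin--Radi\v{c}evi\'c map and writing the resulting stabilizers as local fermion-parity projectors, where integrality should be manifest.
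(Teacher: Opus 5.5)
\textbf{There is a genuine gap: Step 2 is the whole content of the proposition, and you have not supplied it.}

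Your Steps 1 and 3 are fine. Your projector $1-\tfrac14(1+A_v)(1+B_{p(v)})$ differs from the paper's $P_v=\tfrac14(1-A_v)(1-B_{p(v)})$ only in the sign of the leftover $\tfrac14(1-A_vB_{p(v)})$. Once integer-valued, mutually commuting local pieces exist, the rotor circuit $\prod_v e^{2\pi i\phi_v h_v}$ is exactly how the paper finishes.

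What is missing is any explicit set of local, mutually commuting, integer-valued operators whose sum is $\sum_v\tfrac14(1-A_vB_{p(v)})$. The concrete route you sketch also cannot work as stated:
\begin{itemize}
\item Conjugation by a unitary preserves the identity $\prod_v A_vB_{p(v)}=\prod_vA_v\prod_pB_p=1$. By contrast, $\prod_v\bigl(\pm(-1)^{n_v}\bigr)$ is, up to a sign, the total fermion parity. So no circuit can send each $A_vB_{p(v)}$ to $\pm(-1)^{n_v}$.
\item Cancelling the excess against ``$\tfrac12 n_v$-type pieces'' only relocates the half-integer problem.
\item The multiplicative constraints $\prod_vA_v=\prod_pB_p=1$ are not what makes correction terms telescope.
\end{itemize}

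The missing idea is an additive telescoping that rests on an explicit operator identity. Put one \emph{uncharged} complex fermion on each plaquette; only one Majorana $\gamma_p$ per plaquette is used, so no binding to fermion parity is needed. The paper then writes $A_vB_{p(v)}=\chi_{E(v)}\chi_{W(v)}\chi_{N(v)}\chi_{S(v)}$, with the following structure:
\begin{itemize}
\item Each $\chi_e$ is attached to an edge $e\ni v$. It is a short Pauli string containing $Y_e$ that hops the $f=em$ composite across $e$, dressed by the Majorana bilinear of the two plaquettes adjacent to $e$. This is the gauged hopping of the bosonization in Ref.~\cite{CHEN2018234}.
\item The $\chi_e$ are commuting involutions that also commute with every $A_{v'}B_{p(v')}$.
\item Each $\chi_e$ is shared by the two endpoints of its edge: $\chi_{N(v)}=\chi_{S(v+\hat y)}$ and $\chi_{E(v)}=\chi_{W(v+\hat x)}$.
\end{itemize}
Then define
\[
O_v=\tfrac14\bigl(1-A_vB_{p(v)}+\chi_{N(v)}+\chi_{E(v)}-\chi_{S(v)}-\chi_{W(v)}\bigr).
\]
\begin{itemize}
\item Using $A_vB_{p(v)}=\prod\chi$ and checking the sixteen sign choices, $O_v$ has spectrum in $\{-1,0,1\}$.
\item The $O_v$ commute with one another.
\item $\sum_v O_v=\sum_v\tfrac14(1-A_vB_{p(v)})$, because every $\chi_e$ appears once with each sign.
\end{itemize}
Without such an explicit decomposition, your Step 2 remains a program rather than a proof.
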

\begin{proof}
   We add a rotor $\phi_v\in \mathbb{R}/\mathbb{Z}$ on every vertex $v$ and denote its conjugate by $N_v = - {i\over 2\pi} {\partial \over \partial \phi_v}$. 
    We will explicitly construct the disentangling circuit for the Hamiltonian
    \begin{align}
        H=H_{\rm TC}\otimes 1+\sum_{v} 1\otimes N_v\quad,
    \end{align}
    where we have added integer-charged rotors at all vertices and a chargeless complex fermion, with Majorana operators $\gamma$ and $\bar\gamma$, at each plaquette. 
    First, notice that we can rewrite $H_{\rm TC}$ as
    \begin{align}
        H_{\rm TC}=\sum_v\left[\frac{1}{4}(1-A_v B_{p(v)})+\frac{1}{4}(1-A_v)(1-B_{p(v)})\right], \label{eq:HTC_decompose}
    \end{align}
    where we choose $p(v)$ to be the north-east plaquette to the vertex $v$. The first term is locally half-integer quantized, and the second term is a projector that is integer-quantized. Denote the projector as
    \begin{align}
        P_v=\frac{1}{4}(1-A_v)(1-B_{p(v)})\,.
        \label{eq:PV}
    \end{align}
    Now, let us apply the disentangling FDQC to remove the second term via
    $$V=\prod_v e^{ 2\pi i \phi_v P_v}\,.$$
    It satisfies
    \begin{align}
        V (1\otimes N_v) V^\dag &= 1\otimes N_v-P_v\otimes 1\,,\nonumber\\
        VHV^\dag=\sum_v \frac{1}{4}&(1-A_v B_{p(v)})\otimes 1+\sum_{v}1\otimes N_v\,.
        \label{eq:VHVdag}
    \end{align}
    Now, we will modify the first term, which is locally half-integer quantized, to become integer-quantized by adding telescoping terms that eventually add to zero. Specifically, we define the following operators $\chi_{E(v),W(v),N(v),S(v)}$ at vertex $v$ as:
    \begin{figure}[H]
    \centering
    \includegraphics[width=0.8\linewidth]{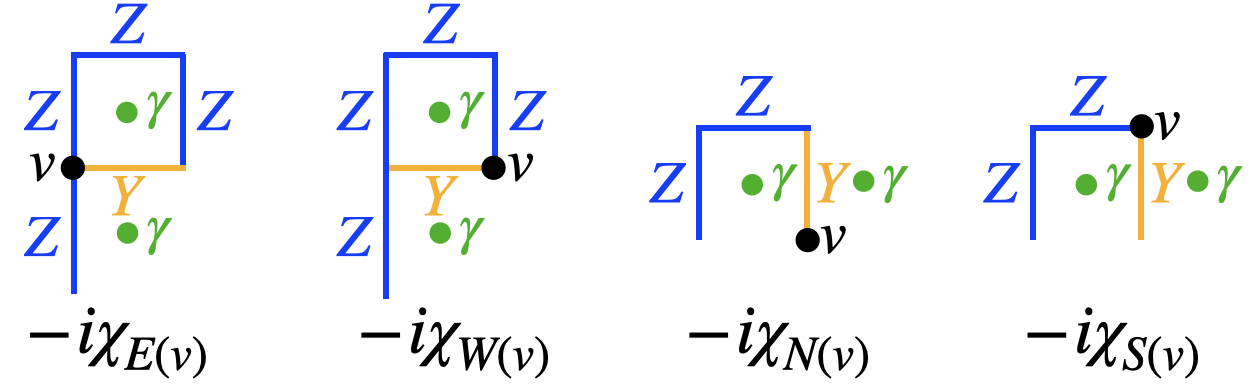}
    \label{fig:TCpieces}
    \end{figure}
    
    Note that in each plaquette of $\chi_{E(v),W(v),N(v),S(v)}$, we choose to use the $\gamma$ Majorana fermion as opposed to the second $\bar\gamma$ Majorana fermion. The ordering rule of the Majorana operators is $\gamma_{\rm up} \gamma_{\rm down}$ for $\chi_{E(v),W(v)}$ and $\gamma_{\rm right} \gamma_{\rm left}$ for $\chi_{N(v),S(v)}$. These operators represent the decomposition of $e$, $m$ anyons and $\gamma$ Majorana fermion traveling in a loop around vertex $v$, with the $e$ anyon taking a more complicated path to ensure the correct commutation relations. One can show that $\chi_{E(v)}\chi_{W(v)}\chi_{N(v)}\chi_{S(v)}=A_vB_{p(v)}$, and that $\chi_{E(v),W(v),N(v),S(v)}$ commute with each other as well as $A_{v'}B_{p(v')},\forall v'$. These operators allow us to rewrite Eq.~\ref{eq:VHVdag} as
    \begin{align}
        VHV^\dag=\sum_v O_v\otimes 1+\sum_{v}1\otimes N_v\,,
        \label{eg:TClocalfactorization}
    \end{align}
    where
    \begin{equation}
    O_v=\frac{1}{4}(1-A_v B_{p(v)}+\chi_{N(v)}+\chi_{E(v)}-\chi_{S(v)}-\chi_{W(v)}) \label{eq:OV}
    \end{equation}
    which is now integer valued, and $[O_v,O_{v'}]=0$, $\forall v,v'$. Now, define
    \begin{align}
        W=\prod_v e^{2\pi i \phi_v O_v}\,,\quad U:=WV\,,
    \end{align}
    such that
    $$UHU^\dag=\sum_{v}1\otimes N_v\quad.$$
    So we see that $U$ is a finite-depth circuit that has completely disentangled the toric code Hamiltonian, and can also be used to on-site the $U(1)_{\rm TC}$.
\end{proof}

Inspired by the disentangling procedure and local factorization of Eq.~\ref{eg:TClocalfactorization} into local integer spectrum pieces, we may construct an SRE eigenstate of $U(1)_{\rm TC}$ with purely fermionic ancillae, \textit{without} the need for rotors and in fact without the need for adding any new terms to the toric code Hamiltonian.
\begin{prop}
    \label{prop:TCSREfinitedim}
    $U(1)_{\rm TC}$ possesses an SRE eigenstate in the presence of uncharged fermionic ancillae.
\end{prop}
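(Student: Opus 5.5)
The plan is to reuse the local factorization of Eq.~\eqref{eg:TClocalfactorization}, but without any rotors. Set all rotor contributions to zero. The decomposition of Proposition~\ref{prop:onsitingTC} then says that, on the toric code Hilbert space tensored with one complex fermion per plaquette (on which $H_{\rm TC}$ acts trivially), we have
\begin{equation}
H_{\rm TC}\otimes 1=\sum_v \left(O_v+P_v\right).
\end{equation}
Here $O_v$ is given by Eq.~\eqref{eq:OV} and $P_v$ by Eq.~\eqref{eq:PV}. The $\chi$ terms in the $O_v$ telescope to zero, because the products of $\chi$ operators on shared edges cancel.

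The $O_v$ mutually commute, are local, and have integer spectrum. However, the $O_v$ need not commute with the $P_{v'}$. So instead of simultaneously diagonalizing everything, I will look for a product-like state annihilated (or fixed) by simple local conditions.

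My concrete candidate is the following. Take $\ket{\psi}=\mathcal{C}\bigotimes_j\ket{\phi_j}$, where $\mathcal{C}$ is a finite-depth circuit built from the $\chi$'s. It should be chosen so that $\ket{\psi}$ is a joint eigenstate of all the mutually commuting operators $\chi_{E(v)},\chi_{W(v)},\chi_{N(v)},\chi_{S(v)}$. These $\chi$'s are products of Pauli and Majorana operators, so they form a commuting set of local Pauli-type stabilizers. A joint eigenstate of a complete local commuting stabilizer set that is locally generated (for instance, by a Clifford FDQC mapping them to single-site $Z$'s and fermion parities $i\gamma\bar\gamma$) is SRE.

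The key steps are as follows.
\begin{enumerate}
\item Check that the group generated by the $\chi$'s, possibly supplemented by extra local commuting Pauli or Majorana operators to make it maximal, is a stabilizer group with no global constraints obstructing a product-state representative. This is where the fermionic ancillae matter: the $\gamma$ strings convert the $e$--$m$ braiding into a commuting set, exactly as in the bosonization of Ref.~\cite{CHEN2018234}.
\item On a joint eigenstate, each $A_vB_{p(v)}=\chi_{E}\chi_W\chi_N\chi_S$ takes a definite value. Choosing the signs so that $O_v$ has a definite integer eigenvalue, we get $\sum_v O_v\ket{\psi}=(\text{integer})\ket{\psi}$.
\item Handle the $P_v$ terms. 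Rather than treating them separately, I will use $H_{\rm TC}\otimes 1=\sum_v \tfrac14(1-A_vB_{p(v)})+\sum_v P_v$ and directly check that $\ket{\psi}$ is an eigenstate of $H_{\rm TC}$. It suffices for $\ket{\psi}$ to be an eigenstate of every $A_v$ and $B_p$ separately; otherwise it must be an eigenstate of the combination. So I will try to include all $A_v$ and $B_p$ in the stabilizer group, together with the $\chi$ strings that do commute with them, and show the enlarged commuting group is still locally generated once fermions are present.
\end{enumerate}

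The main obstacle is showing that the resulting stabilizer state is SRE rather than topologically ordered. A joint eigenstate of all the $A_v$ and $B_p$ alone is the toric code ground state (or an excited state), which is LRE. The point must be that the $\chi$'s, i.e.\ the short $e$, $m$ and $\gamma$ strings, are added as stabilizers in place of the long loop operators. This condenses the bound state of $f$ with the ancilla fermion $\gamma$, which gives a trivial (invertible fermionic) phase.

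I would first try to exhibit an explicit Clifford-plus-fermionic finite-depth circuit, modeled on the 2d bosonization map, that sends each $A_vB_{p(v)}$ and each $\chi$ to a single-site operator; the image of a product state under its inverse then gives $\ket{\psi}$. If that is awkward, the fallback is a counting argument: show the commuting group has independent local generators numbering equal to the number of degrees of freedom, with no logical operators. This gives a unique state on the torus. The state should then be SRE as an invertible stabilizer state. The fallback must still establish SRE by exhibiting an FDQC, or else confront the distinction between invertible and SRE states.
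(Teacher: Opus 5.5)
\textbf{There is a genuine gap in your step 3.} You propose to make $\ket{\psi}$ an eigenstate of every $A_v$ and every $B_p$ separately. That cannot work, for two reasons.

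First, the algebra does not allow it. Each $\chi$ is an open segment of the $f$-loop around a vertex: it moves an $e$ and an $m$ by one step, dressed by Majoranas. So it anticommutes with the individual vertex and plaquette terms at its ends, and commutes only with the products $A_vB_{p(v)}$. The individual $A_v$, $B_p$ therefore cannot be adjoined to the $\chi$ group.

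Second, and more fundamentally, no joint eigenstate of all $A_v$ and $B_p$ is SRE, even with fermionic ancillae. Take two far-apart parallel noncontractible $Z$-loops $\bar Z,\bar Z'$. Their product $\bar Z\bar Z'$ is a product of plaquette terms, so it has a definite sign. Clustering of an FDQC-prepared state then forces $|\langle\bar Z\rangle|=1$. The same holds for a dual $X$-loop $\bar X$, which anticommutes with $\bar Z$, a contradiction. So adding $\chi$'s cannot rescue this route. It also contradicts your own picture: once $f\gamma$ is condensed, the individual $A_v$, $B_p$ are not good quantum numbers.

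\textbf{The missing idea is to work with the combination, with a specific sign.} Choose the $\chi$-eigenvalues $(-1)^{s_e}$ so that $\prod_{e\ni v}(-1)^{s_e}=-1$, i.e.\ $A_vB_{p(v)}\ket{\psi}=-\ket{\psi}$ for every $v$. This is solvable when the number of vertices is even, as the paper assumes; that parity is forced by $\prod_v A_vB_{p(v)}=1$. Then $(A_v+B_{p(v)})\ket{\psi}=A_v(1+A_vB_{p(v)})\ket{\psi}=0$. Hence $P_v\ket{\psi}=\tfrac14(1+A_vB_{p(v)})\ket{\psi}-\tfrac14(A_v+B_{p(v)})\ket{\psi}=0$, and $H_{\rm TC}\ket{\psi}=\tfrac{N_v}{2}\ket{\psi}$.

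The sign is essential. With $A_vB_{p(v)}=+1$ one gets $P_v=\tfrac12(1-A_v)$ on the state, which has no definite value. Your step 2 (``choose signs so that $O_v$ is definite'') is automatic on any joint $\chi$-eigenstate and does not touch this issue.

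\textbf{Your circuit idea for the SRE part is the right one, and no maximality or counting argument is needed.} Since $\chi_e$ carries $Y_e$, it anticommutes with $Z_e$. It commutes with every $Z_{e'}$ ($e'\neq e$) and every other $\chi$. So the commuting gates $W_e=\tfrac{1}{\sqrt2}(Z_e+\chi_e)$ form an FDQC $W=\prod_e W_e$ with $WZ_eW^\dagger=\chi_e$. The required SRE eigenstate is then $\ket{\psi}=W\big(\ket{\{Z_e=(-1)^{s_e}\}}\otimes\ket{\text{fermionic vac}}\big)$. This is exactly the paper's proof.
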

\begin{proof}
Note that a sufficient condition for $\ket \Psi$ to be an eigenstate of $H_{\rm TC}$ is $A_v B_{p(v)} \ket \Psi = - \ket \Psi$ ($\forall v$), where we use the same notation for $p(v)$ as in the proof of Proposition~\ref{prop:onsitingTC}. Indeed, since $A_v$ is an involution, it implies $(A_v + B_{p(v)}) \ket \Psi = 0$, so that the state is clearly an eigenstate of $\sum_v A_v + \sum_p B_p$.

To construct an SRE state $\ket \Psi$ with this property, we use the fact that $A_v B_{p(v)} = \prod_{e \ni v} \chi_e$ where $\chi_e$ was introduced in the proof of Proposition~\ref{prop:onsitingTC} with $e$ being the edge with the Pauli-$Y$ operator. It is thus sufficient to construct an SRE state $\ket \Psi$ such that for all edges $e$, we have $\chi_e \ket{\Psi} = (-1)^{s_e} \ket{\Psi}$ where the choice of signs is such that around a vertex we have $\prod_{e \ni v} (-1)^{s_e} = -1$. (We assume that the number of vertices is even for simplicity).

To obtain such a simultaneous SRE eigenstate for all $\chi_e$, first define the unitary involution $W_e = \frac{1}{\sqrt{2}} (Z_e + \chi_e)$ which satisfies $W_e Z_e W_e^\dagger = \chi_e$. Since $[W_e, W_{e'}] = 0$, we can define the finite-depth unitary circuit $W = \prod_e W_e$ such that $W Z_e W^\dagger = \chi_e$. The desired SRE state is thus simply
$$ \ket \Psi = W \; \left(\ket{ \{Z_e = (-1)^{s_e}\}} \otimes \ket{\textrm{fermionic vac}}\right). $$
\end{proof}

In fact, \textit{no} fermionic ancillae are actually required to on-site $U(1)_{\rm TC}$, or even the larger group $U(1)_e \times U(1)_m$. Here we give such a construction for on-siting $U(1)_e \times U(1)_m$ without fermionic ancilla, albeit with potentially less physical intuition.
\begin{prop}
    \label{prop:emonsite}
    $U(1)_{e}\times U(1)_m$ is on-siteable in finite depth in the presence of integer-charged unbounded ancilla.
\end{prop}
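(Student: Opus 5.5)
The plan is to apply Proposition~\ref{prop:disentanglerlemma} to $G=U(1)\times U(1)$. Concretely, I will exhibit a local factorization of $H_e=\sum_v \tfrac14(1-A_v)$ and $H_m=\sum_p\tfrac14(1-B_p)$, possibly after conjugating by a finite-depth circuit that acts on the spins and the integer-charged rotors together. The factorization should write each as a sum of bounded-support, mutually commuting pieces with integer spectrum, as in Eq.~\eqref{eq:hboundaryfactorizability}. In addition, every $e$-piece must commute with every $m$-piece. Given such a factorization, the circuit $\prod_v e^{2\pi i\phi_v O^e_v}\prod_p e^{2\pi i \xi_p O^m_p}$ disentangles both generators simultaneously, exactly as in the proof of Proposition~\ref{prop:ssbonsiterotor}.

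The model I will imitate is the SSB case, where
\[\tfrac14(1-Z_iZ_{i+1}) = P_i + \tfrac14(Z_{i+1}-Z_i),\]
and the single-site involutions $Z_i$ supply the telescoping boundary terms. The toric code analogue requires, for each vertex $v$, a neighbouring vertex $v'(v)$ (say the eastern neighbour along a fixed orientation) and commuting involutions $\tau_v$ with three properties:
\begin{enumerate}
\item each $\tau_v$ commutes with all $A_w$ and $B_p$;
\item the $\tau_v$ factorize the stars, $A_v=\tau_v\tau_{v'(v)}$;
\item the terms $\tfrac14(\tau_{v'}-\tau_v)$ telescope to zero around each closed row.
\end{enumerate}
With these, $\tfrac14(1-A_v)=\tfrac12(1+\tau_v)\tfrac12(1-\tau_{v'})+\tfrac14(\tau_{v'}-\tau_v)$, and the first summand is an integer-valued projector. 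No such $\tau_v$ exist in the bare spin Hilbert space, because of the anyon obstruction of Proposition~\ref{prop:tcrotors}. The rotor, however, contains a $\mathbb{Z}_2$ parity $(-1)^{N_v}$, which commutes with the on-site generator $N_v$.

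I will therefore first apply a finite-depth circuit, built from controlled-$Z$/CNOT-type gates between edge spins and rotor parities (e.g.\ $e^{i\pi N_v \cdot(\ldots)}$ and $X$-string controls), to engineer the identity. The circuit should map $A_v\otimes 1$ to something of the form $\tau_v\tau_{v'}$, with $\tau_v$ a dressed rotor parity, while keeping $\sum_v N_v$ on-site up to integer-quantized local corrections. The same construction is to be repeated on the dual lattice for $B_p$, using rotors on plaquettes.

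The main obstacle is that the $e$- and $m$-dressings must commute. Naively, $\tau_v$ behaves like an $e$-string endpoint and the $m$-analogue like an $X$-string endpoint, and their mutual braiding is exactly what forced the fermions $\chi$ into Proposition~\ref{prop:onsitingTC}. To evade this, I will try to let the rotor carry integer charge under both $U(1)_e$ and $U(1)_m$ through the $2\pi$-periodic shift $\phi_v\mapsto\phi_v+\tfrac12$. A half-shift can be realised by conjugating with $e^{2\pi i\phi_v Q}$ for integer $Q$, unlike the half charges of Proposition~\ref{prop:halfchargeTC}. The half-integer parts of the vertex and plaquette generators would then be absorbed into rotor parities that commute by construction, since they depend only on the $N$'s.

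Once the factorization is found, the remaining steps are routine checks: that the pieces commute, that each piece has integer spectrum, that the boundary terms sum to zero using $\prod_vA_v=\prod_pB_p=1$, and that the circuit has finite depth.
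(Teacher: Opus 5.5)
There is a genuine gap. The step you defer is the whole content of the proposition: an FDQC $C$ and dressed involutions with $C(A_v\otimes 1)C^\dagger=\tau_v\tau_{v'(v)}$, together with commuting $m$-analogues. As formulated, it cannot be carried out, with or without rotors.

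With $v'(v)$ the eastern neighbour, multiplying requirement~2 around a closed row gives $\prod_{v\in\mathrm{row}}A_v=\prod_v\tau_v^2=1$. But this product is $X$ on every vertical edge meeting the row, i.e.\ two parallel non-contractible dual $X$-loops. No unitary conjugation turns a non-identity operator into $1$, so requirements~2 and~3 are incompatible.

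A helical chain does not rescue this. For a chain segment $v_0,\dots,v_k$ lying in a row, requirement~2 gives $C\bigl(\prod_{j<k}A_{v_j}\otimes 1\bigr)C^\dagger=\tau_{v_0}\tau_{v_k}$, which is supported near the two endpoints. Yet $\prod_{j<k}A_{v_j}$ acts as $X$ on the $2k$ vertical edges flanking the segment. An FDQC of spread $s$ (and its inverse) enlarges supports by at most $s$, so this fails once $k$ exceeds a constant.

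The SSB telescoping $Z_iZ_{i+1}$ works because the boundary of an interval is two points. The boundary of a region of stars in 2D is a loop, so no choice of local $\tau_v$, rotor parities included, realizes your ansatz. Separately, conjugation by $e^{2\pi i\phi_vQ}$ with integer $Q$ shifts $N_v$ by an integer; it does not implement $\phi_v\mapsto\phi_v+\tfrac12$, which is generated by $(-1)^{N_v}$.

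The paper's proof never splits $\tfrac14(1-A_v)$ into integer local pieces. It starts from the naive disentangler $V=\prod_v e^{2\pi i\phi_v\frac14(1-A_v)}\prod_p e^{2\pi i\xi_p\frac14(1-B_p)}$. This has the desired covariance under the rotor charges but is not single-valued: $\phi_v\mapsto\phi_v+1$ multiplies it by $A_v$, and $\xi_p\mapsto\xi_p+1$ multiplies it by $B_p$.

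It then cancels this monodromy with
\[
W=\prod_e e^{i\pi(\phi_{T(e)}-\phi_{B(e)})\lfloor\xi_{L(e)}-\xi_{R(e)}\rfloor}\,X_e^{\lfloor\phi_{T(e)}-\phi_{B(e)}\rfloor}\,Z_e^{\lfloor\xi_{L(e)}-\xi_{R(e)}\rfloor}.
\]
Under integer shifts $W$ picks up the same factors $A_v$, $B_p$; the phase fixes the signs from ordering $X_e$ past $Z_e$. Because $W$ depends only on differences of rotor angles, it commutes with $\sum_vN_v$ and $\sum_pN_p$.

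The product $U=WV$ is therefore single-valued and satisfies $e^{i\theta\sum_vN_v}Ue^{-i\theta\sum_vN_v}=Ue^{i\theta\sum_v\frac14(1-A_v)}$, together with its $m$-analogue. Stone's theorem then gives $UH_eU^\dagger=\sum_vN_v$ and $UH_mU^\dagger=\sum_pN_p$. Gauge-fixing $\phi_v,\xi_p\in[0,1)$ exhibits $U$ as an FDQC.

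The missing idea in your proposal is this monodromy cancellation. The half-integer charges are absorbed by the multivaluedness of the rotor angles, not by bounded telescoping involutions.
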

\begin{proof}
    As in Proposition~\ref{prop:halfchargeTC}, place a rotor at each vertex and plaquette with rotor angles $\phi_v,\xi_p\in\mathbb{R}/\mathbb{Z}$. We will also construct an explicit circuit that simultaneously disentangles the Hamiltonians
    \begin{align}
        H_e=\sum_v\left[\frac{1}{4}(1-A_v)\otimes 1 + 1\otimes N_v\right]\quad,\\
        H_m=\sum_p\left[\frac{1}{4}(1-B_p)\otimes 1 + 1\otimes N_p\right]\quad,
    \end{align}
    where $N_{v}=-\frac{i}{2\pi}\frac{\partial}{\partial\phi_v}$ and $N_{p}=-\frac{i}{2\pi}\frac{\partial}{\partial\xi_p}$ is the number operator of the rotor at $v,p$. In contrast to Proposition~\ref{prop:halfchargeTC}, here the rotor charge is fully integer-quantized.

    Notice that the operator
    \begin{align}
        V=\prod_v e^{2\pi i\phi_v \frac{1}{4}(1-A_v)}\prod_p e^{2\pi i\xi_p \frac{1}{4}(1-B_p)}
    \end{align}
    is almost the correct operator needed to remove the $A_v$ and $B_p$ terms, similar in spirit to the construction in Prop.~\ref{prop:onsitingTC}, with raising operators $T_v=e^{2\pi i\phi_v}$ and $T_p=e^{2\pi i\xi_p}$. However, since $\frac{1}{4}(1-A_v)$ and $\frac{1}{4}(1-B_p)$ are not individually integer quantized, unlike $P_v$ in Eq.~\ref{eq:PV}, the terms in $V$ are not invariant under gauge transformations $\phi_v\mapsto \phi_v+\alpha_v$, $\xi_p\mapsto \xi_p+\beta_p$, where $\alpha_v,\beta_p\in \mathbb{Z}$, since
    \begin{align*}
        \phi_v\mapsto \phi_v+\alpha_v: \quad V\mapsto V \prod_v(A_v)^{\alpha_v}\quad,\\
        \xi_p\mapsto \xi_p+\beta_p: \quad V\mapsto  V \prod_p (B_p)^{\beta_p}\quad.
    \end{align*}
    Although this gauge non-invariance can be cured by replacing $\phi_v \to \phi_v-\lfloor\phi_v\rfloor$ and $\xi_p \to \xi_p-\lfloor \xi_p \rfloor$, the resulting operator fails to properly disentangle the Hamiltonians due to its discontinuity causing undesirable commutation relationships with the rotor charge operator. Instead, we will construct another operator which fixes this non-invariance, while still maintaining the desirable commutation relationship with the rotor charge operators. Consider
    \begin{align}
        W=\prod_e &e^{i\pi(\phi_{T(e)}-\phi_{B(e)})\lfloor \xi_{L(e)}-\xi_{R(e)}\rfloor}\nonumber\\
        &\times X_e^{\lfloor\phi_{T(e)}-\phi_{B(e)}\rfloor}Z_e^{\lfloor \xi_{L(e)}-\xi_{R(e)}\rfloor}\quad,
    \end{align}
    where $e$ are oriented edges (horizontal point east, vertical point north) with $T(e),B(e)$ being the vertex at the tip and bottom of the edge, and $L(e),R(e)$ being the plaquettes at the left and right of the edge. The last two terms will contribute $A_v$ and $B_p$ terms upon the respective gauge transformations, and the first pure phase term is to ensure that the $X_e$ terms can be pulled out of the product to create $A_v$ with the correct sign. One can check that under the relevant shifts
    \begin{align*}
       \phi_v\mapsto \phi_v+\alpha_v: \quad W\mapsto W \prod_v (A_v)^{\alpha_v}\quad,\\
       \xi_p\mapsto \xi_p+\beta_p: \quad W\mapsto  W \prod_p (B_p)^{\beta_p}\quad.
    \end{align*}
    Crucially, the operator $W$ commutes with both $\sum_v N_v$ and $\sum_p N_p$.
    
    So the combined operator $U:=WV$ is invariant under independent integer shifts of $\phi_v$ and $\xi_p$, and is unitary since $UU^\dag=1$. It follows that
    \begin{align*}
        e^{i\theta\sum_v N_v} U e^{-i\theta \sum_v N_v}=U e^{i\theta \sum_v \frac{1}{4}(1-A_v)}\,,\\
        e^{i\theta \sum_p N_p} U e^{-i\theta \sum_p N_p}=U e^{i\theta \sum_p \frac{1}{4}(1-B_p)}\,,
    \end{align*}
    since $W$ is invariant under uniform rotor angle shifts, while $V$ transforms with the respective vertex or plaquette terms.
    This implies, via Stone's theorem on one-parameter unitary groups,
    \begin{align*}
        U H_e U^\dag&=\sum_v N_v\,,\\
        U H_m U^\dag&=\sum_p N_p\,.
    \end{align*}
    Finally, let us demonstrate that $U$ is locality-preserving, by showing that it can be written as an FDQC. Recall that $U$ is invariant under gauge transformations $ \phi_v\mapsto \phi_v+\alpha_v, \xi_p\mapsto \xi_p+\beta_p$.  
    We can therefore perform a gauge transformation to replace $\phi_v\mapsto \phi_v-\lfloor \phi_v\rfloor $ and $\xi_p\mapsto \xi_p-\lfloor \xi_p\rfloor$, so that every local factor in $U$ is now gauge-invariant. 
    Since $U$ is diagonal in the $\phi_v$, $\xi_p$ basis, it is then clear that we can write $U$ as a product of locally gauge-invariant circuit elements, forming an FDQC.
    
    This completes the proof, as we have constructed an explicit FDQC disentangler for both Hamiltonians, and are therefore able to on-site $U(1)_e\times U(1)_m$.
\end{proof}
We expect that similar methods should generalize to other topologically-ordered Hamiltonians, such as quantum-double and string-net models~\cite{KITAEV20032,PhysRevB.71.045110}.

It follows that the toric code Hamiltonian possesses SRE eigenstates that can be created using unbounded charged ancillae.
\begin{cor}
    \label{cor:SRETC}
    $U(1)_{\rm TC}$ possesses an SRE eigenstate in the presence of unbounded charged ancilla.
\end{cor}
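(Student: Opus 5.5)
The plan is to deduce the corollary directly from the on-siting result of Proposition~\ref{prop:emonsite}, using the general principle stated after Definition~\ref{def:onsiteable}: on-siteability in finite depth implies the existence of a symmetric SRE state. Proposition~\ref{prop:emonsite} gives a finite-depth unitary $U=WV$ that is locality preserving and diagonal in the rotor angle basis. It satisfies
\begin{align*}
U\Big(H_{\rm TC}\otimes 1+1\otimes \sum_v N_v+1\otimes\sum_p N_p\Big)U^\dagger=\sum_v N_v+\sum_p N_p .
\end{align*}
This is obtained by adding $UH_eU^\dagger$ and $UH_mU^\dagger$. Hence the diagonal $U(1)_{\rm TC}$, dressed by the integer-charged unbounded rotor ancillae, is conjugated by $U$ to the on-site rotation $e^{i\theta(\sum_v N_v+\sum_p N_p)}$.

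Next I choose an on-site symmetric product state of the disentangled generator. The natural choice is the rotor number eigenstate $\ket{N_v=0}$, i.e.\ the constant wavefunction in $\phi_v$, on every vertex and plaquette. The qubits can be placed in an arbitrary product state, for instance $\ket{0}$ on every edge. The generator $\sum_v N_v+\sum_p N_p$ acts only on the rotors and is diagonal there, so
$$\ket{\Phi}=\bigotimes_e\ket{0}_e\otimes\bigotimes_v\ket{N_v=0}\otimes\bigotimes_p\ket{N_p=0}$$
is an eigenstate with eigenvalue $0$. Setting $\ket{\Psi}=U^\dagger\ket{\Phi}$ then gives, by the displayed conjugation identity, an eigenstate of $H_{\rm TC}\otimes1+1\otimes(\sum_v N_v+\sum_p N_p)$, and hence of the full $U(1)_{\rm TC}$ symmetry operator including the ancillae.

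Finally, $\ket{\Psi}$ is SRE in the sense of the footnote after Definition~\ref{def:onsiteable}, since $U^\dagger$ is an FDQC and $\ket{\Phi}$ is a product state. This completes the argument. The same reasoning would also work with the construction of Proposition~\ref{prop:onsitingTC}, but that version needs fermionic ancillae, and here only charged rotors are wanted.

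The main point requiring care is that the rotor number eigenstates $\ket{N=0}$ are normalizable, being the constant function on $\mathbb{R}/\mathbb{Z}$, so $\ket{\Phi}$ is a legitimate product state. A second point is that $U$ really factorizes into gauge-invariant local gates, which was shown in the proof of Proposition~\ref{prop:emonsite} after the shift $\phi\mapsto\phi-\lfloor\phi\rfloor$. Given those two facts, the corollary follows immediately.
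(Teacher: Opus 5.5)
Your proposal is correct and is essentially the paper's proof. The paper also sets $\ket{\Psi_{\rm SRE}}=U^\dagger\ket{\mathbf{0}}$, with $U$ the FDQC of Proposition~\ref{prop:emonsite} and $\ket{\mathbf{0}}$ the product state with all edge spins up and every rotor in its $N=0$ eigenstate, and observes that this is a zero-energy eigenstate of $H_{\rm TC}+\sum_v N_v+\sum_p N_p$. Your remarks on summing $UH_eU^\dagger$ and $UH_mU^\dagger$, and on the normalizability of the $N=0$ rotor state, make explicit what the paper leaves implicit.
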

\begin{proof}
    Define $\ket{\Psi_{\rm SRE}}=U^\dag\ket{\mathbf{0}}$, where $U$ is the FDQC given in Prop.~\ref{prop:emonsite} and $\ket{\mathbf{0}}$ is the product state of all spin up on the edges and $0$ in the rotor charge operator basis. By construction, $\ket{\Psi_{\rm SRE}}$ is an SRE eigenstate of $U(1)_{\rm TC}$ with zero energy where in the presence of these charged ancillae we now have the generator being $H_{\rm TC} + \sum_v N_v + \sum_p N_p$.
\end{proof}
This SRE state will be a complicated superposition of  excited states of the toric code and rotor charges with the rotor energy canceling out the toric code excitation energy.

\subsection{$U(1)_{\rm TC'}$}

In fact, there is a model, and associated $U(1)$ symmetry, that is a relative of $H_{\rm TC}$ (Eq.~\ref{TCHam}), which demonstrates all of the above features discussed above in a simpler way. Consider the Hamiltonian~\cite{PhysRevB.96.195150}
\begin{align}
    H_{\rm TC}' = -\sum_v \left[\left(\frac{1+A_v}{2}\right) \left( \frac{1+B_{p(v)}}{2} \right)-1\right]\,,
    \label{eq:TCprime}
\end{align}
where $p(v)$ is the north-east plaquette of vertex $v$. This model has the same ground states, and associated degeneracy, as the usual toric code Hamiltonian. It has an integer spectrum, however with a curious minimum energy gap of 2 (rather than 1) above the ground state, and also has an excitation energy of 3 (e.g., a far separated $e$ anyon line and $m$ anyon line end on the same vertex and the associated plaquette resulting in an $f$ excitation) and all the higher integers. The combination of the existence of an energy gap and topologically-ordered ground states means that this model is also topologically-ordered. We will refer to the $U(1)$ symmetry generated by $H_{\rm TC}'$ as $U(1)_{\rm TC'}$.

Another crucial difference between $U(1)_{\rm TC'}$ and $U(1)_{\rm TC}$ is the charge assignment for the anyons. In particular, $e$ and $m$ charges differ for the two models, as can be seen below:
\begingroup
\renewcommand{\arraystretch}{1.3}
\begin{table}[h!]
    \centering
    \begin{tabular}{c|c|c|c}
         & ~$e$ charge~ & ~$m$ charge~ & ~$f$ charge~ \\\hline
        $U(1)_{\rm TC}$ & $\frac{1}{2}$ & $\frac{1}{2}$ & 1\\
        $U(1)_{\rm TC'}$ & 1 & 1 & 1\\
    \end{tabular}
    \caption{Charge assignments of $e$, $m$, and $f$ anyons for $U(1)_{\rm TC}$ and $U(1)_{\rm TC'}$.}
    \label{tab:TCvsTCpcharges}
\end{table}
\endgroup

\noindent This difference in charge assignment from half integers for $U(1)_{\rm TC}$ to integers for $U(1)_{\rm TC'}$ is the essential reason as to why this model will be much simpler to analyze than the previous toric code case, despite both being topological orders. Let us now analyze the usual features of SRE eigenstates and on-siteability.

This model possesses SRE eigenstates even in the absence of ancillae.
\begin{cor}
    $U(1)_{\rm TC'}$ possesses SRE eigenstates, in particular product-states on systems with an even number of plaquettes, in the absence of ancillae.
\end{cor}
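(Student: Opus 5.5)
The plan is to find a product state on which every local term of $H_{\rm TC}'$ in Eq.~\eqref{eq:TCprime} vanishes. Such a state is then an eigenstate of $U(1)_{\rm TC'}$, and it is trivially SRE: it is itself a product state, so it is obtained by the depth-zero circuit. The key observation is that each term of $H_{\rm TC}'$ contains the factor $\frac{1+B_{p(v)}}{2}$. This factor annihilates any state with $B_{p(v)}=-1$, regardless of how $A_v$ acts on it. Computational-basis ($Z$-basis) product states are automatically eigenstates of every $B_p$, since $B_p$ is a product of $Z$ operators. So it suffices to find a $Z$-basis configuration $\{Z_e=z_e\}$ with $\prod_{e\in\partial p} z_e=-1$ for every plaquette $p$. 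Writing $|V|$ for the number of vertices, on that state
$$H_{\rm TC}'\ket{\{z_e\}}=\sum_v\Big[-\tfrac{1+A_v}{2}\cdot 0+1\Big]\ket{\{z_e\}}=|V|\,\ket{\{z_e\}}.$$
Here we use that $\frac{1+B_{p(v)}}{2}$ sits to the right in each term (it commutes with $A_v$ anyway), so it acts first and kills the state. Hence $e^{i\theta H_{\rm TC}'}$ acts on this state by the phase $e^{i\theta |V|}$.

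The remaining step, and the only real content, is existence of the flux configuration. The constraint $\prod_p B_p=1$ from Eq.~\eqref{eq:conditions} shows that a configuration with $B_p=-1$ for all $p$ can exist only if the number of plaquettes is even. This is why the corollary assumes an even number of plaquettes.

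For the converse I would construct the state explicitly. Start from $\ket{0\cdots0}$, which has all $B_p=+1$. Pair up the plaquettes arbitrarily, say along a Hamiltonian path of the dual lattice of the torus. For each pair $(p,p')$, apply the open dual string $\prod_{j\in\bar\gamma_{pp'}}X_j$. Each such string flips $B$ exactly at its two endpoints, and it maps a $Z$-basis product state to another $Z$-basis product state. After all pairs are processed, every plaquette has been an endpoint exactly once (the intermediate crossings cancel mod 2), so $B_p=-1$ everywhere and the resulting state is a product state. Equivalently, one can phrase this as solving the $\mathbb{Z}_2$ linear system $\partial^\ast z = 1$ on the dual lattice, whose only obstruction is the global parity $\sum_p 1 \equiv 0$.

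I expect no genuine obstacle. The only care needed is in checking that the string construction yields each plaquette flipped an odd number of times. This follows immediately from the fact that the boundary of a dual string is its two endpoints.
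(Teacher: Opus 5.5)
Your even-plaquette argument is correct and essentially matches the paper's. The paper exhibits an explicit configuration, flipping every second row of horizontal edges when the number of plaquettes in the vertical direction is even. Your dual-string pairing is an equivalent, slightly more general way to get $B_p=-1$ everywhere, after which the factor $\frac{1+B_{p(v)}}{2}$ kills every term and the state has energy $|V|$.

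The gap is that you have misread the hypothesis. The corollary asserts that $U(1)_{\rm TC'}$ has SRE eigenstates in general; product states are only the special feature of the even case. The paper's proof explicitly treats an odd number of plaquettes (e.g.\ an $L\times L'$ torus with $L,L'$ both odd). There your construction produces nothing. As you note yourself, no $Z$-basis state can then have $B_p=-1$ on every plaquette, so the odd case needs an additional step.

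That step is short:
\begin{enumerate}
\item Pair all plaquettes but one, $p_0$. Your string construction then gives a $Z$-basis product state $\ket{\Psi_0}$ with $B_p=-1$ for $p\neq p_0$ and $B_{p_0}=+1$.
\item Since $v\mapsto p(v)$ is a bijection on the torus, only the term at the vertex $v_0$ with $p(v_0)=p_0$ is not killed by its $\frac{1+B_{p(v)}}{2}$ factor. On $\ket{\Psi_0}$ that term reduces to $\frac{1-A_{v_0}}{2}$, which is not diagonal in the $Z$ basis.
\item Take instead $\ket{\Psi}=\frac{1}{\sqrt2}(1-A_{v_0})\ket{\Psi_0}$. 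Because $A_{v_0}$ commutes with all $B_p$, the flux pattern is unchanged, and $A_{v_0}\ket{\Psi}=-\ket{\Psi}$.
\item Hence every term of $H_{\rm TC}'$ acts as $1$, and $\ket{\Psi}$ is an eigenstate of energy $|V|$, the sky state.
\item $\ket{\Psi}$ is a product state on all edges away from $v_0$, tensored with a four-qubit GHZ-type state on the edges of $v_0$. It is therefore SRE, though no longer a product state.
\end{enumerate}
This is exactly the paper's construction for the odd case.
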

\begin{proof}
    On system sizes with an even number of plaquettes, we may simply construct a product state with $B_{p}=-1$ for all plaquettes $p$. For example if there is an even number of plaquettes in the vertical direction, then all spins living on every second row of horizontal edges are in $\ket{Z_e=-1}$, while all other edges are in $\ket{Z_e=+1}$.

    On odd system sizes with an odd number of plaquettes, we may create a state that is a product state except around one vertex. Note that since the number of plaquettes is odd, we can only require $B_p=-1$ for all plaquettes $p\neq p_0$, where $B_{p_0}=1$. For the even number of plaquettes with $B_p=-1$, we may pair neighboring plaquettes up and flip their joint spin to $\ket{Z_e=-1}$ while keeping all other spins $\ket{Z_e=+1}$ (including the ones on $p_0$), and call this configuration $\ket{\Psi_0}$. On the remaining plaquette $p_0$ and associated south-west vertex $v_0$, create the state
    \begin{align}
        \ket{\Psi}=\frac{1}{\sqrt{2}}\left(\ket{\Psi_0}-A_{v_0}\ket{\Psi_0}\right)\,.
    \end{align}
     This is an SRE eigenstate of $U(1)_{\rm TC'}$, which is also the most excited state, i.e., the ``sky" state.
\end{proof}

Following the proofs of Proposition~\ref{prop:tcrotors}, the same proof also shows that $U(1)_{\rm TC'}$ is not on-siteable in finite depth with semi-bounded ancillae.
\begin{prop}
    $U(1)_{\rm TC'}$ is non-on-siteable in finite depth, even in the presence of semi-bounded ancilla.
\end{prop}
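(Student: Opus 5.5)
The plan is to reuse the argument of Proposition~\ref{prop:tcrotors} essentially verbatim, feeding it into Proposition~\ref{propfactorization}. Everything that argument needs is the ground state algebra of $H_{\rm TC}'\otimes 1+1\otimes H_0$. So the first step is to check that $H_{\rm TC}'$ is gapped and has the \emph{same} ground state projector $P$ as $H_{\rm TC}$.

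Each summand $1-\frac{1+A_v}{2}\frac{1+B_{p(v)}}{2}$ is a projector. The summands commute, and each is nonnegative. The minimum value $0$ is attained exactly when $A_v=B_{p(v)}=1$ for all $v$. The map $v\mapsto p(v)$ is a bijection between vertices and plaquettes, so this is the same as $A_v=1$ for all $v$ and $B_p=1$ for all $p$. That is precisely the toric code ground space, and the gap above it is $2$, as stated in the text. Hence the ground state algebra $\mathcal{A}_{\rm gs}$ of $H_{\rm TC}'$, with its local net structure $X\mapsto\mathcal{A}_{\rm gs}(X)$, coincides with that of $H_{\rm TC}$. By Lemma~2, the ground state algebra of the full system is $\mathcal{A}_{\rm gs}(X)\otimes\mathcal{A}^0_{\rm gs}(X)$ for any gapped semi-bounded $H_0$; if $H_0$ is bounded above rather than below, one uses sky states instead.

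The second step is the non-factorization. Take the noncontractible string operators $\mathcal{Z}=\prod_{e\in\gamma}Z_eP$ and $\mathcal{X}=\prod_{e\cap\gamma'\neq\varnothing}X_eP$. Choose $\gamma$ to wind along the $L'$ direction at fixed horizontal position inside the left half. Choose $\gamma'$ to wind along the $L$ direction, so that $\gamma'$ crosses $\gamma$ once and $\mathcal{X}\mathcal{Z}=-\mathcal{Z}\mathcal{X}$.

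Suppose, for contradiction, that $\mathcal{X}\otimes 1=\sum_i a_ib_i$ with $a_i\in\mathcal{A}_{\rm gs}(Y^{+2s})\otimes\mathcal{A}^0_{\rm gs}$ and $b_i$ localizable to $Z^{+2s}$, where $Y=[1,L/2]\times[1,L']$ and $Z$ is its complement. A vertical string operator acting on the ground space can be deformed to any horizontal position using the stabilizers, since $\mathcal{Z}P=\mathcal{Z}_{\tilde\gamma}P$ for any homologous $\tilde\gamma$. If $2s<L/4$, each of $Y^{+2s}$ and $Z^{+2s}$ misses a full vertical strip of the torus. We may therefore place $\mathcal{Z}$ disjointly from the support of each $a_i$, and separately of each $b_i$. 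By the commutation property of the local net, $\mathcal{Z}\otimes 1$ then commutes with every $a_i$ and $b_i$, hence with $\mathcal{X}\otimes 1$. This contradicts anticommutation, because $P\neq 0$ and so $\mathcal{X}\mathcal{Z}P\neq 0$.

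One point needs care, and I expect it to be the only real obstacle. The deformation $\mathcal{Z}P=\mathcal{Z}_{\tilde\gamma}P$ holds as elements of $\mathcal{A}_{\rm gs}$, so the commutation argument must be carried out entirely inside $\mathcal{A}_{\rm gs}\otimes\mathcal{A}^0_{\rm gs}$, with all equalities meaning equality on the ground space. It works there because both $a_i$ and $\mathcal{Z}_{\tilde\gamma}P$ are genuine elements of the net localized in disjoint regions. This mirrors the proof of Proposition~\ref{prop:ssbfiniteancilla}, where $Z_jP$ was moved to a free site.

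The final step is to invoke Proposition~\ref{propfactorization}. Any spread-$s$ QCA on-siting $U(1)_{\rm TC'}$ must have $s\ge L/8$, and likewise $s\ge L'/8$ after exchanging the roles of $L$ and $L'$. This excludes every finite-depth circuit in the thermodynamic limit. As before, no property of $H_0$ beyond being gapped and semi-bounded is used, so the obstruction is again non-invertible. The integer charge assignments of Table~\ref{tab:TCvsTCpcharges} play no role: the obstruction comes only from the ground state algebra, which is unchanged.
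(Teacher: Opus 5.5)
Your argument matches the paper's: $H_{\rm TC}'$ has the same ground-state projector as $H_{\rm TC}$, and the proof of Proposition~\ref{prop:tcrotors} then goes through via Proposition~\ref{propfactorization}. For ancillae whose generators are bounded from below, which includes every finite-dimensional ancilla, your proof is complete and more careful than the paper's one-line justification.

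The step that fails is the aside ``if $H_0$ is bounded above rather than below, one uses sky states instead,'' together with your closing claim that the relevant algebra is unchanged. For $H_0$ bounded only from above (necessarily infinite-dimensional), the total generator has no ground state. Proposition~\ref{propfactorization} must then be applied to $-(H_{\rm TC}'\otimes 1+1\otimes H_0)$, whose ground-state algebra is built on the \emph{sky} space of $H_{\rm TC}'$, not its ground space. For $H_{\rm TC}$ this would be harmless: for even $N_v,N_p$ its sky space is a sign-flipped copy of its ground space, and your argument goes through verbatim. For $H_{\rm TC}'$ it is not harmless. Its sky space is the span of all stabilizer sectors with $(A_v,B_{p(v)})\neq(+1,+1)$ at every $v$. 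This space is extensively degenerate and contains product states, as the paper's corollary on SRE eigenstates of $U(1)_{\rm TC'}$ already shows. There $\mathcal{Z}_\gamma\mathcal{Z}_{\tilde\gamma}=\prod_{p\in S}B_p$ is not constant, so your deformation step $\mathcal{Z}P=\mathcal{Z}_{\tilde\gamma}P$ is false.

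Worse, your witness $\mathcal{X}$ actually factorizes in that algebra with $O(1)$ spread. Flipping $B_q$ alone keeps the state in the sky space iff $A_{v(q)}=-1$, where $v(q)$ is the south-west vertex of $q$. Near each cut, take two adjacent plaquettes $q,q'$ on the horizontal dual loop, with $q'$ east of $q$.
\begin{enumerate}
\item If $A_{v(q)}=-1$ or $A_{v(q')}=-1$, cut the $X$-string at that plaquette.
\item If both equal $+1$, then $B_q=B_{q'}=-1$. A single $Z$ on the horizontal edge joining $v(q)$ and $v(q')$ flips both $A$'s while staying in the sky space; it commutes with the $X$-string. Then cut at $q$, and undo the $Z$ in the other factor.
\end{enumerate}
Summing over the commuting local sector projectors $\tfrac{1\pm A_v}{2},\tfrac{1\pm B_p}{2}$ near the two cuts gives $\mathcal{X}P_{\rm sky}=\sum_c a_c b_c$. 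Here each $a_c,b_c$ commutes with $P_{\rm sky}$ and is localized to its half up to $O(1)$; dual $Z$-strings factorize the same way.

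So your argument gives no bound when the ancilla generators are bounded only from above. You must either restrict the claim to ancillae bounded below, or find a different non-factorizable element or obstruction for that case. The paper's ``the same proof'' has the same blind spot. Its footnote reducing bounded-above to bounded-below ancillae implicitly assumes the system Hamiltonian's sky algebra is as nontrivial as its ground-state algebra, which fails for $H_{\rm TC}'$.
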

Just like $U(1)_{\rm TC}$, its on-siteability changes when unbounded ancillae are considered. However, in this case, it is significantly easier to construct the FDQC to on-site the symmetry, since all terms in Eq.~\ref{eq:TCprime} are already integer valued.
\begin{prop}
    $U(1)_{\rm TC'}$ is on-siteable in finite depth in the presence of integer-charged unbounded ancilla.
\end{prop}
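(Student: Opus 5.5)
The plan is to apply Proposition~\ref{prop:disentanglerlemma} directly, exactly as in the proof of Proposition~\ref{prop:ssbonsiterotor}. The point is that each summand of Eq.~\eqref{eq:TCprime} is already an integer-valued, mutually commuting local operator. Concretely, write
\begin{align}
H_{\rm TC}' = \sum_v Q_v\,,\qquad Q_v = 1-\Big(\frac{1+A_v}{2}\Big)\Big(\frac{1+B_{p(v)}}{2}\Big)\,.
\end{align}
Since $A_v$ and $B_{p(v)}$ commute and each has eigenvalues $\pm 1$, the product of the two projectors $\frac{1\pm A_v}{2}$, $\frac{1\pm B_{p(v)}}{2}$ is itself a projector. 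Hence $Q_v$ is a projector with spectrum in $\{0,1\}$, and $e^{2\pi i Q_v}=1$. All $A_v$ and $B_p$ commute, so the operators $Q_v$ mutually commute and are supported on the bounded region formed by the star of $v$ together with the plaquette $p(v)$. Thus $U_v(\theta)=e^{i\theta Q_v}$ gives a local factorization of $U(1)_{\rm TC'}$ into commuting $U(1)$ representations, with no boundary terms needed. The map $v\mapsto p(v)$ is a bijection on a torus, so every plaquette is used exactly once; this does not matter for the argument, but it confirms the bookkeeping.

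Following the explicit construction in Proposition~\ref{prop:ssbonsiterotor}, I would then add an integer-charged rotor $\phi_v\in\mathbb{R}/\mathbb{Z}$ at each vertex, with conjugate $N_v=-\frac{i}{2\pi}\partial_{\phi_v}$, and consider
\begin{align}
H = H_{\rm TC}'\otimes 1+\sum_v 1\otimes N_v\,.
\end{align}
The disentangler is
\begin{align}
W=\prod_v e^{2\pi i\phi_v Q_v}\,.
\end{align}
This is well-defined under $\phi_v\mapsto\phi_v+1$ precisely because $Q_v$ has integer spectrum; this is the step that failed for the half-quantized terms of $H_{\rm TC}$. Because the factors commute, $e^{2\pi i\phi_v Q_v}N_v e^{-2\pi i\phi_v Q_v}=N_v - Q_v$, and $Q_{v'}$ commutes with $N_v$. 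It then follows that $WHW^\dagger=\sum_v 1\otimes N_v$, which is on-site.

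It remains to check that $W$ is an FDQC. Each factor is supported on a rotor together with $O(1)$ edges, and the factors commute. Coloring the vertices with a finite number of colors, so that same-colored supports are disjoint, expresses $W$ as a circuit of constant depth. I do not expect a serious obstacle here. The only delicate point is verifying that $Q_v$ is an honest projector, namely that $A_v$ and $B_{p(v)}$ commute even though they share edges. That holds because the star and plaquette overlap on exactly two edges. Proposition~\ref{prop:disentanglerlemma} then finishes the proof.
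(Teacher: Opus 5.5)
Your proposal is correct and is essentially the paper's own proof. The paper likewise adds a rotor $\phi_v$ at each vertex and disentangles $H_{\rm TC'}\otimes 1+\sum_v 1\otimes N_v$ with $W=\prod_v e^{2\pi i\phi_v P_v}$, where $P_v=1-\left(\frac{1+A_v}{2}\right)\left(\frac{1+B_{p(v)}}{2}\right)$ is exactly your $Q_v$.
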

\begin{proof}
Similar to Proposition \ref{prop:ssbonsiterotor}, we add a rotor $\phi_v\in\mathbb{R}/\mathbb{Z}$ at each vertex $v$ and consider the Hamiltonian 
 \begin{align}
        H = H_{\rm TC'}\otimes 1 +\sum_v 1\otimes N_v\,.
\end{align}
It can then be disentangled by the FDQC $W=\prod_v e^{2\pi i \phi_v P_v}$ with $P_v=-\left(\frac{1+A_v}{2}\right) \left( \frac{1+B_{p(v)}}{2} \right)+1$.    
\end{proof}
Notice that the on-siting procedure was much simpler than for $U(1)_{\rm TC}$ (e.g., Proposition~\ref{prop:emonsite}) due to the charge assignment differences in Table~\ref{tab:TCvsTCpcharges}.

\section{Discussion}
\label{sec:discussion}

In this paper, we have studied on-siteability of $U(1)$ symmetries, generated by commuting-projector Hamiltonians. We prove that phases with fractionalized excitations, such as spontaneous-symmetry broken phases and topological orders, cannot be on-sited in finite-depth, even in the presence of semi-bounded ancilla. In particular, we explicitly demonstrate this for two quintessential examples: $U(1)_{\rm SSB}$, generated by the 1+1d spontaneous symmetry-broken Ising Hamiltonian, and $U(1)_{\rm TC}$, generated by the 2+1d toric code Hamiltonian. However, once infinite-dimensional ancillae with unbounded generators are included, $U(1)_{\rm SSB}$, and perhaps surprisingly, $U(1)_{\rm TC}$ are both on-siteable in finite-depth. We give explicit disentangling circuits for both the symmetries and their associated Hamiltonians.

We expect that similar disentangling circuits should hold straightforwardly for other commuting-projector models, associated to other topological orders such as quantum doubles. Immediate questions include how broad these on-siteability constructions extend to: are all commuting-projector Hamiltonians on-siteable in the presence of unbounded ancilla? A potentially easier question is whether all such Hamiltonians are invertible, i.e., can be disentangled with the presence of another Hamiltonian.

Another interesting aspect is whether there are finite-range Hamiltonians with quantized spectra which are \textit{not} commuting-projector Hamiltonians. All the studied examples are commuting-projector models. If there are such quantized-spectra non-commuting Hamiltonians, would they also be on-siteable? Discarding the finite-range condition and allowing for exponential tails, there are Hamiltonians of flat-band Chern insulators~\cite{Chen_2014}, which are known to necessarily possess non-commuting terms~\cite{2019CMaPh.373..763K}. It would be interesting to show whether such models are also disentangleable in the presence of ancillae via finite-time Hamiltonian evolution.

Our results support the idea that unbounded ancillae are necessary for on-siteability to match the classifications of anomalies in quantum field theory, and of SPTs in one dimension higher. There are also other candidates for being potential measures for lattice anomaly, such as blendability. We study this concept in Appendix~\ref{sec:blendability}, where we find some interesting differences between $U(1)_{\rm SSB}$, $U(1)_{\rm TC}$, and $U(1)_e\times U(1)_m$, given in Table~\ref{tab:TC}. In this paper, we have carefully investigated the area of $U(1)$ symmetries generated by topological Hamiltonians in the context of different types of ancilla and on-siteability, and hope that the results will aid in the understanding of anomalous symmetries on the lattice.

\section{Acknowledgements}

We are thankful for insightful and fun discussions with Xie Chen, Alexander M Czajka, Lukasz Fidkowski, Roman Geiko, Corey Jones, Alexei Kitaev, Ho Tat Lam, Zhiyao Lu, Andy Lucas,  Nikita Sopenko, Ashvin Vishwanath, and Carolyn Zhang. 
L.G. acknowledges support from the Department of Energy under Grant No. DE-SC0024324 and the Walter Burke Institute for Theoretical Physics
at Caltech and the Caltech Institute for Quantum Information and Matter.
S.D.P. acknowledges support from the Simons Collaboration on Ultra-Quantum Matter, which is a grant from the Simons Foundation (651446, XGW), as well as the Marvin L. Goldberger Membership, the William Loughlin Membership, and the IBM Einstein Fellowship Fund at the Institute for Advanced Study.
S.H.S. was supported in part by the Simons Collaboration on Ultra-Quantum Matter, which is a grant from the Simons Foundation (651444,  SHS), and in part by the U.S. Department of Energy, Office of Science, Office of High Energy Physics of U.S. Department of Energy under grant Contract Number  DE-SC0012567. R.T. is supported by the Mani L. Bhaumik Presidential Term Chair. 
Some of the derivations were assisted by ChatGPT 5.6 Sol Pro and verified by the authors.

\bibliography{main.bib}

\appendix

\onecolumngrid

\section{Connection between on-siteability of a $U(1)$ symmetry versus its generator}
\label{app:cononsite}

In this appendix, we will show that a $U(1)$ symmetry is on-siteable if and only if its generator can be disentangled to a sum of on-site terms.
\begin{prop}
    A $U(1)$ symmetry, enacted by operator $e^{i\theta H}$, is on-siteable with(out) ancillae via unitary $V$ if and only if its generator $H$ is on-siteable with(out) ancillae via unitary $V$.
    \label{prop:Stone}
\end{prop}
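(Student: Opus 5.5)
The plan is to prove both directions using only two facts: conjugation by a fixed unitary commutes with the functional calculus, and Stone's theorem gives a one-to-one correspondence between strongly continuous one-parameter unitary groups and self-adjoint generators. Fix the ancilla generator $H_0$ and its on-site symmetry $V_0(\theta)=e^{i\theta H_0}$. Here $H_0=\sum_i h_{0,i}$ is a sum of commuting single-site terms with integer spectrum, so $V_0(\theta)=\prod_i e^{i\theta h_{0,i}}$ is on-site. Set
\[
H_{\rm tot}=H\otimes 1+1\otimes H_0,
\]
whose associated group is $e^{i\theta H_{\rm tot}}=e^{i\theta H}\otimes V_0(\theta)$. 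The first direction is then immediate. Suppose $V H_{\rm tot} V^\dagger=H_0'$ with $H_0'=\sum_i h'_i$ on-site. Then
\[
V e^{i\theta H_{\rm tot}}V^\dagger=e^{i\theta V H_{\rm tot}V^\dagger}=e^{i\theta H_0'}=\prod_i e^{i\theta h'_i}.
\]
This is a tensor product of single-site unitary $U(1)$ representations, which are $2\pi$-periodic because each $h'_i$ has integer spectrum. Hence it is exactly the on-site form required by Definition~\ref{def:onsiteable}. The identity $Vf(A)V^\dagger=f(VAV^\dagger)$ holds for unbounded self-adjoint $A$ as well, since $V$ maps the spectral measure of $A$ to that of $VAV^\dagger$. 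This covers the rotor case.

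For the converse, suppose $V\,(e^{i\theta H}\otimes V(\theta))\,V^\dagger=V'(\theta)=\prod_i V'_i(\theta)$ for all $\theta$, where each $V'_i$ is a single-site unitary representation of $U(1)$. I will assume the representations are strongly continuous; this is implicit in calling them $U(1)$ representations on rotor Hilbert spaces. Stone's theorem then gives a self-adjoint generator $h'_i$ with $V'_i(\theta)=e^{i\theta h'_i}$. Because $V'_i(2\pi)=1$, the spectrum of each $h'_i$ lies in $\mathbb{Z}$. The same reasoning applied to the ancilla representation $V(\theta)=\prod_i V_i(\theta)$ gives an on-site integer-spectrum generator $H_0=\sum_i h_{0,i}$. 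The factors $V'_i(\theta)$ on different sites commute, so the group $\theta\mapsto\prod_i e^{i\theta h'_i}$ has generator equal to the closure of $\sum_i h'_i$ on the algebraic tensor product of the domains. In the finite-system setting assumed throughout, this is a finite sum of commuting self-adjoint operators. Call it $H_0'$.

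Both $\theta\mapsto V e^{i\theta H_{\rm tot}}V^\dagger$ and $\theta\mapsto e^{i\theta H_0'}$ are now strongly continuous one-parameter unitary groups, and they agree for every $\theta$. Their generators are $VH_{\rm tot}V^\dagger$, with domain $V\,\mathrm{Dom}(H_{\rm tot})$, and $H_0'$. Uniqueness in Stone's theorem forces $VH_{\rm tot}V^\dagger=H_0'$, including equality of domains. That is exactly the statement that $H$ is on-siteable via the same $V$.

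I expect the main obstacle to be the unbounded case: making sure the generator of a product of commuting on-site groups really is the on-site sum $\sum_i h'_i$, with domain issues handled, and that strong continuity holds. I would deal with this by working in the joint spectral decomposition of the commuting $h'_i$. For rotors this is the charge basis $\ket{\{n_i\}}$, where everything is diagonal and the identification can be read off directly. In the no-ancilla and finite-dimensional cases all operators are bounded and the argument reduces to differentiating at $\theta=0$.
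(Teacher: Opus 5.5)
Your proposal is correct and follows essentially the same route as the paper. For generator $\Rightarrow$ symmetry you exponentiate, using that functional calculus commutes with conjugation by the fixed unitary $V$. For symmetry $\Rightarrow$ generator your appeal to uniqueness of the generator in Stone's theorem is the same as the paper's differentiation at $\theta=0$ through $V$; your extra care with domains, strong continuity, integer spectrum, and the on-site form of the generator of $\prod_i V'_i(\theta)$ only makes explicit what the paper leaves implicit.
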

\begin{proof}
    First note that by Stone's theorem of one-parameter unitary groups, we may always write the unitary operator $U(\theta)$ of a $U(1)$ symmetry in terms of its \textit{unique} Hermitian generator $H$, with $U(\theta)=e^{iH\theta}$.

    Let us now prove the forward direction. Assume that $U(1)$ is on-siteable in the presence of ancillae (the same argument holds without ancillae by setting their contribution to the identity). This implies that there exists a unitary $V$ such that $\forall\theta$,
    \begin{align}
        V \left[U(\theta) \otimes U_0(\theta)\right] V^\dag=U_0' (\theta)\quad,
    \end{align}
    where $U_0(\theta)=e^{iH_0\theta}$ is a unitary (for generality, let us note it is not necessarily on-site) on ancilla degrees of freedom, which may be finite or infinite-dimensional, with a Hermitian generator $H_0$ with integer spectrum, and $U_0' (\theta)=e^{iH_0'\theta}$ is an on-site unitary with Hermitian on-site generator $H_0'$. This would imply
    \[V \left(H\otimes\mathds{1}+\mathds{1}\otimes H_0\right)V^\dag = -i V\frac{d}{d\theta}\left[U(\theta) \otimes U_0(\theta)\right]V^\dag\bigg|_{\theta=0} \\
    = -i \frac{d}{d\theta}\left[V(U(\theta) \otimes U_0(\theta))V^\dag\right]\bigg|_{\theta=0} = H_0'\quad,\]
    where in the first and final equalities we use Stone's theorem on one-parameter unitary groups, and the second equality follows because $V$ is continuous as a bounded linear map on Hilbert space and independent of $\theta$, with the derivatives signifying Hilbert space norm derivatives on the domains of the generator.

    To prove the other direction, let us now assume that the generator $H$ of a $U(1)$ symmetry, can be made on-site via unitary $V$
    \begin{align}
        V \left(H\otimes\mathds{1}+\mathds{1}\otimes H_0\right)V^\dag= H_0'\,,
    \end{align}
    where $H_0$ is the integer-quantized ancillae generator (which, for generality, may or may not be on-site), and $H_0'$ is the integer-quantized generator on the entire system. By matrix exponentiation, we have 
    \begin{align}
        V e^{i\theta \left(H\otimes\mathds{1}+\mathds{1}\otimes H_0\right)}V^\dag = e^{i\theta V \left(H\otimes\mathds{1}+\mathds{1}\otimes H_0\right)V^\dag}= e^{i\theta H_0'}\,,
    \end{align}    
    so we see that the $U(1)$ symmetry is on-siteable.
\end{proof}

\section{Galois theory of spectral polynomials}
\label{app:galois}

Let $H$ be an integer-quantized Hamiltonian on a finite-dimensional Hilbert space and assume $H \ge 0$, so the energy levels are non-negative integers. Let $D_n$ be the number of eigenstates of $H$ of energy $n$. Consider the ``spectral polynomial''
\[p_H(x) = \sum_{n \ge 0} D_n x^n.\]
This polynomial is invariant under conjugation $H \mapsto U H U^\dagger$ for arbitrary unitaries, regardless of their depth. Furthermore, for $H \otimes 1 + 1 \otimes H'$, where $H'$ is also an integer-quantized (not necessarily on-site) Hamiltonian on the Hilbert space of finite-dimensional ancillae, the resulting spectral polynomial is
\[p_H(x) p_{H'}(x).\]
Note also that $p_H(1)$ is the dimension of the Hilbert space. Thus, if $H \otimes 1 + 1 \otimes H'$ is on-siteable into a tensor product operator with on-site Hilbert spaces of dimension $d_i$, $i=1,\ldots,N$, then
\[p_H(x) p_{H'}(x) = \prod_{i=1}^{N} p_i(x)\]
where $p_i(x)$ are the spectral polynomials of the local tensor factors, and in particular $p_i(1) = d_i$.

For the toric code on a cell complex with $N_v$ vertices and $N_p$ plaquettes, normalized to have integer level spacing, we have
\[\label{eqntoriccodepolynomial}p_{\rm TC}(x) = D_0 \left(\sum_{n=0}^{\lfloor N_v/2 \rfloor } {N_v \choose 2n} x^n\right)\left(\sum_{m=0}^{\lfloor N_p/2 \rfloor } {N_p \choose 2m} x^m\right).\]
The two factors represent the $e$ (vertex) and $m$ (plaquette) excitations, respectively, and $D_0$ is the ground state degeneracy, which is an overall topological degeneracy present in every energy level. 

\begin{prop}
\label{prop:galoistc}
Consider the toric code on an $L\times L$ square lattice with periodic boundary conditions, where $L = 2^n$ with $n\in\mathbb{Z}^+$. We have $N_v = 2^{2n}$ and $N_p = 2^{2n}$, so the factorization \eqref{eqntoriccodepolynomial} is the (unique!) factorization of $p_{\rm TC}(x)$ into irreducible polynomials. Therefore, in order to make $p_{\rm TC}(x) p_{H'}(x)$ on-site by a circuit of any depth, it will be necessary to have a site with spectral polynomial divisible by $Q_{2^{2n}}(x)$, given by
\[Q_N(x) =\sum_{n=0}^{\lfloor N/2 \rfloor } {N \choose 2n} x^n,\]
so that the dimension of this site would be a nonzero integer multiple of $Q_{2^{2n}}(1)=2^{L^2-1}$.
\end{prop}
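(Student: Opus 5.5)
The plan has two parts. First, show that $Q_N(x)$ is irreducible over $\mathbb{Q}$ whenever $N=2^{k}$ with $k\ge 1$; note that the statement needs this for both $N_v=N_p=2^{2n}$. Second, use unique factorization in $\mathbb{Z}[x]$ to force $Q_{2^{2n}}$ into a single local factor $p_i$.

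For irreducibility, I would start from the generating identity
\[
Q_N(y^2)=\tfrac12\big[(1+y)^N+(1-y)^N\big].
\]
The roots $x=y^2$ of $Q_N$ are therefore exactly the values with $(1+y)^N=-(1-y)^N$. Substitute $t=(1-y)/(1+y)$, equivalently $y=(1-t)/(1+t)$. This condition becomes $t^N=-1$. Since $N=2^k$, $t$ is then a primitive $2N$-th root of unity $\zeta$, whose minimal polynomial $\Phi_{2N}(t)=t^N+1$ is irreducible of degree $N$. Next I would compute
\[
x=\Big(\frac{1-\zeta}{1+\zeta}\Big)^2=\frac{c-2}{c+2},\qquad c=\zeta+\zeta^{-1}.
\]
This is a Möbius transformation with rational coefficients, so $\mathbb{Q}(x)=\mathbb{Q}(c)$, the maximal real subfield of $\mathbb{Q}(\zeta)$, which has degree $N/2$. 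So each root $x$ has degree $N/2$ over $\mathbb{Q}$. Because $\deg Q_N=N/2$ for even $N$, $Q_N$ is a rational multiple of the minimal polynomial of $x$, hence irreducible. Its constant term is $\binom{N}{0}=1$, so it is primitive in $\mathbb{Z}[x]$. Finally, $Q_N(1)=\tfrac12\cdot 2^N=2^{N-1}$, which gives $Q_{2^{2n}}(1)=2^{L^2-1}$.

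For the factorization step, on the torus $D_0=4$, and $p_{\rm TC}=4\,Q_{N_v}Q_{N_p}$ with both $Q$-factors irreducible and primitive. This is the claimed factorization into irreducibles, unique up to the units of $\mathbb{Z}[x]$ and the splitting of the constant $4$. Now suppose a circuit of any depth makes $p_{\rm TC}\,p_{H'}$ on-site. By the discussion preceding the proposition, $p_{\rm TC}\,p_{H'}=\prod_i p_i$, with each $p_i\in\mathbb{Z}[x]$ a spectral polynomial satisfying $p_i(1)=d_i>0$. Since $\mathbb{Z}[x]$ is a UFD (Gauss's lemma) and $Q_{2^{2n}}$ is a primitive irreducible element, it is prime. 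It divides the product, so it divides some $p_i=Q_{2^{2n}}\,r$ with $r\in\mathbb{Z}[x]$. Evaluating at $1$ gives $d_i=2^{L^2-1}r(1)$, where $r(1)$ is an integer and is nonzero because $d_i>0$.

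I expect the main obstacle to be the degree count that identifies $\mathbb{Q}(x)$ with the real cyclotomic subfield; this is exactly where $N=2^k$ is needed, since it makes $t^N+1$ cyclotomic and irreducible. A secondary point is to check that the roots $x$ are pairwise distinct. This holds because the $N$ values of $t$ pair up under $t\mapsto t^{-1}$, which sends $y\mapsto -y$ and so fixes $x=y^2$. That leaves $N/2$ distinct $x$ values, matching $\deg Q_N$.
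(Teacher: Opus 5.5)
Your proposal is correct and follows essentially the same route as the paper: the substitution that turns $Q_N(y^2)=0$ into $t^N=-1$, the identification of the relevant field with the maximal real subfield of $\mathbb{Q}(e^{\pi i/N})$, then forcing $Q_{2^{2n}}$ into a single local factor $p_i$ and evaluating at $x=1$. The only difference is in the irreducibility step. The paper infers it from the structure of $\mathbb{Z}_{2N}^\times/\mathbb{Z}_2$, where what is really needed is transitivity on the roots, which holds because every odd residue is a unit mod $2^{2n+1}$. You get it instead directly from the degree count $[\mathbb{Q}(x):\mathbb{Q}]=N/2=\deg Q_N$, and you also make explicit the Gauss-lemma/primality step in $\mathbb{Z}[x]$ that the paper leaves implicit.
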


\begin{proof}
This polynomial $Q_N(x)$ can be rewritten
\[Q_N(x) = \frac12\left((1+\sqrt{x})^N+(1-\sqrt{x})^N\right).\]
Taking $x = t^2$, we can solve for its roots, which satisfy
\[\left(\frac{1+t}{1-t}\right)^N=-1,\]
so
\[\frac{1+t}{1-t} = e^{\pi i m/N} \qquad m\text{ odd}, 1 \leq m < N\]
or
\[x = -\tan^2 \left(\frac{\pi m}{2N}\right)\qquad m\text{ odd}, 1 \leq m < N\]
From this expression, we can see that the splitting field is the maximal real subfield of the cyclotomic field $\mathbb{Q}(e^{\pi i/N}) \cap \mathbb{R}$. The Galois group of this field is $\mathbb{Z}_{2N}^\times/\mathbb{Z}_2$ where the quotient is by the complex conjugation action. This can be used to constrain the factorization of $Q_N(x)$, since any factorization must separate the roots of $Q_N(x)$ into disjoint orbits of the Galois action.

For the case at hand $\mathbb{Z}_{2N}^\times/\mathbb{Z}_2 = \mathbb{Z}_{2^{2n-1}}$. In particular, this means that $Q_{2^{2n}}$ is irreducible. Thus, one of the $p_i$'s must be divisible by $Q_{2^{2n}}$. We therefore have $p_i(1) = Q_{2^{2n}}(1) p_i'(1)$ where $p_i'(1)$ is a nonzero integer. (If sites are composed of toric code site blocked with an ancilla, then corresponding ancilla must have dimension $\frac{1}{2}p_i(1)$.)
\end{proof}

An analogous statement follows for $H_{\rm SSB}$ with the simpler spectral polynomial
\[\label{eqnssbpolynomial}p_{\rm SSB}(x) = 2\left(\sum_{n=0}^{\lfloor L/2 \rfloor } {L \choose 2n} x^n\right),\]
where $L$ is the system size.
\begin{prop}
    Consider $H_{\rm SSB}$ on a lattice of size $L = 2^n$ with $n\in\mathbb{Z}^+$. The factorization \eqref{eqnssbpolynomial} is the (unique!) factorization of $p_{\rm SSB}(x)$ into irreducible polynomials. Therefore, in order to make $p_{\rm SSB}(x) p_{H'}(x)$ on-site by a circuit of any depth, it will be necessary to have a site with spectral polynomial $Q_{2^{n}}(x)$, which would be an output site with spectral polynomial divisible by a nonzero integer multiple of $Q_{2^{n}}(1)=2^{L-1}$.
\end{prop}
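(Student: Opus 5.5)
The plan is to mirror the proof of Proposition~\ref{prop:galoistc}, with $N=L=2^n$ in place of $2^{2n}$.

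\textbf{Step 1: compute the spectral polynomial.} In the $Z$ basis, $H_{\rm SSB}$ counts domain walls divided by two. On a ring of $L$ sites, domain-wall configurations with $2k$ walls correspond to $\binom{L}{2k}$ choices of bond positions. Each such choice is realized by exactly two spin configurations, related by the global flip. Hence
\[p_{\rm SSB}(x)=2\,Q_L(x),\qquad Q_L(x)=\sum_{k}\binom{L}{2k}x^k .\]

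\textbf{Step 2: show $Q_{2^n}$ is irreducible over $\mathbb{Q}$.} Write $Q_L(x)=\tfrac12\left((1+\sqrt x)^L+(1-\sqrt x)^L\right)$. Its roots are $x_m=-\tan^2(\pi m/2L)$ for odd $m$ with $1\le m<L$, which gives $L/2=\deg Q_L$ distinct roots.

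The most direct route to irreducibility is Eisenstein's criterion. Substitute $t=\sqrt{-x}$ and use $\tan$ values: $-x_m=\tan^2(\pi m/2^{n+1})$. Alternatively, set $y=(1+t)/(1-t)$, so the roots correspond to $y^{L}=-1$, i.e.\ primitive $2^{n+1}$-th roots of unity, whose minimal polynomial $y^{2^n}+1$ is irreducible.

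The cleanest argument uses Galois theory. Each $x_m$ is determined by $\zeta^m$ with $\zeta=e^{\pi i/L}$, since $x_m$ is a rational function of $\zeta^m$ invariant under $\zeta^m\mapsto\zeta^{-m}$. The group $\mathrm{Gal}(\mathbb{Q}(\zeta)/\mathbb{Q})\cong\mathbb{Z}_{2L}^\times$ permutes the $\zeta^m$ with $m$ odd transitively. It therefore permutes the $x_m$ transitively, after identifying $m\sim -m$.

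There are exactly $L/2$ distinct $x_m$, and they form a single orbit. So the minimal polynomial of $x_1$ has degree $L/2$, and $Q_L$ is irreducible. Its constant term is $1$ and its leading coefficient is $\binom{L}{L}=1$, so it is primitive. By Gauss's lemma, $p_{\rm SSB}=2\cdot Q_L$ is therefore the unique factorization into irreducibles.

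The main care is needed in the distinctness and transitivity checks. Distinctness holds because $\tan^2$ is injective on $(0,\pi/2)$.

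\textbf{Step 3: factorization argument.} Suppose a unitary on-sites $H_{\rm SSB}\otimes1+1\otimes H'$ into on-site pieces with local dimensions $d_i$. Spectra are unitarily invariant and multiplicative under tensor products, so
\[2Q_L(x)\,p_{H'}(x)=\prod_i p_i(x),\]
all in $\mathbb{Z}[x]$. We shift each $p_i$ by a power of $x$ so that it is a polynomial with nonzero constant term; this does not affect $p_i(1)$, and we may assume $H'\ge0$ and local generators $\ge0$ by shifting.

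Since $\mathbb{Z}[x]$ is a UFD and $Q_L$ is irreducible and primitive, $Q_L$ must divide some single $p_i$. Writing $p_i=Q_L\,p_i'$ with $p_i'\in\mathbb{Z}[x]$ having nonnegative coefficients and $p_i'\neq0$, we get
\[d_i=p_i(1)=Q_L(1)\,p_i'(1)=2^{L-1}p_i'(1).\]
Here $p_i'(1)$ is a positive integer, since $p_i'$ divides a polynomial with nonnegative coefficients evaluated at $1$ and $p_i(1)>0$. This is the claim, and it matches Proposition~\ref{prop:galoisssb}.
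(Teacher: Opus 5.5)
Your proposal is correct and follows the paper's own route. The paper proves this by repeating the argument of Proposition~\ref{prop:galoistc}: the roots $-\tan^2(\pi m/2L)$ form a single orbit under the Galois group of $\mathbb{Q}(e^{\pi i/L})$, so $Q_{2^n}$ is irreducible and must divide a single site's spectral polynomial, which is your Steps 2--3 with the Gauss-lemma and shifting details made explicit. Two small slips, neither needed: the quotient $p_i'$ need not have nonnegative coefficients, but $p_i'(1)=d_i/2^{L-1}$ is a positive integer anyway because $p_i'\in\mathbb{Z}[x]$ and $d_i>0$; and the Eisenstein remark does not apply as stated, since $Q_L$ has constant term $1$, so it can be dropped.
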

\begin{proof}
    The proof follows the proof of Proposition~\ref{prop:galoistc} step by step.
\end{proof}

\section{Blendability}
\label{sec:blendability}

\begin{table*}[t]
\centering
\renewcommand{\arraystretch}{1.15}
\setlength{\tabcolsep}{6pt}

\begin{tabular*}{\textwidth}{
    @{\hspace{12pt}\extracolsep{\fill}}
    l c c c
    @{\extracolsep{0pt}\hspace{12pt}}
}
\toprule
\textbf{Property}
    & $U(1)_{\mathrm{SSB}}$
    & $U(1)_{\mathrm{TC}}$
    & $U(1)_{\mathrm{e}}\times U(1)_{\mathrm{m}}$ \\
\midrule

On-siteability with semi-bounded ancillae
    & \makecell[c]{
        \textbf{No} \\[-1pt]
        {\footnotesize Prop.~\ref{prop:ssbfiniteancilla}}
      }
    & \makecell[c]{
        \textbf{No} \\[-1pt]
        {\footnotesize Prop.~\ref{prop:tcrotors}}
      }
    & \makecell[c]{
        \textbf{No} \\[-1pt]
        {\footnotesize Prop.~\ref{prop:tcrotors}}
      } \\

\addlinespace[0.6em]

On-siteability with unbounded ancillae
    & \makecell[c]{
        \textbf{Yes} \\[-1pt]
        {\footnotesize Prop.~\ref{prop:ssbonsiterotor}}
      }
    & \makecell[c]{
        \textbf{Yes} \\[-1pt]
        {\footnotesize
         Props.~\ref{prop:onsitingTC} and~\ref{prop:emonsite}}
      }
    & \makecell[c]{
        \textbf{Yes} \\[-1pt]
        {\footnotesize Prop.~\ref{prop:emonsite}}
      } \\

\midrule

Blendability with semi-bounded ancillae
    & \makecell[c]{
        \textbf{Yes} \\[-1pt]
        {\footnotesize Prop.~\ref{prop:SSBblend}}
      }
    & \makecell[c]{
        \textbf{Yes}\textsuperscript{*} \\[-1pt]
        {\footnotesize Prop.~\ref{prop:TCblendfermionancilla}}
      }
    & \makecell[c]{
        \textbf{No} \\[-1pt]
        {\footnotesize Sec.~\ref{sec:tcfiniteancillablend}}
      } \\

\addlinespace[0.6em]

Blendability with unbounded ancillae
    & \makecell[c]{
        \textbf{Yes} \\[-1pt]
        {\footnotesize\strut}
      }
    & \makecell[c]{
        \textbf{Yes} \\[-1pt]
        {\footnotesize Prop.~\ref{propemblendwithrotors}}
      }
    & \makecell[c]{
        \textbf{Yes} \\[-1pt]
        {\footnotesize Prop.~\ref{propemblendwithrotors}}
      } \\

\bottomrule
\end{tabular*}

\par\smallskip
\parbox{\dimexpr\textwidth-24pt\relax}{
    \footnotesize
    \textsuperscript{*} We demonstrate an intuitive construction
    with \textit{fermionic} ancillae.
}

\caption{On-siteability and blendability with semi-bounded or unbounded
ancillae for $U(1)_{\mathrm{SSB}}$, $U(1)_{\mathrm{TC}}$, and
$U(1)_{\mathrm{e}}\times U(1)_{\mathrm{m}}$.}
\label{tab:TC}
\end{table*}

By Proposition \ref{prop:disentanglerlemma}, on-siteability requires local factorizability, meaning that the symmetry representation must be expressible as a product of commuting local symmetry representations. However, there is a weaker condition called \textit{blendability} that determines whether a symmetry is able to be localized to a given region $R$ while maintaining the symmetry representation, such that it acts in the usual way deep in the interior of $R$, acts as the identity far outside $R$, and is only modified in a neighborhood of the boundary $\partial R$. When $R$ is a half-space, we call this a blend, after 
\cite{czajka2025anomalieslatticehomotopyquantum,Tu_2026}. More formally:

\begin{defin}{\rm (Blendability)}
    Let $G$ be a group, $\{U(g)\}_{g \in G}$ a set of locality-preserving unitaries on $\mathbb{Z}^d$ indexed by $G$, satisfying $U(g) U(h) = U(gh)$. We call this a locality-preserving unitary $G$ action. A blend from $\{U(g)\}_{g \in G}$ to the identity action is another locality-preserving unitary $G$ action $\{V(g)\}_{g \in G}$ such that there exists an $r$ such that the following hold:
    \begin{enumerate}
        \item For all operators $\mathcal{O}$ supported in $(-\infty,-r) \times \mathbb{Z}^{d-1}$,
        \[V(g) \mathcal{O} V(g)^\dagger = U(g) \mathcal{O} U(g)^\dagger\]
        \item For all operators $\mathcal{O}$ supported in $(r,\infty) \times \mathbb{Z}^{d-1}$,
        \[V(g) \mathcal{O} V(g)^\dagger = \mathcal{O}.\]
    \end{enumerate}
    We allow the addition of ancillae in $[-r,r] \times \mathbb{Z}^{d-1}$ on which $\{V(g)\}_{g \in G}$ may act. When $\{U(g)\}_{g \in G}$ admits a blend to the identity we say it is blendable.
\end{defin}

As mentioned above, if $\{U(g)\}_{g \in G}$ is on-siteable, then it is also blendable (with suitable ancillae added near the blend region $(-r,r) \times \mathbb{Z}^{d-1}$). Indeed, being on-siteable means there exists a circuit $W$ such that
\[U(g) \otimes S(g) = W^\dagger S(g)'W\]
(see \eqref{eqnonsitable}) where $\{S(g)\}_{g \in G}$ and $\{S(g)'\}_{g \in G}$ denote on-site $G$ actions. We can form a blend from $\{U_g \otimes S(g)\}_{g \in G}$ to the identity by restricting $S(g)'$ to a half-space and then applying $W$. We can then also discard the $S(g)$ factors, which are also on-site, which lie outside of the spread of $W$, to obtain a blend from $U(g)$ to the identity.

In one dimension, it was shown in \cite{seifnashri2025disentanglinganomalyfreesymmetriesquantum} that for finite $G$, blendability implies on-siteability (see also \cite{bols2025classificationlocalitypreservingsymmetries}). In two dimensions and higher, it is not currently known whether this is true or not. At the moment, blendability seems to be a more well-behaved property than on-siteability. For example, every locality-preserving unitary $G$-action $U(g)$ can be ``folded'' to produce an action $U(g) \otimes U(g)^{\rm rev}$ where $U(g)^{\rm rev}$ is reversed along some axis, and this folded action admits a blend. Related to this invertibility of the blend anomaly, homotopy-theoretic lattice anomalies most naturally obstruct blendability \cite{czajka2025anomalieslatticehomotopyquantum}.

To gain some intuition for blendability, let us first study the case of $U(1)_{\rm SSB}$. It is simple to see that $U(1)_{\rm SSB}$ is blendable to the identity.
\begin{prop}
    $U(1)_{\rm SSB}$ is blendable (to the identity), even in the absence of ancilla.
    \label{prop:SSBblend}
\end{prop}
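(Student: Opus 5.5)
The plan is to build the blend by truncating the local factorization from the proof of Proposition~\ref{prop:ssbonsiterotor}. We work on the infinite chain $\mathbb{Z}$, or on a long open segment, and recall that $U_{\rm SSB}(\theta)=\prod_i U_i(\theta)$ with
\[U_i(\theta)=e^{i\theta P_i},\qquad P_i=\tfrac12(1+Z_i)\tfrac12(1-Z_{i+1}).\]
The $P_i$ are mutually commuting projectors, all diagonal in the $Z$ basis, and each $U_i$ is a genuine $U(1)$ representation because $U_i(2\pi)=1$. We define
\[V(\theta)=\prod_{i<0} U_i(\theta)=e^{i\theta\sum_{i<0}P_i}.\]
This is a product of commuting, strictly local unitaries, so it is locality-preserving with spread one. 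Since the generator $\sum_{i<0}P_i$ has integer spectrum, $V(\theta)$ is $2\pi$-periodic and satisfies $V(\theta)V(\theta')=V(\theta+\theta')$. It is therefore a locality-preserving unitary $U(1)$ action, and no ancillae are required.

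Next we check the two blend conditions with $r$ equal to a small constant such as $2$. For $\mathcal{O}$ supported in $(r,\infty)$, every factor of $V(\theta)$ is supported on sites $\le 0$, so every factor commutes with $\mathcal{O}$ and $V\mathcal{O}V^\dagger=\mathcal{O}$. For $\mathcal{O}$ supported in $(-\infty,-r)$, the factors $U_i$ with $i\ge 0$ are supported on $\{i,i+1\}\subset[0,\infty)$, so they commute with $\mathcal{O}$. Because all the $U_i$ commute with one another, we may write $U(\theta)=V(\theta)\prod_{i\ge0}U_i(\theta)$, and hence $U\mathcal{O}U^\dagger=V\mathcal{O}V^\dagger$.

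The only subtle point is how to interpret $U_{\rm SSB}$ as an infinite product, or equivalently as a locality-preserving automorphism, on $\mathbb{Z}$. We handle this by noting that the conjugation action of $U$ on any local operator is determined by finitely many factors, namely those overlapping its support. Equivalently, we identify $U$ with the action $X_j\mapsto X_j e^{i\frac{\theta}{2}(Z_{j-1}Z_j+Z_jZ_{j+1})}$ given earlier in the paper and check that this agrees factor by factor with $\prod_i U_i$. The telescoping terms $Z_i-Z_{i+1}$ mean that $\sum_i P_i$ differs from $H_{\rm SSB}$ only by a boundary term. Truncating the sum to $i<0$ leaves a single dangling $Z_0$-type term, but this affects nothing because the truncation is made at the level of $P_i$, which are already integer-valued. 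This integrality of each individual factor is the step that actually carries the argument.
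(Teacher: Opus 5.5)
Your proposal is correct and is essentially the paper's proof: the paper's blend generator $H_{\rm blend}=\sum_{j\le 0}\frac14(1-Z_jZ_{j+1})+\frac14(1-Z_1)$ equals, up to a one-site shift and an irrelevant term at $-\infty$, your truncated sum $\sum_{i}\frac12(1+Z_i)\frac12(1-Z_{i+1})$. The paper's explicit boundary term $\frac14(1-Z_1)$ is exactly the dangling telescoped $Z$ term you mention. That term is not inert: without it, naively truncating the Ising terms gives $X_0\mapsto -X_0$ at $\theta=2\pi$. It is better described as the necessary boundary correction, which your $P_i$-level truncation supplies automatically.
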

\begin{proof}
    For a boundary at site at $j=1$ (left being $U(1)_{\rm SSB}$, right being the identity), we can simply choose the generator of the blend $H_{\rm blend}$ to be
    \begin{align}
        H_{\rm blend}=\sum_{j=-\infty}^0 \frac{1}{4}\left(1-Z_j Z_{j+1}\right)+ \frac{1}{4}\left(1- Z_{1}\right)\quad.
    \end{align}
\end{proof}
We observe that instead of assigning fractional energy to the $10$ and $01$ domain walls, the extra term allows us to consistently assign integer unit energy to either $10$ or $01$ and zero energy to the opposite. This possibility represents a trivial gapped boundary in which the condensation of the domain wall at the boundary is avoided.

The story is not as simple for the toric code symmetries due to the presence of fractionalized excitations $e$, $m$, and $f$, which have non-trivial braiding relations. To construct an appropriate blend, we need to first examine the boundary physics of the toric code, where we will, perhaps surprisingly, discover a difference between $U(1)_{\rm TC}$ and $U(1)_e\times U(1)_m$. The results are presented in Table~\ref{tab:TC}, and in the following sections we will give the corresponding proofs.

\begin{figure*}
    \centering
    \includegraphics[width=0.7\linewidth]{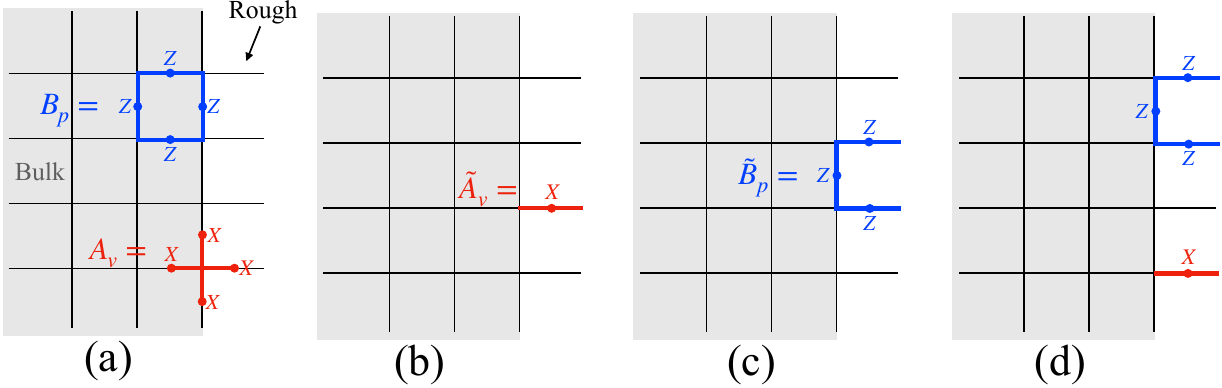}
    \caption{Possible boundary operators in the absence of ancilla degrees of freedom for a rough boundary. (a) Only toric code terms fully contained within the boundary are kept. In this case both $e$ and $m$ anyons are condensed at the boundary. (b) In addition to (a) we may add $\tilde{A}_v$ which causes $e$ to be an excitation at the boundary, however $m$ is still condensed. (c) Instead, to (a) we may add $\tilde{B}_p$ which causes $m$ to be a boundary excitation, however with $e$ still condensed. (d) If we add both $\tilde{A}_v$ and $\tilde{B}_p$, the boundary is governed by the 1+1d transverse Ising model which generically does not feature a quantized energy spectrum.}
    \label{fig:tcboundary}
\end{figure*}

\subsection{$U(1)_{\rm TC}$}

In the absence of any ancilla degrees of freedom, blends of $U(1)_{\rm TC}$ are closely related to boundary conditions for the toric code Hamiltonian. Let us briefly review some common boundary conditions. None of them will constitute a blend. We depict all four possibilities on a \textit{rough} boundary truncation for the half-space (where the vertices straddle the boundary~\cite{KitaevKong2012Boundaries}), in Fig.~\ref{fig:tcboundary}.
\begin{enumerate}
    \item Free boundary: construct $H_{\rm blend}$ by throwing away all terms that have any support outside of the half space. Such a truncation results in a Hamiltonian with extensive ground state degeneracy, as well as a half-integer spectrum, since single $e$ or $m$ anyons can be created at the boundary by local operators (and hence we say they are both ``condensed''). Thus, it does not constitute a blend.
    \item $m$-condensing boundary: We can add further boundary terms $\frac14 \sum_{v\in \partial X}(1-\tilde{A}_v)$ where $\tilde{A}_v$ are ``chopped'' vertex terms to $H_{\rm blend}$. With these terms, we can no longer excite a single $e$ anyon at the boundary without also exciting one of these terms, and thus the $e$ anyons are \emph{not} condensed at this boundary. However, $m$ remains condensed, and so there is still a half-integer spectrum. See Fig.~\ref{fig:tcboundary}~(b).
    \item $e$-condensing boundary: We could instead add boundary terms $\sum_{p\in\partial X}\tilde{B}_p$, which would avoid $m$ condensation. In this case, $e$ remains condensed and gives rise to a half-integer spectrum. See Fig.~\ref{fig:tcboundary}~(c).
    \item Ising boundary: we could add both terms: $H_{\rm Ising}=\sum_{j\in\partial X}(h\tilde{A}_{v(j)}+J \tilde{B}_{p(j)})$, but $\tilde{A}_{v(j)}$ and $\tilde{B}_{p(j)}$ do not commute and give rise to a 1+1d transverse Ising model which does not have an integer spectrum. See Fig.~\ref{fig:tcboundary}~(d).
\end{enumerate}
One feature with all the above boundary conditions is that $e^{i2\pi H_{\rm blend}}\neq 1$. (1) $U_{\rm blend}(2\pi)=\prod f$ (fermionic line), (2) $U_{\rm blend}(2\pi)=\prod m$, (3) $U_{\rm blend}(2\pi)=\prod e$, and (4) $U_{\rm blend}(2\pi)=e^{i\frac{\pi}{2} (1-H_{\rm Ising})}$. This means that they all break the requirements for being a blend of the $U(1)_{\rm TC}$ symmetry.

There is one gapped boundary which constitutes a blend, in the presence of fermionic ancillae:
\begin{prop}
    $U(1)_{\rm TC}$ is blendable to the identity in the presence of finite-dimensional fermionic ancilla.
    \label{prop:TCblendfermionancilla}
\end{prop}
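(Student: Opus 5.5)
The plan is to build the blend directly out of the locally factorized form of $H_{\rm TC}$ obtained in the proof of Proposition~\ref{prop:onsitingTC}. There we showed that, after adding a complex fermion (Majoranas $\gamma_p,\bar\gamma_p$) on every plaquette, the toric code Hamiltonian splits as
\[ H_{\rm TC}=\sum_v \left(O_v + P_v\right), \]
with $O_v$ as in Eq.~\eqref{eq:OV} and $P_v$ as in Eq.~\eqref{eq:PV}. Each summand $O_v+P_v$ has integer spectrum, and distinct summands commute. This identity holds at the level of operators on the enlarged Hilbert space, because the $\chi$ terms cancel telescopically in $\sum_v$ (each $\chi_e$ enters once with $+$ and once with $-$ from the two endpoints of $e$). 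So the first step is to define the candidate blend generator by truncation:
\[ H_{\rm blend}=\sum_{v:\,x(v)\le 0}\left(O_v+P_v\right). \]
Every term is local, integer-valued, and mutually commuting, so $e^{2\pi i H_{\rm blend}}=1$, and $V(\theta)=e^{i\theta H_{\rm blend}}$ is a genuine locality-preserving $U(1)$ action: it is a finite-depth circuit of commuting gates $e^{i\theta(O_v+P_v)}$. The fermionic ancillae are needed only on plaquettes, which is consistent with the ``finite-dimensional fermionic ancilla'' hypothesis, and no rotors are used.

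Next I would verify the two blend conditions. Condition 2 is immediate: for $x(v)>r$ with $r$ larger than the support of $O_v+P_v$, none of the retained terms touch operators supported in $(r,\infty)\times\mathbb{Z}$, so those operators commute with $V(\theta)$.

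Condition 1 is the substantive check. Take $\mathcal{O}$ supported deep on the left, at distance greater than a few lattice spacings from $x=0$. Conjugation by $e^{i\theta H_{\rm blend}}$ only involves terms overlapping $\mathcal{O}$. All such terms have $x(v)\le 0$, so
\[ V(\theta)\,\mathcal{O}\,V(\theta)^\dagger = e^{i\theta\sum_v (O_v+P_v)}\,\mathcal{O}\,e^{-i\theta\sum_v (O_v+P_v)} = U_{\rm TC}(\theta)\,\mathcal{O}\,U_{\rm TC}(\theta)^\dagger \]
on the system spins. Here I must check two things. First, for $\mathcal{O}$ acting trivially on the fermions, this equals the original $U_{\rm TC}$ action, because $\sum_v(O_v+P_v)=H_{\rm TC}\otimes 1$ exactly. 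Second, $U_{\rm TC}$ acts trivially on the fermionic ancillae deep in the bulk; this follows from the same identity, so fermion operators there commute with the full sum.

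The step I expect to be the main obstacle is confirming that the telescoping identity really leaves no residual fermion bilinears. In other words, we need $\sum_v \chi$-terms to vanish identically, not merely on a constrained subspace, so that the bulk action is exactly $U_{\rm TC}\otimes 1$. This requires checking the sign and ordering conventions of $\chi_{N,E,S,W}$ so that $\chi_{E(v)}=\chi_{W(v+\hat x)}$ and $\chi_{N(v)}=\chi_{S(v+\hat y)}$. Equivalently, it requires that each edge operator $\chi_e$ is shared by its two endpoints, as used in Proposition~\ref{prop:TCSREfinitedim}. If fermion parity or Jordan--Wigner-type sign issues arise, I would instead work in the fermionic (super)tensor product and note that the blend is only required to act correctly on even operators, which suffices for a bosonic symmetry.

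Finally, I would remark on the physical reading. The truncated boundary realizes a gapped boundary where the fermion $f$, bound to $\gamma$, is what ends. Thus $U_{\rm blend}(2\pi)=1$, unlike the four bosonic boundaries listed above.
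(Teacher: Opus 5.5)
\textbf{There is a genuine gap in your first step.} The paper never shows that the summands $O_v+P_v$ commute or have integer spectrum, and in fact they do neither. Proposition~\ref{prop:onsitingTC} establishes only $[O_v,O_{v'}]=0$ and $[P_v,P_{v'}]=0$. The $\chi$'s commute with the products $A_{v'}B_{p(v')}$, but not with $A_{v'}$ and $B_{p(v')}$ separately.

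To see this, note that $\chi_e$ carries $Y_e$ and only $Z$'s elsewhere; this is what makes $W_e=\frac{1}{\sqrt2}(Z_e+\chi_e)$ in Proposition~\ref{prop:TCSREfinitedim} a commuting family. So $\chi_e$ anticommutes with the two plaquette terms adjacent to $e$. For $e=N(v),E(v)$ one of these is $B_{p(v)}$. Since $\chi_e$ commutes with $A_vB_{p(v)}$, it must then also anticommute with $A_v$. By contrast, $\chi_{S(v)}$ and $\chi_{W(v)}$ commute with both $A_v$ and $B_{p(v)}$.

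Now restrict to the joint eigenspace $A_vB_{p(v)}=\chi_{S(v)}=\chi_{W(v)}=\chi_{N(v)}\chi_{E(v)}=+1$. On it, $A_v$ and $\chi_{N(v)}$ still act as anticommuting involutions, and $P_v=\frac12(1-A_v)$, $O_v=\frac12(\chi_{N(v)}-1)$. Hence $O_v+P_v=\frac12(\chi_{N(v)}-A_v)$, which has eigenvalues $\pm 1/\sqrt2$. Similarly $[P_v,O_{v+\hat x}]\neq 0$, because $O_{v+\hat x}$ contains $\chi_{N(v+\hat x)}-\chi_{E(v)}$, and both of these flip $A_v$ and $B_{p(v)}$.

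This non-commutativity is exactly why the disentangler in Proposition~\ref{prop:onsitingTC} is built from two \emph{sequential} rotor-controlled layers. $V$ first absorbs the $P_v$ into the rotor charges, and only afterwards does $W$ remove $\sum_v O_v$. No rotor-free commuting integer decomposition is hidden in that proof. The telescoping you flagged as the main obstacle is fine; the failure is elsewhere.

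\textbf{Consequences for your blend.} Your $H_{\rm blend}$ is not a sum of commuting integer-valued terms. Explicitly,
$H_{\rm blend}=\sum_{x(v)\le 0}\frac14(2-A_v-B_{p(v)})+\frac14\sum_{x(v)=0}\chi_{E(v)}$.
The leftover $\chi_{E(v)}$ on the boundary column hop $f$ along that column while anticommuting with the adjacent $A_v$ and $B_{p(v)}$. This is a transverse-field-Ising-type boundary, akin to case (d) of Fig.~\ref{fig:tcboundary}, with non-integer spectrum. It follows that:
\begin{itemize}
\item $e^{2\pi i H_{\rm blend}}\neq 1$;
\item $e^{i\theta H_{\rm blend}}$ is neither a $U(1)$ representation nor a circuit of commuting gates;
\item your check of condition~1, which relied on only the terms overlapping $\mathcal{O}$ contributing, also fails.
\end{itemize}

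\textbf{What is missing.} You need an actual boundary termination whose local terms commute and are integer-quantized. The paper obtains this by modifying the rough free boundary:
\begin{itemize}
\item it adds Majoranas $a_j,b_j$ on the boundary sites;
\item it replaces the perpendicular $X$ in the boundary vertex terms by the Majorana pair;
\item it adds the terms $iZ_{j+\frac12}a_{j+1}b_j$.
\end{itemize}
This confines both $e$ and $m$ at the boundary and condenses only $f$ bound to the ancilla fermion. Your closing physical remark describes this boundary correctly, but the operator you wrote down does not realize it.
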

\textit{Proof.} To create a blend for $U(1)_{\rm TC}$, we modify the boundary in Fig.~\ref{fig:tcboundary}a, in the following manner: finite-dimensional fermionic ancilla are added to each horizontal boundary site, denoted by Majorana creation operators $a_j$ and $b_j$. The boundary operator $A_\nu$ is modified to $\tilde{A}_{\nu}$ (see figure below), where horizontal (perpendical to the boundary) bond $X$ is discarded and replaced by $a$ and $b$ Majoranas at the same site. We also add a boundary term $i Z_{j+\frac{1}{2}} a_{j+1} b_{j}$, which allows us to confine the boundary $m$ anyon.
\begin{figure}[h]
    \centering
    \includegraphics[width=0.25\linewidth]{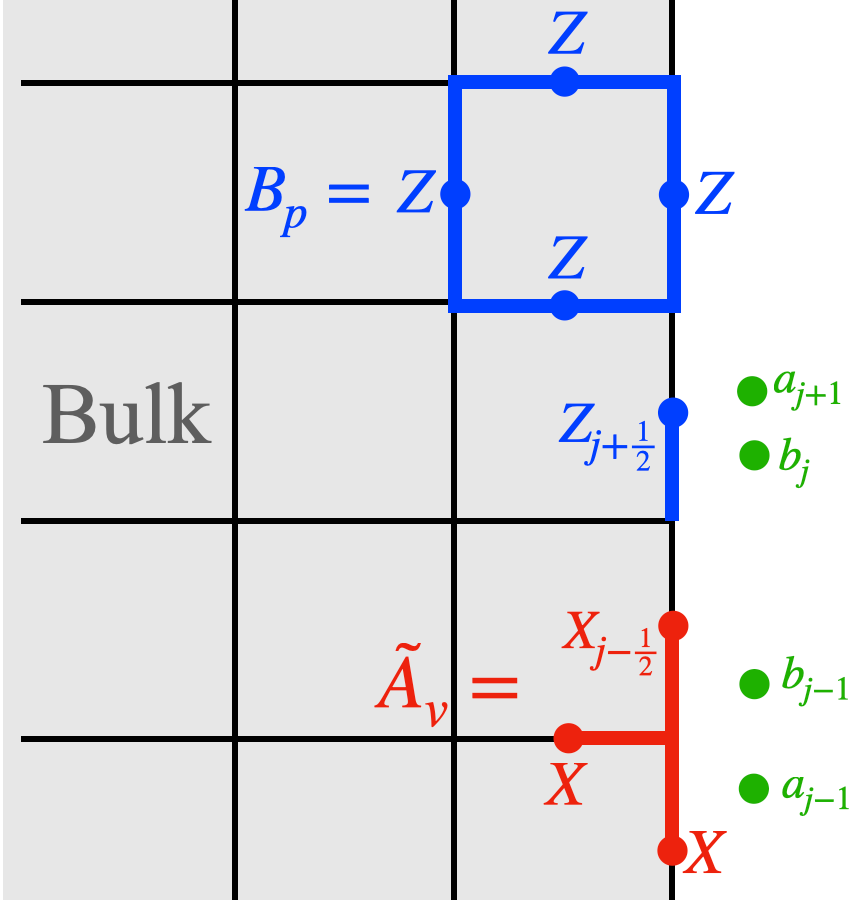}
\end{figure}
Such a construction realizes the $f$-condensed boundary with confined $e$ and $m$ anyons, thereby retaining the $\mathbb{Z}$ quantization of the symmetry. 
\qed

Despite the success in creating a blend for $U(1)_{\rm TC}$, in the next section we will give an argument for why $U(1)_e \times U(1)_m$ does not admit a blend under these conditions. We will also show that $U(1)_{\rm TC}$ does admit a blend with bosonic unbounded ancilla.

\subsection{$U(1)_e\times U(1)_m$}

Now we consider the symmetry $G=U(1)_e \times U(1)_m$. As described above (see Fig. \ref{fig:tcboundary}), $U(1)_e$ and $U(1)_m$ individually admit blends, constructed from the usual gapped boundaries of the toric code. However, these blend generators do not commute with one another, so these do not constitute a blend for $U(1)_e \times U(1)_m$ since the blend generators do not satisfy the commutative $G$ group law. In fact we will argue that, unlike the SSB and toric code Hamiltonians, $U(1)_e \times U(1)_m$ is not blendable to the identity, even in the presence of semi-bounded ancilla. We will also explicitly construct a blend to the identity for $U(1)_e\times U(1)_m$ in the presence of unbounded (bosonic) ancilla. This also gives a blend for $U(1)_{\rm TC}$.

\subsubsection{Argument for blendability obstruction with semi-bounded ancilla}
\label{sec:tcfiniteancillablend}

To tackle this problem, we will employ the method of boundary algebras~\cite{Jones_Naaijkens_Penneys_Wallick_Izumi_2025}. This is a relative of the ground state algebra we discussed above, but in the presence of a boundary. Precise definitions may be found in \cite{Jones_Naaijkens_Penneys_Wallick_Izumi_2025}. In short, given a commuting-projector Hamiltonian, which for each region $R$ has an associated ground state projector $P_R$ consisting of projections supported inside $R$, we can define a net of algebra $\mathcal{B}_R = P_{R^{+s}} \mathcal{A}_R P_{R^{+s}}$ where $\mathcal{A}_R$ denotes all the local operators supported inside $R$, and $s$ is a uniform constant. For models like the toric code which satisfy local topological order conditions, for all regions $R$ some fixed distance into the bulk, $\mathcal{B}_R \cong \mathbb{C}$. Therefore we can consider the algebra $\mathcal{B}_R$ as a net of algebras supported on the boundary of model. For the toric code with its open boundary condition, this net of algebras is bounded-spread isomorphic to the algebra of $\mathbb{Z}_2$-symmetric operators of a spin-$1/2$ chain \cite{Jones_Naaijkens_Penneys_Wallick_Izumi_2025}.

We can turn the existence of a blend into the existence of a certain operator in the boundary algebra of $H$, where the boundary algebra is the ground state algebra of $H$ truncated to a region with boundary \cite{Jones_Naaijkens_Penneys_Wallick_Izumi_2025}.

\begin{lemma}\label{propboundaryalgblend}
    Suppose $H$ is a commuting-projector Hamiltonian with integer spectrum satisfying local topological order conditions, with range $r$ terms, and $H$ admits a blend $H_{\rm blend} = H + H_\partial$ in some region $R$, where $H_\partial$ is supported in $\partial R^{+s}$. Then we obtain a locally-generated operator $\tilde H_\partial$ in the boundary algebra of $H$ satisfying
    \[e^{2\pi i \tilde H_\partial}=P_\partial e^{-2\pi i H} P_\partial\]
\end{lemma}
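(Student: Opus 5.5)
The plan is to extract $\tilde H_\partial$ from the one-parameter unitary family that measures how the blend differs from the naive truncation of $H$ to $R$, and then compress it with the bulk ground state projector. Write $H_R$ for the sum of the commuting-projector terms of $H$ supported entirely inside $R$, and $P_\partial$ for the product of the ground state projectors of those terms (so $\mathcal{B}$ is formed by compressing with $P_\partial$, as in \cite{Jones_Naaijkens_Penneys_Wallick_Izumi_2025}). Since the terms of $H_R$ commute, $U_R(\theta)=e^{i\theta H_R}$ is an FDQC that commutes with every term of $H_R$, and hence with $P_\partial$. The blend gives a second locality-preserving $U(1)$ action $U_{\rm blend}(\theta)=e^{i\theta H_{\rm blend}}$ with $U_{\rm blend}(2\pi)=1$.

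\emph{Step 1.} Define
\[
V_\partial(\theta)=U_R(\theta)^\dagger U_{\rm blend}(\theta)=U_R(\theta)^\dagger e^{i\theta(H_R+H_\partial)} .
\]
By the two defining conditions of a blend, conjugation by $V_\partial(\theta)$ acts trivially on operators supported deep inside $R$ and far outside $R$. By the standard structure theory of QCAs \cite{Gross_2012,Haah_2023}, a locality-preserving unitary that acts trivially away from a strip is, up to a phase, a unitary supported in $\partial R^{+s'}$ for some $s'$ depending only on $r$ and $s$. The same conclusion can be seen directly via the interaction picture, since $H_\partial$ is supported in $\partial R^{+s}$ and $U_R$ has finite spread.

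\emph{Step 2.} Evaluate at $\theta=2\pi$. Because $U_{\rm blend}(2\pi)=1$, we get
\[
V_\partial(2\pi)=e^{-2\pi i H_R}.
\]
The integer spectrum of the bulk terms means that, after compressing by $P_\partial$, only the terms straddling the boundary survive. Comparing with $H$ on the full lattice, this gives $P_\partial V_\partial(2\pi) P_\partial = P_\partial e^{-2\pi i H} P_\partial$, where the terms of $H$ outside $R$ are dropped, which is the intended reading of the right-hand side.

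\emph{Step 3.} Set $\tilde H_\partial:=P_\partial\, G_\partial\, P_\partial$, where $G_\partial=-i\,\frac{d}{d\theta}V_\partial(\theta)\big|_{\theta=0}$. One computes $G_\partial=H_\partial$ at $\theta=0$, and more generally $G_\partial$ is a sum of local terms near $\partial R$, so $\tilde H_\partial$ is locally generated and lies in the boundary algebra. Local topological order is what guarantees that compressions of operators supported near the boundary land in $\mathcal{B}$ and that the bulk contributes only scalars.

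\emph{Main obstacle.} To conclude $e^{2\pi i\tilde H_\partial}=P_\partial V_\partial(2\pi)P_\partial$, I need $\theta\mapsto P_\partial V_\partial(\theta)P_\partial$ to be a one-parameter group on $P_\partial\mathcal{H}$, which amounts to $[V_\partial(\theta),P_\partial]=0$.
\begin{itemize}
\item \emph{First attempt.} Replace $H_\partial$ by its average under conjugation by $U_R(\phi)$, i.e.\ by its twirl over the bulk $U(1)$. This preserves locality and the blend property and makes $H_\partial$ commute with $H_R$. Then $V_\partial(\theta)=e^{i\theta H_\partial}$, which commutes with each bulk projector, and Stone's theorem (as in Proposition~\ref{prop:Stone}) finishes the argument.
\item \emph{Fallback.} If twirling breaks $U_{\rm blend}(2\pi)=1$, I would instead use LTO. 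The operator $P_\partial V_\partial(\theta) P_\partial$ can be rewritten, using the commuting projectors, as an element of $\mathcal{B}$ whose product law is inherited from $U_{\rm blend}(\theta)U_{\rm blend}(\theta')=U_{\rm blend}(\theta+\theta')$ together with the fact that $U_R$ commutes with $P_\partial$.
\end{itemize}
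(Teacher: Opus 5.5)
There is a genuine gap, at exactly the point you flag as the main obstacle.

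\emph{The twirl.} Twirling $H_\partial$ over $U_R(\phi)$ does make it commute with $H_R$, and hence with $P_\partial$, the zero-eigenspace projector of $H_R$. It does not preserve the blend, however. Since $H_R$ is fixed by the twirl, the twirl replaces $H_{\rm blend}=H_R+H_\partial$ by its pinching $\sum_\lambda \Pi_\lambda H_{\rm blend}\Pi_\lambda$ with respect to the spectral projections of $H_R$. The pinching of an integer-spectrum operator generally has non-integer spectrum. A toy check on one qubit: for $H_R=\tfrac14(1-Z)$ and $H_{\rm blend}=\tfrac12(1-X)$, the twirled generator is $\tfrac12$, so $e^{2\pi i(H_R+\overline{H}_\partial)}=-1$. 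Once $e^{2\pi i(H_R+H_\partial)}=1$ is lost, you lose $e^{2\pi i H_\partial}=e^{-2\pi i H_R}$, which is your only link to the right-hand side. There is also a periodicity issue: when the terms are half-integer valued, as for the toric code, $U_R(\phi)$ is only $4\pi$-periodic, so averaging over $[0,2\pi)$ is not even a projection.

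\emph{The fallback.} It does not close the gap either. $P_\partial V_\partial(\theta)P_\partial=P_\partial U_{\rm blend}(\theta)P_\partial$ is multiplicative in $\theta$ only if $U_{\rm blend}(\theta)$ preserves $P_\partial\mathcal{H}$. That is the same commutation you are trying to establish, so the argument is circular.

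\emph{Step 2.} Your justification here is also off. The individual terms of $H$ need not have integer spectrum; for $U(1)_{\rm TC}$ and $U(1)_e$ they are half-integer valued. That is precisely why $e^{-2\pi i H}$ telescopes to a nontrivial boundary line rather than to the identity.

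\emph{The missing idea.} This is the whole of the paper's proof: change the bookkeeping, not the operator. Because $H_\partial$ is strictly supported in the strip $\partial R^{+s}$, move every term of $H$ whose support meets that strip into $H_\partial$.
\begin{itemize}
\item This leaves $H_{\rm blend}$ literally unchanged, so $e^{2\pi i H_{\rm blend}}=1$ survives.
\item The new $H_\partial$ is still supported in the thickened strip $\partial R^{+(s+r)}$.
\item Every remaining term of $H$ commutes with the new $H_\partial$: its support is disjoint from the strip, and it commutes with the absorbed terms because they are commuting projectors.
\end{itemize}
Now take $P_\partial$ to be built from the remaining terms. Then $[H,H_\partial]=[P_\partial,H_\partial]=0$, which gives $e^{2\pi i H_\partial}=e^{-2\pi i H}$ and $P_\partial e^{2\pi i H_\partial}P_\partial=e^{2\pi i P_\partial H_\partial P_\partial}$ on $P_\partial\mathcal{H}$, with $\tilde H_\partial=\sum P_\partial h_\partial P_\partial$. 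In this form $V_\partial(\theta)=e^{i\theta H_\partial}$ holds exactly, so neither the QCA structure theory of your Step 1 nor any twirl is needed.
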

\begin{proof}
    By including more terms in $H$ inside $H_\partial$ and thus making the blend region larger, we may assume  $H_\partial$ commutes with each term of $H$ and hence with the boundary projector $P_\partial$. We have
    \[P_\partial e^{2\pi i H_\partial}P_\partial = e^{2\pi i \sum P_\partial h_\partial P_\partial} = P_\partial e^{-2\pi i H}P_\partial\]
    Define
    \[\tilde H_\partial = \sum P_\partial h_\partial P_\partial \]
    and the result follows.
\end{proof}

Let us now give a sketch of the proof for why $U(1)_e \times U(1)_m$ is not blendable to the identity, even in the presence of semi-bounded ancilla. Let us assume that a blend of $U(1)_e \times U(1)_m$ is possible, and then arrive at a contradiction. This assumption will allow us to construct a local (finite-range) unitary that is charged under a $U(1)$ symmetry, whose generator is semi-bounded. This semi-bounded property implies that such an operator is not possible in systems, thereby arriving at a contradiction.

    First, we note that it is sufficient to add the ancillae only near the blend region, since bulk ancillae act in an on-site fashion and can be discarded. Therefore, by Lemma \ref{propboundaryalgblend}, our blend will give rise to a $\tilde H_{\partial,e,m}$ in the boundary algebra of the toric code, potentially tensored with ancillae. We will assume these generators are bounded from below. Moreover, to be blends, they must satisfy
    \[ [\tilde H_{\partial,e},\tilde H_{\partial,m}] = 0\]
    and
    \[ \exp( 2\pi i \tilde H_{\partial,e}) = W_{\partial,m} \\
    \exp( 2\pi i \tilde H_{\partial,m}) = W_{\partial,e}\]
    where $W_{\partial,e,m}$ are boundary line operators for $e$ or $m$.

    As described above, the toric code boundary algebra (with ancillae) can be identified as the $\mathbb{Z}_2$-symmetric subalgebra of the algebra of a spin-$1/2$ chain (tensor with ancillae). We can map this to a usual 1+1d algebra by introducing a second, reference boundary. We choose this reference boundary to be a rough boundary a fixed distance from our blend region, separated by a few bulk projectors, and take the $m$-condensed gapped boundary in Fig. \ref{fig:tcboundary}(b). The ground state algebra of this configuration (call this the ``slab algebra'') is identified with the whole algebra of the spin-1/2 chain, where the missing $\mathbb{Z}_2$-odd operator is supplied by the $m$ string stretching from the reference boundary to the blend boundary.
    
    The boundary algebra includes into the slab algebra and so we will consider $\tilde H_{\partial,e}$, $\tilde H_{\partial,m}$ as elements of the slab algebra. Note that in the slab algebra, $W_{\partial,m} = 1$ since the $m$ anyon is condensed at the reference boundary, meaning if $P_S$ is the groundstate projector for the slab, $P_S W_{\partial,m} P_S = P_S$.

    Since it is local in the slab algebra, let us consider truncating $\tilde H_{\partial,e}$ to a finite interval $I$ along the boundary. $\exp(2\pi i \tilde H_{\partial,e}^I)$, which is a truncation of a locality-preserving operator, must be equal to its bulk action $W_{\partial,m}$ times two unitary operators supported near the end points of $I$. The bulk action creates an $m$ line connecting the two unitary boundary actions, which can be moved in the slab algebra to the reference boundary. This lets us write
    \[\exp(2\pi i \tilde H_{\partial,e}^I) = U_L U_R\]
    Comparing with $e^{2\pi i \tilde H_{\partial,m}}=W_{\partial,e}$, the braiding implies that the end point unitaries $U_L$, $U_R$ carry half-integer charge under $\tilde H_{\partial,e}$.

    By choosing a truncation which commutes with $\tilde H_{\partial,m}$ (for example, by Haar averaging), $U_L U_R$ is neutral. It follows that $U_L$, $U_R$ have definite, non-zero charge under $\tilde H_{\partial,e}$. However, this contradicts the boundedness of $\tilde H_{\partial,e}$, since we may use $U_L$ or $U_R$ to lower the charge arbitrarily.

\subsubsection{Unbounded ancilla}

The story changes when unbounded ancillae are added, especially since $U(1)_A\times U(1)_V$ charge pumps are possible in this context~\cite{fidkowski2025noninvertiblebosonicchiralsymmetry,thorngren2026chirallatticegaugetheories}. Using this fact, one can show that the symmetry is blendable.
\begin{prop}\label{propemblendwithrotors}
$U(1)_e \times U(1)_m$ is blendable in the presence of unbounded ancillae.    
\end{prop}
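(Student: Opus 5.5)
The plan is to deduce blendability directly from on-siteability, using the general argument given at the start of this appendix. Proposition~\ref{prop:emonsite} gives an explicit finite-depth circuit $U=WV$ acting on the toric code spins together with integer-charged rotors $\phi_v,\xi_p$. It satisfies $UH_eU^\dagger=\sum_v N_v$ and $UH_mU^\dagger=\sum_p N_p$, with both identities holding simultaneously. The blend generators will be defined as
\[H_e^{\rm blend}=U^\dagger\Big(\sum_{v\in R}N_v\Big)U,\qquad H_m^{\rm blend}=U^\dagger\Big(\sum_{p\in R}N_p\Big)U,\]
where $R$ is the left half-space.

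Several properties then follow from the conjugated form. The two generators commute because the $N$'s do, and they are integer-quantized, so $e^{2\pi i H^{\rm blend}_{e,m}}=1$. The resulting $U(1)_e\times U(1)_m$ action is a genuine group action, and it is locality-preserving because $U$ is a finite-depth circuit. Since $U$ has spread $s$, the blend acts as $U_e(\theta)\otimes e^{i\theta\sum N_v}$ (and similarly for $m$) on operators deep inside $R$. Far outside $R$, the action of the rotor rotations disappears entirely, so it is trivial on operators supported there.

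The remaining work is to remove the rotors away from the blend region, since the definition only permits ancillae in $[-r,r]\times\mathbb{Z}$. There are two regions to handle.
\begin{itemize}
\item Deep in $R$, the action restricted to the spins is exactly $U_{e,m}(\theta)$, with the on-site rotor rotation tensored on. That rotor factor is on-site, so it can be discarded, exactly as in the general on-site$\Rightarrow$blend argument.
\item Far to the right, the rotors are acted on trivially. We would replace $U$ there by a truncated circuit $U_R$ that keeps only the gates within distance $s$ of $R$, so that $U_R^\dagger(\sum_{v\in R}N_v)U_R$ is still conjugation of integer-spectrum commuting generators.
\end{itemize}
Because $U$ is diagonal in the rotor-angle basis and built from locally gauge-invariant elements (proof of Prop.~\ref{prop:emonsite}), truncating it to $R^{+s}$ keeps it unitary and well-defined. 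Right of the interface the truncated generators then act trivially on the spins.

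The main obstacle is making this truncation rigorous. The circuit $U$ contains factors such as $X_e^{\lfloor\phi_T-\phi_B\rfloor}$ that couple rotors on both sides of an edge, and a careless truncation could break the clean conjugation identities. If truncation proves awkward, the fallback is to keep all rotors to the right of $R$ and argue that $U^\dagger(\sum_{R}N)U$ already acts trivially there. The rotors could then be frozen by projecting onto the $N=0$ sector. That projection is allowed because the rotors are unbounded ancillae outside the support of the action.

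Finally, $U(1)_{\rm TC}$ is the diagonal subgroup of $U(1)_e\times U(1)_m$. Restricting the blend to this diagonal, with generator $H_e^{\rm blend}+H_m^{\rm blend}$, gives a blend for $U(1)_{\rm TC}$ as well.
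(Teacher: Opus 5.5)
Your argument is correct, but it is not how the paper proves this proposition. You combine the on-siting circuit $U=WV$ of Prop.~\ref{prop:emonsite} with the general ``on-siteable $\Rightarrow$ blendable'' remark at the start of this appendix. The blend is $U^\dagger e^{i\theta N_R}U$ with $N_R=\sum_{v\in R}N_v$ (and its plaquette analogue), so commutativity and $2\pi$-periodicity are inherited from the rotor number operators.

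The paper instead builds the blend by hand on a rough boundary:
\begin{enumerate}
\item it replaces only the perpendicular boundary qubits by rotors;
\item it modifies the boundary stars via $X_j\to(-1)^{N_j}$;
\item it lets $U(1)_e$ act through the half-charged rotation $U_V(\theta/2)$ and $U(1)_m$ through the $\mathbb{Z}_2$-gauged winding operator $\tilde U_A(\theta)$.
\end{enumerate}
In other words, the boundary hosts a gauged $U(1)_V\times U(1)_A$ charge pump.

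Your route is shorter given Prop.~\ref{prop:emonsite} and applies verbatim to any on-siteable symmetry. However, it relies entirely on that intricate floor-function circuit and leaves behind an opaque strip of vertex and plaquette rotors. The paper's route is independent of Prop.~\ref{prop:emonsite} and uses a single row of boundary rotors. It also makes the role of unboundedness explicit. Endpoints of a truncated $\tilde U_A$ shift the boundary rotor numbers, so they carry half-integer $U(1)_e$ charge that can be accumulated without bound. That is exactly what the semi-bounded obstruction of Sec.~\ref{sec:tcfiniteancillablend} forbids.

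The step you flag as the main obstacle is not one, and the projection onto $N=0$ is unnecessary; unboundedness plays no role there. You never need to truncate $U$. Because $U$ has spread $s$, $U^\dagger e^{i\theta N_R}U$ is already supported in $R^{+s}$. It therefore acts as the identity on rotors beyond $R^{+s}$, and those rotors can simply be dropped.

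For the rotors deep in $R$, with $N_{R^c}=\sum_{v\notin R}N_v$, write
$U^\dagger e^{i\theta N_R}U=\big(U_e(\theta)\otimes e^{i\theta\sum_v N_v}\big)\,U^\dagger e^{-i\theta N_{R^c}}U$.
This shows that the rotor rotations deep in $R$ split off as a commuting tensor factor that is itself a representation. Removing that factor leaves a genuine commuting $U(1)_e\times U(1)_m$ action with ancillae only near the interface.
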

\begin{proof}
We recall the construction of the $U(1)_V \times U(1)_A$ symmetry in a chain of 1-periodic rotors $\phi_j \in \mathbb{R}/\mathbb{Z}$
with conjugate momentum $N_j = -\frac{i}{2\pi}\frac{d}{d\phi_j}$ \cite{fidkowski2025noninvertiblebosonicchiralsymmetry,thorngren2026chirallatticegaugetheories}. Let $[x]$ denote the nearest integer function $[x]=\lfloor x+\frac{1}{2}\rfloor$, where $\lfloor...\rfloor$ is the floor function.

Consider the topological unitary operator which measures the winding number of $\phi_j$
\[U_A(\alpha) = e^{ i \alpha \sum_j (\phi_{j+1}-\phi_j-[\phi_{j+1}-\phi_j])}\quad.\]
The local quantity in the sum is invariant under integer shifts $\phi_j \mapsto \phi_j + n_j$, so each local quantity also defines a unitary operator. They are commuting for different $j$ so this can be regarded as a depth-2 circuit for all $\alpha$. For $\alpha = 2\pi$, the integer part $[\phi_{j+1}-\phi_j]$ drops out and the rest of the terms become individually shift invariant and telescope to give $U_A(2\pi)=1$. Furthermore, $U_A(\alpha)$ commutes with
\[U_V(\beta) = e^{ i \beta \sum_j N_j}\quad,\]
which does a constant $\beta$ shift on the variable $\phi_j \mapsto \phi_j + \frac{\beta}{2\pi}$ and is also $2\pi$ periodic. With open boundaries, these unitaries can be observed to pump unit charge for one another. This is most easily seen by truncating $U_A^I(2\pi)=e^{-2\pi i  \phi_L}e^{2\pi i \phi_R}$ to a region $I=[L,R]$, where the endpoints carry integer $U_V(\beta)$ charge.

We are interested in a $\mathbb{Z}_2$-gauged version of this. Introduce qubits $X_{j +1/2}$, $Z_{j+1/2}$ at half-integer sites. We will think of $a_{j+1/2} = \frac12 (1-Z_{j+1/2})$ as a $\mathbb{Z}_2$ gauge field. Consider
\[\tilde U_A(\alpha) = e^{ i \alpha \sum_j \big( \phi_{j+1}-\phi_j+ \frac12 a_{j+1/2}-[\phi_{j+1}-\phi_j+\frac12 a_{j+1/2}]\big)},\label{eq:UAtilde}\]
The local terms are invariant under integer shifts of $\phi_j$ as above so this is a circuit. Furthermore, under a ``gauge transformation''
\[\phi_j \mapsto \phi_j+\frac12 \\
Z_{j\pm 1/2} \mapsto - Z_{j \pm 1/2}\]
$a_{j+1/2}-\phi_j$ either shifts by 0 or 1 and cancels the shift of the term in brackets. Thus $\tilde U_A(\alpha)$ is gauge invariant. Furthermore,
\[\tilde U_A(2\pi) = e^{ i \pi \sum_j a_{j+1/2}} = \prod_j Z_{j+1/2}\quad,\]
and $\tilde U_A(\alpha)$ commutes with $U_V(\beta)$ as defined above.

\begin{figure}[h]
    \centering
    \includegraphics[width=0.25\linewidth]{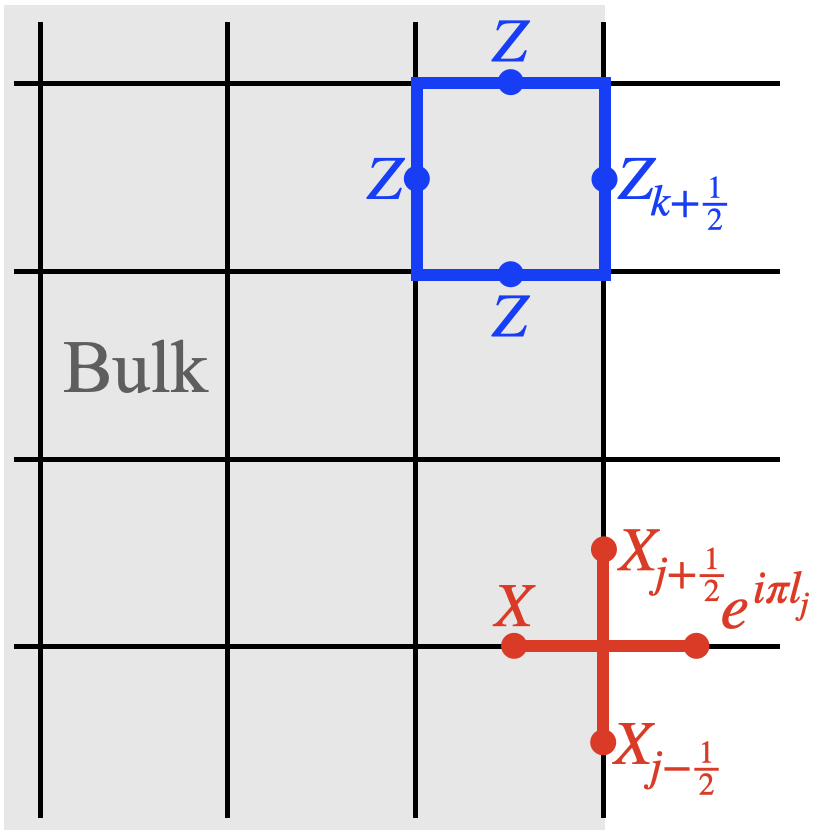}
    \caption{Rough boundary of the toric code with modified star operator (red) where the vertical links are embedded into rotors.}
    \label{fig:blendopsTC}
\end{figure}

Consider toric code with a boundary, as depicted above, where we label the horizontal (vertical) boundary edges by integers $j$ and the parallel boundary edges by half-integers $j+1/2$. Replace the qubits on the horizontal boundary edges with rotors, and modify the star terms there by replacing $X_j$ with $(-1)^{N_j}$ which acts as a $1/2$-shift $\phi_j \mapsto \phi_j +\frac{1}{2}$. The modified star terms are shown in red in Fig.~\ref{fig:blendopsTC}. These star terms act precisely as the gauge transformations below, so $\tilde U_A(\alpha)$ commutes with each of them. $\tilde U_A(\alpha)$ also commutes with the plaquette terms because those only involve $Z_{j+1/2}$. $U_V(\beta)$ also commutes with both the star and plaquette terms since it only involves $N_j$. Therefore, we may consider
\[\tilde{U}_e(\theta) = e^{i\theta \sum_v \frac{1}{4}(1-A_v)} U_V(\theta/2) \\
\tilde{U}_m(\theta) = e^{i\theta \sum_p \frac{1}{4}(1-B_p)} \tilde U_A(\theta)\]
where $A_v$ are understood to be the star terms which fit in the rough lattice, including the modification from $X_j$ to $(-1)^{N_j}$ along the perpendicular boundary spins, and $B_p$ are all the plaquette terms which fit in the rough lattice. Notice that these modified symmetry operators commute and are properly quantized. By construction, this is a blend of $U(1)_e \times U(1)_m$.

This blend can be equivalently formulated in the modified Villain Hamiltonian formalism \cite{Cheng:2022sgb,Fazza:2022fss,Seifnashri:2026ema,Lu:2026jnq}, which were first introduced as Euclidean lattice models \cite{Gross:1990ub,Sulejmanpasic:2019ytl,Gorantla:2021svj}.  
In this setting, $\phi_j$ is $\mathbb{R}$-valued, and we denote its conjugate momentum by $p_j$ (which unlike $N_j$ is not quantized). They obey $[\phi_j , p_{j'} ] = i\delta_{j,j'}$.
We also introduce a $\mathbb{Z}$-valued gauge field $w_{j+1/2}$ on every link, together with its conjugate momentum $\tilde\phi_{j+1/2}$, satisfying $[w_{j+1/2},\tilde\phi_{j'+1/2}]= - i \delta_{j,j'}$. 
The Villain gauge field satisfies the local constraint $\exp(2\pi i w_{j+1/2})=1$. 
Next, we introduce the Gauss law constraint $\exp(2\pi i p_j +i \tilde\phi_{j-1/2} - i\tilde \phi_{j+1/2})=1$ to make $\phi_j$ effectively $2\pi$-periodic. 
The two $U(1)$ unitary operators are then $U_A(\alpha) = \exp(i \alpha \sum_j w_{j+1/2})$ and $U_V(\beta) =  \exp(i \beta \sum_j p_j)$.

We add these Villain bosons along the rough boundary, and keep the qubits $X_j,Z_j$ on the horizontal links as they are. 
We couple the qubits near the rough boundary to the Villain bosons by modifying the two constraints to:
\[
\exp(i\pi p_j + \frac i2 \tilde\phi_{j-1/2} -\frac i2 \tilde\phi_{j+1/2}) = X_j,\\
\exp(2\pi i w_{j+1/2}) = Z_j Z_{j+\frac 12} Z_{j+1}.
\]
(A fermionic version of this modification were recently discussed in Refs.~\cite{Lu:2026itw,Dharanikota:2026vik}.)
The blend of $U(1)_e\times U(1)_m$ are now:
\[
\tilde{U}_e(\theta) = \exp(i\theta \sum_v \frac{1}{4}(1-A_v) +i{\theta \over 2}\sum_j p_j), \\
\tilde{U}_m(\theta) = \exp(
i\theta \sum_p \frac{1}{4}(1-B_p)+i\theta \sum_jw_{j+1/2}),
\]
satisfying $\tilde U_e(2\pi)=1$ and $\tilde U_m(2\pi)=1$.
\end{proof}
Note that the above convention for a rough boundary condition can be dualized to give a blend in the presence of a smooth boundary.

\end{document}